\title{Optimal algorithms for universal random number generation
from finite memory sources}
\author{Gadiel~Seroussi,~\IEEEmembership{Fellow,~IEEE,}
and~Marcelo~J.\ Weinberger~\IEEEmembership{Fellow,~IEEE}
\thanks{Gadiel~Seroussi is with Universidad de la Rep\'{u}blica, Montevideo,
Uruguay (e-mail: gseroussi@ieee.org).}
\thanks{Marcelo~J.\ Weinberger is with Center for the Science of Information,
Stanford University, Stanford, CA 94305 (e-mail: marcwein@ieee.org).}
}
\newenvironment{proof}{\begin{IEEEproof}}{\end{IEEEproof}}%
\renewenvironment{proof}{\begin{IEEEproof}}{\end{IEEEproof}}%
\newtheorem{theorem}{Theorem}
\newtheorem{fact}{Fact}
\newtheorem{lemma}{Lemma}
\newtheorem{corollary}{Corollary}
\newtheorem{@example}{Example}
\newenvironment{example}{\begin{@example}\upshape}{\hfill$\Box$\end{@example}}
\newtheorem{PROCEDURE}{Procedure}
\newtheorem{REMARK}{\indent
Remark}
\newenvironment{remark}{\begin{REMARK}\rm}{\rm\end{REMARK}}
\newcommand{\hruleproc}[1][\columnwidth]{\rule{#1}{1pt}}
\newenvironment{figprocedure}[3][0.9\linewidth]%
{
\begin{center}\begin{minipage}{#1}\small
\hruleproc\\ 
\makebox[3.5em][r]{\textbf{Input:}}
 #2\\
\makebox[3.5em][r]{\textbf{Output:}}
 #3\\ [-1ex]
\hruleproc

\begin{enumerate}
}
{
\end{enumerate}

\vspace{-1ex}
\hruleproc
\end{minipage}\end{center}%
}
\newenvironment{figprocedurenoIO}[1][0.9\linewidth]
{
\begin{center}\begin{minipage}{#1}\small
\hruleproc
\begin{enumerate}%
}
{
\end{enumerate}
\vspace{-1ex}
\hruleproc
\end{minipage}\end{center}%
}
\newcommand{\vartofixlen}{variable to fixed-length}
\newcommand{\fixtovarlen}{fixed to variable-length}
\newcommand{\RNG}{RNG}
\newcommand{\RNGs}{RNGs}
\newcommand{\VFR}{VFR}
\newcommand{\VFRs}{VFRs}
\newcommand{\TVFR}{TVFR}
\newcommand{\TVFRs}{TVFRs}
\newcommand{\FVR}{FVR}
\newcommand{\FVRs}{FVRs}
\newcommand{\dense}{dense}
\newcommand{\truncatedD}{truncated}
\newcommand{\alp}{\mathcal{A}}
\newcommand{\setM}[1][M]{[{#1}]}
\newcommand{\fixsize}{M}
\newcommand{\fixalp}{\setM[\fixsize]}
\newcommand{\dict}{\mathcal{D}}
\newcommand{\idict}{\mathbf{I}_{\mathcal{D}}}
\newcommand{\idictn}{\mathbf{I}_{\mathcal{D}_N
 \cup \failn}}
\newcommand{\fail}{\mathcal{E}}
\newcommand{\chunk}{\mathcal{G}}
\newcommand{\dictn}[1][N]{\dict_{#1}}
\newcommand{\failn}[1][N]{\fail_{#1}}
\newcommand{\xs}{x^\ast}
\newcommand{\Xs}{X^\ast}
\newcommand{\dicts}{\dict^{\ast}}
\newcommand{\fails}{\fail^{\ast}}
\newcommand{\dictns}{\dictn^{\ast}}
\newcommand{\failns}{\failn^{\ast}}
\newcommand{\dictld}{\tilde{\dict}}
\newcommand{\dictlds}{\dictld^\ast}
\newcommand{\dictldi}[1][i]{\Delta_{#1}}
\newcommand{\chunki}[1][i]{\chunk_{#1}}
\newcommand{\astast}{\ast\!\ast}
\newcommand{\fvr}[1][n]{\mathcal{F}_{#1}}
\newcommand{\fvrs}[1][n]{\mathcal{F}^{\ast}_{#1}}
\newcommand{\fvrss}[1][n]{\mathcal{F}^{\astast}_{#1}}
\newcommand{\Mmap}{\mathcal{M}}
\newcommand{\Mmaps}{\Mmap^{\ast}}
\newcommand{\Mmapss}{\Mmap^{\astast}}
\newcommand{\Rmap}{\rho}
\newcommand{\Rmaps}{\Rmap^{\ast}}
\newcommand{\Rmapss}{\Rmap^{\astast}}
\newcommand{\rMinvx}{\chi}
\newcommand{\probM}{\gamma}
\newcommand{\vfr}{\mathcal{V}}
\newcommand{\vfrn}[1][N]{\vfr_{#1}}
\newcommand{\vfrns}[1][N]{\vfr^{\ast}_{#1}}
\newcommand{\vfrs}{\vfr^{\ast}}
\newcommand{\xmap}{\Phi}
\newcommand{\xmapst}{\xmap^{\ast}}
\newcommand{\MM}{\mu}
\newcommand{\notilde}[1]{\bar{#1}}
\newcommand{\tildep}{\bar{p}}
\newcommand{\Pclass}{\mathcal{P}}
\newcommand{\Pclassk}[1][k]{\Pclass_{#1}}
\newcommand{\Pbldp}{P_{\bldp}}
\newcommand{\Exp}{\mathbf{E}}
\newcommand{\tree}{\mathbf{T}}
\newcommand{\treed}{\tree_\dict}
\newcommand{\pDomain}{\Psi}
\newcommand{\len}{L}
\newcommand{\lenp}[1][P]{\len_{#1}}
\newcommand{\lenpn}[1][P]{\lenp[#1]}
\newcommand{\lenpm}[1][P]{\lenp[#1]}
\newcommand{\llen}{\Lambda}
\newcommand{\llenss}{\llen^{\astast}}
\newcommand{\Hrate}{\bar{H}_P(X)}
\newcommand{\HrateB}{\bar{H}}
\newcommand{\Hdict}[1][P]{H_{#1}(\dict)}
\newcommand{\Hdictn}[1][P]{H_{#1}(\dictn)}
\newcommand{\Hdic}[1][P]{H_{#1}}
\newcommand{\bldp}{\mathbf{p}}
\newcommand{\bldq}{\mathbf{q}}
\newcommand{\rnd}{r}
\newcommand{\jT}{j_{\typet}}
\newcommand{\nnints}{\mathbb{N}}
\newcommand{\nonnegints}{\nnints}
\newcommand{\posints}{\mathbb{N}^{+}}
\newcommand{\target}{\mathbb{{N}}_{\textrm{t}}}
\newcommand{\gtarg}[2][\target]{\left\lfloor{#2}\right\rfloor_{#1}}
\newcommand{\typet}{T}
\newcommand{\type}[1][{x^n}]{T({#1})}
\newcommand{\typek}[2][{x^n}]{T^{(#2)}({#1})}
\newcommand{\types}[1][n]{\mathcal{T}_{#1}}
\newcommand{\typesk}[2][n]{\mathcal{T}_{#1}^{(#2)}}
\newcommand{\typext}[2][\typet]{S^{+}_{#2}(#1)}
\newcommand{\typecut}[2][\typet]{S^{-}_{#2}(#1)}
\newcommand{\typecutnot}[2][\typet]{S_{#2}(#1)}
\newcommand{\tcounts}{\mathbf{n}}
\newcommand{\ns}[1][s]{n_{#1}}
\newcommand{\nsa}[1][a]{\ns^{(#1)}}
\newcommand{\pprefix}{\prec}
\newcommand{\remainltr}{\fail}
\newcommand{\indx}{\mathcal{I}}
\newcommand{\medrule}[1][0.9em]{\rule{0pt}{#1}}
\newcommand{\bigbar}[1][0.9em]{\left|\medrule[#1]\right.}
\newcommand{\defined}{\triangleq}
\newcommand{\stacked}[3][1]{\genfrac{}{}{0pt}{#1}{#2}{#3}}
\newcommand{\Pstat}{P^{\rm{st}}}
\newcommand{\Pover}{P_{\rm{\!o/e}}}
\newcommand{\Punder}{P_{\rm{\!u/e}}}
\newcommand{\calU}{{\mathcal
 U}}
\newcommand{\tufvr}{\mathcal{F}^{\rm{\scriptscriptstyle{(TU)}}}_n}
\newcommand{\Mmaptu}{\Mmap^{\rm{\scriptscriptstyle{(TU)}}}}
\newcommand{\Rmaptu}{\Rmap^{\rm{\scriptscriptstyle{(TU)}}}}
\newcommand{\complete}{full}
\newcommand{\completeness}{fullness}
\newcommand{\full}{complete}
\newcommand{\balanced}{balanced}
\newcommand{\biglbracket}[2][1em]{\left#2\rule{0em}{#1}\right.}
\newcommand{\bigrbracket}[2][1em]{\left.\rule{0em}{#1}\right#2}
\begin{document}

\maketitle
\thispagestyle{empty} 

\begin{abstract}
We study random number generators (\RNGs), both in the \fixtovarlen\ (\FVR)
and the \vartofixlen\ (VFR) regimes, in a universal setting in which the
input is a finite memory source of arbitrary order and unknown parameters,
with arbitrary input and output (finite) alphabet sizes. Applying the
method of types, we characterize essentially unique optimal universal
\RNGs\ that maximize the expected output (respectively, minimize the
expected input) length in the \FVR\ (respectively, \VFR) case. For the
\FVR\ case, the \RNG\ studied is a generalization of Elias's scheme, while
in the \VFR\ case the general scheme is new. We precisely characterize, up
to an additive constant, the corresponding expected lengths, which include
second-order terms similar to those encountered in universal data
compression and universal simulation. Furthermore, in the \FVR\ case, we
consider also a ``twice-universal'' setting, in which the Markov order $k$
of the input source is also unknown.
\end{abstract}

\section{Introduction}\label{sec:intro}
Procedures for transforming non-uniform random sources into uniform
(``perfectly random'') ones have been a subject of great interest in
statistics, information theory, and computer science for decades, going back
to at least~\cite{VonNeumann51}. For the purposes of this paper, a
\emph{random number generator} (\RNG) is a deterministic procedure that
takes, as input, samples from a random process over a finite alphabet $\alp$,
and generates, as output, an integer $\rnd$ that is uniformly distributed in
some range $0\le \rnd < M$ (where $M$ may also be a random variable,
depending on the setting, in which case uniformity is conditioned on $M$). If
$M=p^m$, the output $\rnd$ can be regarded as the outcome of $m$ independent
tosses of a \emph{fair $p$-sided coin} (or \emph{die}); when $p=2$, it is
often said, loosely, that the RNG generates $m$ \emph{random bits}.

Various assumptions on the nature of the input process, what is known about
it, and how the samples are accessed, give raise to different settings for
the problem. Regarding the sample access regime, we are interested in two
such settings. In the \emph{\fixtovarlen} \RNG\ (in short, \FVR) setting, it
is assumed that the input consists of a fixed-length prefix $X^n$ of a sample
of the random process, and the size $M$ of the output space depends on $X^n$.
The \emph{efficiency} of the \FVR\ scheme is measured by the expectation,
with respect to the input process, of $\log M$,%
\footnote{Logarithms are to base $2$, unless specified otherwise.} %
the \emph{output length}, which we seek to \emph{maximize}. In other words,
the goal is to obtain as much output ``randomness'' as possible for the fixed
length of input consumed. In the \emph{\vartofixlen} \RNG\ (in short, \VFR)
setting, the size $M$ of the output space is fixed, but the length $n$ of the
input sample is a random variable. The set of input sequences for which such
a scheme produces an output is referred to as a \emph{stopping set} or
\emph{dictionary}. In this case, efficiency is measured by the expectation of
$n$, the \emph{input length}, which we seek to \emph{minimize}. The goal is
to consume as few samples of the input process as possible to produce a
pre-determined amount of ``randomness.''

An \RNG\ is said to be \emph{universal} in a class of processes $\Pclass$ if,
conditioned on the size $M$ of the output range, it produces a uniformly
distributed output for any process in the class (conditioning on $M$ is
trivial in the \VFR\ case). We are interested in universal \RNGs\ that
optimize efficiency (i.e., maximize expected output length in the \FVR\ case,
or minimize expected input length in the \VFR\ case), simultaneously for all
processes $P\in\Pclass$. \FVRs\ have been studied extensively, in various
universal (e.g.~\cite{Elias72,Peres92}) and non-universal
(e.g.,~\cite{VembuVerdu95}) settings.
\VFRs\ were studied in~\cite{VonNeumann51},%
\footnote{The scheme in [1] can also be interpreted as an \FVR, since it
outputs at most one random bit per pair of input symbols. This is the point
of view taken, e.g., in~\cite{Peres92}. In general, \VFRs\ can be used to
construct \FVRs, and vice versa. However, the resulting constructed \RNGs\
will be generally less efficient than if they were optimized for the intended
regime from the start.
} %
\cite{Hoeffding-Simons70,Stout-Warren84} (with emphasis on the case $M=2$ and
Bernoulli sources), and more recently, in more generality, in~\cite{Roche91,
HanHoshi97, Zhou-Bruck-arch}.

Although, in principle, the input length of a \VFR\ is unbounded, we are also
interested in \emph{truncated \VFRs} (\TVFRs). A \TVFR\ either produces a
uniformly distributed output on an input of length $n \le N$, for some fixed
$N$,  or it \emph{fails} (producing no output). We require \VFRs\ to produce
uniform outputs, while admitting some failure probability, \emph{at all
truncation levels} $N$. In that sense, our notion of a universal \VFR\ is
stricter than the one in the earlier literature
(cf.~\cite{VonNeumann51,Hoeffding-Simons70,Stout-Warren84,Roche91,HanHoshi97}),
where generally no conditions are posed on the truncated \VFRs. The stricter
notion may be useful in practical applications, where there is likely to be
some prior knowledge or minimal requirements on the statistics of the source
(e.g., some assurance of minimal ``randomness''). If the \VFR\ has still not
produced an output after consuming an length of input for which this prior
assumption implies that (with high probability) it should have, the whole
system function may be considered suspect (for example, somebody might have
maliciously impaired the random source). With the stricter definition, the
input length threshold can be set arbitrarily, while preserving perfect
uniformity of the \VFR\ output. Although this may seem too fine-grained a
restriction, it will turn out that the penalty it imposes on the expected
input length of the optimal \VFR\ is negligible relative to the main
asymptotic term.

It is well known that in both settings of interest, the entropy rate
$\HrateB$ of the source determines fundamental limits on the asymptotic
performance of the \RNGs, namely, an upper bound on the expected output
length in the \FVR\ case ($n\HrateB$; see, e.g.,~\cite{VembuVerdu95}), and a
lower bound on the expected input length in the \VFR\ case ($(\log
M)/\HrateB$; see, e.g.,~\cite{Roche91} for i.i.d.\ sources). Furthermore, for
various cases of interest, these bounds have been shown to be asymptotically
tight to their main term. We are interested in a more precise
characterization of the performance of optimal algorithms, including the rate
of convergence to the fundamental limits. As in other problems in information
theory, this rate of convergence will depend on the class of processes
$\Pclass$ in which universality is considered. In this paper, we focus on the
class $\Pclassk$ of $k$-th order \emph{finite memory (Markov)} processes,
where the order $k$ is arbitrary.

In the \FVR\ case, we extend the notion of universality, and say that an
\FVR\ is \emph{twice-universal} in the class $\Pclass = \bigcup_k \Pclassk$
of \emph{all} finite memory processes, with both $k$ and the process
parameters unknown, if its output approaches a uniform distribution for every
process in the class (the formal definition of distribution proximity is
provided in Section~\ref{sec:f2v}). We seek twice-universal \FVRs\ that
attain essentially the same efficiency as universal \FVRs\ designed with
knowledge of the order $k$. The relaxation of the uniformity requirement is
necessary to satisfy this strict efficiency goal, as will follow from the
characterization of universal \FVRs\ in Section~\ref{sec:f2v}. To keep the
paper to a reasonable length, we omit the study of twice-universality in the
\VFR\ case, in which the study of the basic universal setting is already
rather intricate.

The contributions of the paper can be summarized as follows. In
Section~\ref{sec:f2v}, we first review results on universal \FVRs.
In~\cite{Elias72}, Elias presented a universal \FVR\ procedure for Bernoulli
processes and binary outputs (i.e., in our notation, the class $\Pclassk[0]$
with $|\alp|=2$ and $M$ a power of $2$). The procedure is optimal in expected
output length, pointwise for every input block size $n$ and every Bernoulli
process. An efficient implementation is described
in~\cite{Ryabko-Matchikina00}, and a generalization to first-order
processes for any $\alp$ is presented in~\cite{Zhou-Bruck12}.%
\footnote{The class studied in~\cite{Zhou-Bruck12} includes the class of
$k$-th order Markov processes, if we interpret a process of order $k$ over
$\alp$ as a process of order $1$ over $\alp^k$. However, the knowledge that
only $|\alp|^{k+1}$ state transition probabilities are nonzero (out of
$|\alp|^{2k}$ in the generic case of this interpretation) can be used with
obvious complexity advantages in the management of transition counts. In
addition, this knowledge is necessary for the precise asymptotics derived in
this paper.
} %
Although not explicitly employing the terminology, these schemes can be seen
as implicitly relying on basic properties of \emph{type
classes}~\cite{Csiszar-Korner,csiszar98method}. We show that Elias's
procedure, when studied explicitly in the more general context of the
\emph{method of types}~\cite{Csiszar-Korner,csiszar98method}, is applicable,
almost verbatim and with a uniform treatment, to broader classes of
processes, and, in particular, to the class $\Pclassk$ for any value of $k$
and any finite alphabet $\alp$, while retaining its universality and
pointwise optimality properties. We precisely characterize, up to an additive
constant, the expected output length of the procedure in the more general
setting. The estimate shows that, similarly to ``model cost'' terms in
universal compression and universal simulation schemes, \FVRs\ exhibit a
second-order term (a term keeping the expected length below the entropy of
the input) of the form $\frac{K}{2}\log n + O(1)$, where $K =
|\alp|^k(|\alp|-1)$ is the number of free statistical parameters in the model
class $\Pclassk$. However, somewhat surprisingly, we observe that this term
is incurred for almost all processes $P$ in the class, even if the \FVR\ is
designed to produce uniform outputs only for $P$. Thus, in the case of \FVRs,
the second-order term is not necessarily interpreted as a ``cost of
universality,'' as is the case in the other mentioned problems. After
reviewing universal \FVRs, in Subsection~\ref{sec:twiceuniversal} we present
a twice-universal \FVR, also inspired on Elias's scheme, but based on the
partition, presented in~\cite{MartinMSW10}, of the space $\alp^n$ into
classes that generalize the notion of type class for the twice-universal
setting. We show that the expected output length of the twice-universal \FVR\
is the same, up to an additive constant, as that of a universal \FVR\
designed for a known order $k$, with the (unnormalized) distance of the
output to a uniform distribution vanishing exponentially fast with the input
length.

We then shift our attention, in Section~\ref{sec:universal VFRs}, to \VFRs\
that are universal in $\Pclassk$, for arbitrary $k$. After formally defining
the setting and objectives, in Subsection~\ref{sec:optimal} we characterize
the essentially unique optimal \VFR, which minimizes both the expected input
length and the failure probability, pointwise for all values of $M$ and at
all truncation levels $N$ (and, thus, also asymptotically). This \VFR\
appears to be, in its most general setting, new, and it can be regarded as an
analogue to Elias's scheme, but for the \vartofixlen\ regime. We precisely
characterize, up to an additive constant, the expected input length of this
optimal \VFR, and show that, as its \FVR\ counterpart, it also exhibits a
second order term akin to a ``model cost.'' In the \VFR\ case, the term is
proportional to $\log\log M$ (the logarithm of the output length), and again
to the number of free statistical parameters of the input model class. We
also show that the failure probability of the optimal \VFR\ decays
exponentially fast with the truncation level $N$. In addition, we show that
the scheme admits an efficient sequential implementation, which is presented
in Subsection~\ref{sec:sequential}. We note that the optimal \VFR\ coincides
with the optimal ``even'' procedure of~\cite{Hoeffding-Simons70} for
Bernoulli processes and $M=2$. A universal \VFR\ that, although sub-optimal
for all $N$, asymptotically attains the entropy limit for Bernoulli processes
and $M=2^m$ was previously described in~\cite{Zhou-Bruck-arch} (without an
analysis of the second order term of interest here). The dictionary
of~\cite{Zhou-Bruck-arch} is a special case of an auxiliary construct we use,
in Subsection~\ref{sec:perbounds}, in the derivation of an upper bound on the
expected length of the optimal VFR.  The optimal scheme itself, however, is
not derived from this construct.

In the computer science literature, \FVRs\ are referred to as
\emph{randomness extractors} (RE), and a deep and extensive theory has
developed, with connections to other fundamental areas of computation (see,
e.g.,~\cite{Shaltiel11} and references therein). We note that the approach we
take to the problem (which follows traditional information-theoretic and
statistical methodology) and the approach taken in the RE literature are
rather different. The results obtained in the RE literature are very powerful
in the sense that they make very few assumptions on the nature of the
imperfect random source, other than a guarantee on the minimum information
content of an input sample, namely, a lower bound on $\min_{x^n}\{-\log
P(x^n)\}$, referred to in the literature as the \emph{min-entropy} of the
source. Results on expected output length are then expressed in terms of this
bound, and are usually tight up to multiplicative constants. By focusing on
specific classes of input sources, on the other hand, we are able to obtain
tighter results (with performance characterized up to additive constants),
and, from a practical point of view, schemes where statistical assumptions
can be reasonably tested (e.g., entropies can be estimated), which does not
appear to be the case with the very broad assumptions employed in RE theory.

\section{Definitions and preliminaries}

\subsection{Basic objects}
\label{sec:strings and trees}

Let $\alp$ be a finite alphabet of size $\alpha = |\alp|$. We denote a
sequence (or string) $x_i x_{i+1}\ldots x_j$ over $\alp$ by $x_i^j$, with
$x_1^j$ also denoted $x^j$, and $x_i^j = \lambda$, the empty string, whenever
$i>j$. As is customary, we denote by $\alp^n$ and $\alp^\ast$, respectively,
the set of strings of length $n$, and the set of all finite strings over
$\alp$, and we let $\xs\in\alp^\ast$ denote a generic string of unspecified
length. For an integer $M> 0$, we let $\setM = \{0,1,\ldots,M-1\}$, and for
integers $t$ and $u$, we write $t \equiv u \pmod{M}$ if $M | (u-t)$, and $t =
u \bmod M$ if $t\equiv u \pmod{M}$ and $0 \le t < M$. We denote by
$\nonnegints$ and $\posints$ the sets of nonnegative and positive integers,
respectively.

An $\alpha$-ary \emph{dictionary} $\dict$ is a (possibly infinite)
prefix-free subset of $\alp^\ast$. We say that $\dict$ is \emph{\complete\/}
if and only if every string $x^n\in\alp^n$ either has a prefix in $\dict$, or
is a prefix of a string in $\dict$.  Naturally associated with a dictionary
$\dict$ is a rooted $\alpha$-ary \emph{tree} $\treed$, whose nodes are in
one-to-one correspondence with prefixes of strings in $\dict$. The leaves of
$\treed$ correspond to the elements of $\dict$, and, for $u\in\alp^\ast$ and
$a\in\alp$ such that $ua$ is a node of $\treed$, $\treed$ has a branch
$(u,ua)$ labeled with the symbol $a$. We identify nodes with their associated
strings, and say that $u$ is the \emph{parent} of $ua$, or, conversely, that
$ua$ is a \emph{child} of $u$. Moreover, we sometimes regard a tree as a set
of sequences and say, e.g., that $\dict$ \emph{is} the set of leaves of
$\treed$. We note that the definition of a \complete\ dictionary is
consistent with the usual definition of a \emph{\complete\ tree} (sometimes
also referred to as a \emph{\full\ tree}). Clearly, $\dict$ is \complete\ if
and only if every node of $\treed$ is either a leaf, or the parent of
$\alpha$ children.\footnote{This condition is generally insufficient for
\completeness\ of infinite trees that do not derive from dictionaries.}
In the sequel, all dictionaries (and corresponding trees) are assumed to be
\complete. Notice that a \complete\ infinite tree need not satisfy the Kraft
inequality with equality~\cite{Linder-Tarokh-Zeger97}. We will observe in
Section~\ref{sec:universal VFRs}, however, that if $\treed$ does not have a
Kraft sum of one, then $\dict$ is rather useless for the construction of an
efficient \VFR. The set of internal nodes of $\treed$ is denoted $\idict$. A
finite tree is \emph{\balanced\/} if it has $\alpha^\ell$ leaves at its
deepest level $\ell$.

\subsection{Finite memory processes and type classes}
A $k$-th order \emph{finite memory (Markov)} process $P$ over the alphabet
$\alp$ is defined by a set of $\alpha^k$ conditional probability mass
functions $p(\,\cdot\,|s):\alp\to[0,1]$, $\,s\in\alpha^k$, where $p(a|s)$
denotes the probability of the process emitting $a$ immediately after having
emitted the $k$-tuple $s$. The latter is referred to as a $\emph{state}$ of
the process, and we assume for simplicity a fixed initial state $s_0$.

Let $b\in\alp$ be a fixed arbitrary symbol. We denote by $\bldp$ the
parameter vector
$\bldp=\left[\,p(a|s)\,\right]_{a\in\alp\setminus\{b\},s\in\alp^k}$, and by
$\pDomain$ its domain of definition. To simplify the statement of some
arguments and results, we further assume that all conditional probabilities
$p(a|s)$ (including $p(b|s)$) are nonzero, i.e., we take $\pDomain$ as an
open
set by excluding its boundary.%
\footnote{
In fact, our results only require that $\pDomain$ be a positive volume subset
of the $\alpha^k(\alpha-1)$-dimensional simplex. The additional requirement
of excluding the boundary guarantees the validity of our asymptotic
expansions for every parameter in $\pDomain$.
} %
The dimension of $\bldp$ is equal to the number of free statistical
parameters in the specification of the $k$-th order process $P$, namely, $K
\defined (\alpha-1)\alpha^k$.

The probability assigned by $P$ to a sequence $x^n = x_1 x_2\dotsc x_n$ over
$\alp$ is
\begin{equation}\label{eq:P}
P(x^n) = \prod_{t=1}^n p(x_t | x_{t-k}^{t-1}),
\end{equation}
where we assume that $x_{-k+1}^0 = s_0$. In cases where we need to consider a
different initial state $s$, we denote the probability $P(x^n|s)$. The
entropy rate of $P \in \Pclassk$ is denoted $\Hrate$ and is given by
\begin{equation}\label{eq:Hrate}
\Hrate = \sum_{s \in \alp^k} \Pstat (s) H_P(X|s)
\end{equation}
where, for a state $s$, $\Pstat(s)$ denotes its stationary probability
(which, by our assumptions on $\Pclassk$, is well defined and nonzero), and
$H_P(X|s)$ denotes its conditional entropy, given by $-\sum_{a \in \alp}
p(a|s) \log{p(a|s)}$.

The \emph{type class} of $x^n$ with respect to the family $\Pclassk$ of all
$k$-th order finite memory processes is defined as the set
\begin{equation}\label{eq:type_prob}
\type = \{\, y^n \in \alp^n \,|\, P(x^n) =
P(y^n)\;\;\forall P \in \Pclassk\,\,\}.
\end{equation}
Let $\nsa(x^n)$ denote the number of occurrences of $a$ following $s$ in
$x^n$, i.e.,
\[
\nsa(x^n) = \left|\{\,t\,\bigbar\,x_{t-k}^{t} = sa,
\; 1\le t \le n\,\}\right|,\;a\in\alp,\,s\in\alp^k,
\]
and $n_s (x^n) \defined \sum_{b \in \alp} \nsa[b](x^n)$. Denote by
$\tcounts(x^n)$ the vector of $\alpha^{k+1}$ integers $\nsa(x^n)$ ordered
according to some fixed convention. It has been well established that the
definition~(\ref{eq:type_prob}) is equivalent to the combinatorial
characterization
\[
\type = \left\{\, y^n \in \alp^n \;\bigbar\;
\tcounts(x^n) = \tcounts(y^n)\,\right\}\,.
\]
The vector $\tcounts(x^n)$ is referred to as the \emph{type} of $x^n$.

The set of all type classes for sequences of length $n$ is denoted $\types$,
i.e.,
\[
\types = \left\{\,\type\;\bigbar\; x^n\in \alp^n\,\right\}\,.
\]

The following fact about type classes $\typet{\,\in\,}\types$ is well known.
\begin{fact}\label{fact:same_sn}
All sequences in $\typet$ share the same final state, i.e., for some fixed
string $u^k\in\alp^k$, we have $x_{n-k+1}^n = u^k$ for all $x^n\in\typet$.
\end{fact}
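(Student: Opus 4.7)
The approach is to view each sequence $x^n \in \alp^n$ (with the fixed initial state $s_0$) as a walk on the state graph of $\Pclassk$, and to recover the final state $x_{n-k+1}^n$ from the edge-traversal counts $\tcounts(x^n)$ by a flow-conservation argument. Since the combinatorial characterization recalled just before the fact tells us that $y^n \in \type$ iff $\tcounts(y^n) = \tcounts(x^n)$, it suffices to show that $\tcounts(x^n)$ (together with $s_0$) determines the final state.

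The case $k = 0$ is vacuous, so I would assume $k \ge 1$. For each state $v = v_1 v_2 \cdots v_k \in \alp^k$, I would define
\[
d_{\mathrm{out}}(v) \;\defined\; \sum_{a \in \alp} n_v^{(a)}(x^n), \qquad d_{\mathrm{in}}(v) \;\defined\; \sum_{c \in \alp} n_{c v_1 \cdots v_{k-1}}^{(v_k)}(x^n),
\]
so that $d_{\mathrm{out}}(v)$ counts the time indices at which $v$ occurs as the ``current'' state, $|\{t : 1\le t\le n,\; x_{t-k}^{t-1}=v\}|$, and $d_{\mathrm{in}}(v)$ counts the time indices at which the walk transitions \emph{into} $v$ through one of the $\alpha$ edges ending at $v$. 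Both quantities are manifestly functions of $\tcounts(x^n)$ alone.

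The crux is the flow-conservation identity
\[
d_{\mathrm{in}}(v) - d_{\mathrm{out}}(v) \;=\; \mathbf{1}\{v = x_{n-k+1}^n\} - \mathbf{1}\{v = s_0\},
\]
which I would prove by enumerating the state sequence $s_t \defined x_{t-k+1}^t$ for $t=0,1,\ldots,n$ (with indices $\le 0$ interpreted through the fixed initial state $s_0$) and pairing each intermediate visit to $v$ with one unit in each of $d_{\mathrm{in}}(v)$ and $d_{\mathrm{out}}(v)$; only $s_0$ (which is ``entered'' for free at time $0$) and $s_n = x_{n-k+1}^n$ (which is never ``exited'') contribute unmatched terms. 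From the identity, the final state is the unique $v$ with $d_{\mathrm{in}}(v) - d_{\mathrm{out}}(v) = 1$ when such a $v$ exists, and equals $s_0$ when $d_{\mathrm{in}}(v)=d_{\mathrm{out}}(v)$ for every $v$. In either case the determination is made solely from $\tcounts(x^n)$, so any two sequences in the same type class share the same final state.

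The main delicacy is the bookkeeping near the start of the sequence, where the sliding window $x_{t-k}^{t-1}$ straddles the boundary with $s_0$ and one must be careful that the counts $n_s^{(a)}$ faithfully record both the transitions out of $s_0$ and the transitions into the states that first enter the ``body'' of $x^n$. Once this boundary issue is handled uniformly, the flow-conservation identity is a straightforward telescoping argument and the fact follows.
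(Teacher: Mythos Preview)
Your proof is correct. The paper does not actually prove this fact; it is stated as ``well known'' and left without argument, so there is no paper proof to compare against. Your flow-conservation argument on the state-transition (de Bruijn) graph is the standard way to see it: the counts $\tcounts(x^n)$ record the edge multiplicities of the walk $s_0 \to s_1 \to \cdots \to s_n$, and the imbalance $d_{\mathrm{in}}(v)-d_{\mathrm{out}}(v)$ at each vertex recovers the endpoint. The boundary bookkeeping you flag is benign once the paper's convention $x_{-k+1}^0 = s_0$ is adopted, since every $t$ from $1$ to $n$ then contributes exactly one well-defined transition and the telescoping is exact.
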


Unless explicitly stated otherwise, we will assume that the \RNG\
constructions described in this paper have access to the (arbitrary) initial
state $s_0$, and the order $k$ of the processes, which are necessary to
constructively define the type class partitions $\types$. We will depart from
these assumptions in Subsection~\ref{sec:twiceuniversal} when we discuss
twice-universal \RNGs\ (and the order $k$ is not assumed known), and in
Subsections ~\ref{sec:variation Elias} and~\ref{sec:optimal} when we briefly
discuss \RNGs\ that are insensitive to the initial state, and are based on a
slightly different definition of the type class.

Type classes of finite memory processes (and of broader model families) have
been studied extensively  (see, e.g.,~\cite{Csiszar-Korner,csiszar98method}
and references therein). In particular, the cardinality of a type class is
explicitly characterized by \emph{Whittle's formula}~\cite{Whittle}, and a
one-to-one correspondence between the sequences in a type class and Eulerian
cycles in a certain digraph constructed from $\tcounts(x^n)$ was uncovered
in~\cite{Goodman}. Whittle's formula also allows for the computationally
efficient enumeration of the type class, i.e., the computation of the index
of a given sequence in its class, and the derivation of a sequence from its
index, by
means of enumeration methods such as those described in~\cite{Cover73}.%
\footnote{ In this context, ``computationally efficient'' means computable in
polynomial time. Although further complexity optimizations are outside the
scope of this paper, various tools developed for similar enumeration and RNG
problems in the literature would be applicable also here, and should allow
for significant speed-ups.
See, e.g.,~\cite{Ryabko94,Ryabko98,Ryabko-Matchikina00,Zhou-Bruck12}.} %
These enumerations are a key component of the RNG procedures discussed in
this paper.


\section{Universal \fixtovarlen\ RNGs}
\label{sec:Elias}\label{sec:f2v}

\subsection{Formal definition}
\label{sec:definition_fv}

An \FVR\  is formally defined by a triplet $\fvr = (\target,\Rmap,\Mmap)$
where $n \in\posints$ is the fixed input length, $\target\subseteq\posints$
is a \emph{target set} such that $1\in\target$, and $\Rmap:\alp^n\to\nnints$,
$\Mmap:\alp^n\to\target$, are functions such that  $\Rmap(x^n)
\in \setM[\Mmap(x^n)]$. The \emph{output length} of $\fvr$ on input $x^n$ is
defined as $\log\Mmap(x^n)$. Thus, the function $\Mmap$ determines the range
of the output random number and the output length, while the function $\Rmap$
determines the random number itself within the determined range. When the
goal is to generate fair $p$-sided coin tosses, we choose
\begin{equation}\label{eq:coins}
\target =\{\,p^i\,|\,
i\ge 0\,\}\,,\;\quad p \ge 2\,.
\end{equation}
An \FVR\ $\fvr$ is \emph{perfect} for a process $P {\,\in\,}\Pclassk$ if
$\Rmap(x^n)$, conditioned on $\Mmap(x^n)=M$, is uniformly distributed in
$[M]$; $\fvr$ is \emph{universal} in $\Pclassk$ if it is perfect for all
$P\in\Pclassk$. The \emph{expected output length} of $\fvr =
(\target,\Rmap,\Mmap)$ with respect to $P$ is
\begin{equation}
\label{eq:fv_length}
\lenpn(\Mmap){\,\,\defined\,\,}\Exp_P \log \Mmap(X^n){\,\,=} \!\!\sum_{x^n\in\alp^n} P(x^n)\log
 \Mmap(x^n)\,,
\end{equation}
where $\Exp_P$ denotes expectation with respect to $P$. Given a process order
$k$, the goal is to find universal \FVRs\ that maximize $\lenpn$
simultaneously for all $P\in\Pclassk$. We are interested in $\lenpn$ in a
pointwise sense (i.e., for each value of $n$), and also in its asymptotic
behavior as $n\to\infty$.

Notice that our setting is slightly more general than the usual one for
\FVRs\ in the literature, where the condition~(\ref{eq:coins}) for some $p$
is generally assumed in advance. As we shall see, there is not much practical
gain in this generalization. However, the broader setting will allow us to
better highlight the essence of the optimal solutions, as well as connections
to related problems in information theory such as universal compression and
universal simulation.

For conciseness, in the sequel, except when we discuss twice-universality in
Subsection~\ref{sec:twiceuniversal}, when we say ``universal'' we mean
``universal in $\Pclassk$ for a given order $k$, understood from the
context.''

\subsection{Necessary and sufficient condition for universality of \FVRs}
\label{sec:conditions_fv}

The following condition for universality is similar to, albeit stronger than,
conditions previously derived for problems in universal
simulation~\cite{MerhavWeinb04,MSW-delay-limited08,MartinMSW10} and universal
\FVRs~\cite{PaeLoui06,Zhou-Bruck12}. The proof is deferred to
Appendix~\ref{app:condition-fv}.

\begin{lemma}\label{lem:condition-fv} Let $\fvr{=}(\target,\Rmap,\Mmap)$ be
an \FVR\ satisfying the following condition: For all $\typet \in \types$ and
every $M\in\target$, the number of sequences $x^n\in\typet$ such that
$\Mmap(x^n){\,=\,}M$ and $\Rmap(x^n){\,=\,}r$ is the same for all $r\in\setM$
(in particular, the number of sequences $x^n{\,\in\,}\typet$ such that
$\Mmap(x^n)=M$ is a multiple of~$M$). Then, $\fvr$ is universal in
$\Pclassk$. If $\fvr$
\emph{does not} satisfy the condition, then it can only be perfect for
processes $P$ with parameter $\bldp$ in a fixed subset $\pDomain_0$ of
measure zero in $\pDomain$.
\end{lemma}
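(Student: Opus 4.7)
The plan is to decompose probabilities by type class: each $P\in\Pclassk$ assigns the common value $P_\typet(\bldp)\defined\prod_{s,a}p(a|s)^{n_{s,a}^\typet}$ to every sequence in $\typet\in\types$, so for any event $E\subseteq\alp^n$,
\[
P(E)=\sum_{\typet\in\types}|E\cap\typet|\,P_\typet(\bldp),
\]
which reduces the perfectness question to identities among the polynomials $\{P_\typet\}$.

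For sufficiency, I apply this decomposition to the events $E_r=\{x^n:\Mmap(x^n)=M,\Rmap(x^n)=r\}$ for $r\in\setM$ and $M\in\target$: the hypothesis makes $|E_r\cap\typet|$ independent of $r$, hence so is $P(E_r)$, and dividing by $P(\Mmap(X^n)=M)=\sum_r P(E_r)$ (positive on $\pDomain$ whenever the event is nonempty) yields the uniform conditional distribution on $\setM$ required for universality.

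For the converse, suppose the condition fails at some triple $(\typet_0,M_0,r_1,r_2)$ with $c_{\typet_0}\defined|E_{r_1}\cap\typet_0|-|E_{r_2}\cap\typet_0|\ne 0$. Perfectness of $P_\bldp$ would then force the polynomial identity $Q(\bldp)\defined\sum_\typet c_\typet P_\typet(\bldp)\equiv 0$ on $\pDomain$, with $c_\typet=|E_{r_1}\cap\typet|-|E_{r_2}\cap\typet|$. Provided $Q\not\equiv 0$, its zero locus in the open set $\pDomain$ has Lebesgue measure zero; since only finitely many violating triples occur (as $\types$ is finite and only finitely many values of $M$ arise in $\Mmap(\alp^n)$), the union $\pDomain_0$ of the corresponding zero loci is also of measure zero, and any $\bldp$ for which $\fvr$ is perfect must lie in this fixed $\pDomain_0$.

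The crux is verifying $Q\not\equiv 0$ whenever some $c_\typet\ne 0$, i.e., the linear independence of $\{P_\typet\}_{\typet\in\types}$ as functions on $\pDomain$. My approach is to lift to the unreduced polynomial ring $R=\reals[\{p(a|s)\}_{s,a}]$, in which the $P_\typet$ are pairwise distinct monomials (hence trivially linearly independent), and then to rule out that any nontrivial combination $\sum c_\typet P_\typet$ lies in the ideal $I=\bigl(\sum_a p(a|s)-1:s\in\alp^k\bigr)$ generated by the simplex constraints. The structural facts that drive the argument are that every $P_\typet$ is homogeneous of the same total degree $n$ in $R$, that the parameter space factors across states into independent per-state open simplices, and that on each such simplex the one-state Bernstein monomials of fixed degree form a basis. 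A tensor-product assembly of these per-state bases precludes any cancellation in $R/I$, so $Q$ survives as a nontrivial polynomial in $\bldp$ and the argument closes.
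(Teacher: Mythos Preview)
Your argument is correct and matches the paper's approach almost exactly: both decompose the relevant probabilities by type class (using equiprobability within a type), obtain sufficiency immediately, and reduce the converse to the linear independence of the type-class probability polynomials $\{P_T(\bldp)\}_{T\in\types}$ on $\pDomain$, so that any ``perfect'' parameter must lie in a finite union of proper algebraic subsets. The paper simply \emph{cites} this linear-independence fact from \cite{MerhavWeinb04,MSW-delay-limited08}, whereas you attempt an inline proof.

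Your Bernstein/tensor-product sketch for that independence needs more care, however. The per-state degree $n_s^T=\sum_a n_{s,a}^T$ is \emph{not} constant across $T\in\types$ (only the total $\sum_s n_s^T=n$ is fixed), so you are not simply tensoring fixed-degree Bernstein bases over the state factors; and Bernstein monomials of different degrees on a single simplex are \emph{not} jointly independent (e.g., $p+(1-p)=1$). Thus the sentence ``a tensor-product assembly of these per-state bases precludes any cancellation'' does not follow as stated. The conclusion is true, but closing this step requires either a more careful exploitation of the multi-homogeneous structure of the $P_T$ in the unreduced ring, or an entirely different device. Since the fact is established in the cited references, you may prefer to invoke it directly, as the paper does.
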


The following corollary is an immediate consequence of
Lemma~\ref{lem:condition-fv}. It shows that universality is essentially
equivalent to perfection for a single, ``generic'' process in
$\Pclassk$.\footnote{In particular, this result settles a conjecture put
forth in~\cite[p.\ 917]{juels}.}
\begin{corollary}\label{cor:universal=perfect}
An \FVR\ is universal if and only if it is perfect for any single process
$P\in\pDomain\setminus\pDomain_0$, where $\pDomain_0$ is a fixed subset of
measure zero in $\pDomain$.
\end{corollary}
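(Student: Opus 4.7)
The plan is to derive the corollary directly from Lemma~\ref{lem:condition-fv} by taking $\pDomain_0$ to be the same exceptional measure-zero subset that appears there. No new technical work will be needed beyond a bookkeeping argument and a contrapositive.

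For the forward implication, I would simply invoke the definition of universality: if $\fvr$ is universal in $\Pclassk$, then it is perfect for every $P\in\Pclassk$, and in particular for any single $P\in\pDomain\setminus\pDomain_0$.

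For the converse, my plan is to argue by contrapositive. Assume $\fvr$ is not universal. Then, by the first statement of Lemma~\ref{lem:condition-fv}, $\fvr$ must fail the combinatorial type-class condition of the lemma, since otherwise universality would already follow. The second statement of the lemma then yields a fixed measure-zero subset $\pDomain_0\subset\pDomain$ such that the only processes for which $\fvr$ can be perfect have parameter $\bldp\in\pDomain_0$. Consequently, $\fvr$ is not perfect for any $P\in\pDomain\setminus\pDomain_0$, which is precisely the contrapositive of the desired implication.

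The one place where care is needed, and hence the main (modest) obstacle, is ensuring that $\pDomain_0$ can be chosen \emph{uniformly} in the quantifier ``any single $P$'' of the corollary's statement, so that a single set $\pDomain_0$ works regardless of which $P$ is eventually chosen. This is already handled by Lemma~\ref{lem:condition-fv}, whose $\pDomain_0$ is determined solely by $\fvr$ (via its induced partition of each type class according to the joint values of $\Mmap$ and $\Rmap$) and is independent of any particular process $P$. With this observation in hand, the corollary follows immediately.
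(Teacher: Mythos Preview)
Your argument is correct and is precisely the reasoning the paper intends when it calls the corollary an immediate consequence of Lemma~\ref{lem:condition-fv}. One small sharpening: in the paper's construction (Appendix~\ref{app:condition-fv}), $\pDomain_0$ is the zero set of the polynomial family $G_n$ and hence depends only on $n$, not on the particular \FVR; so the uniformity you worried about is even stronger than you stated, but your conclusion stands unchanged.
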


\subsection{Variations on the Elias procedure}
\label{sec:variation Elias}
\newcommand{\EliasI}{E1}
We start by considering the simplest target set, namely $\target = \posints$
(i.e., no restrictions such as~(\ref{eq:coins}) are placed on the ranges of
the generated random numbers). Let $\indx_T(x^n)$ denote the index of
$x^n\in\alp^n$ in an enumeration of~$T=\type$. The following procedure
defines an \FVR\ $\fvrs = (\posints,\Rmaps,\Mmaps)$.

\emph{Procedure \EliasI\/}: Given an input sequence $x^n$, let
$\Mmaps(x^n)=|\type|$, and $\Rmaps(x^n) = \indx_T(x^n)$.

Procedure \EliasI\ is a ``bare-bones'' version of Elias's procedure.%
\footnote{An equivalent procedure is described in~\cite{juels} as a
first step in the implementation of Elias's procedure, the second step
consisting of a ``binarization'' of $\Rmaps(x^n)$, for the case $p=2$
in~(\ref{eq:coins}).} It is straightforward to verify that $\fvrs$ satisfies
the condition of Lemma~\ref{lem:condition-fv} and is, thus, universal in
$\Pclassk$. The following theorem shows that $\fvrs$ attains the maximum
possible expected output length of any universal \FVR\ for the given $n$, all
$P\in\Pclassk$, and arbitrary target set $\target$.

\begin{theorem}\label{thm:exp len simple}
If $\fvr=(\target,\Rmap,\Mmap)$ is universal in $\Pclassk$, then, for any
$P\in\Pclassk$,
\begin{equation}\label{eq:optFVlen}
\lenpn(\Mmap) \le \lenpn(\Mmaps) = \Exp_P\log|\type[X^n]|\,.
\end{equation}
\end{theorem}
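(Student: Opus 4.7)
The equality $\lenpn(\Mmaps) = \Exp_P \log |\type[X^n]|$ is immediate from the definition of Procedure \EliasI, since $\Mmaps(x^n) = |\type|$, so the real content is the inequality $\lenpn(\Mmap) \le \Exp_P \log |\type[X^n]|$ for every universal \FVR\ $\fvr=(\target,\Rmap,\Mmap)$.

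The plan is to reduce the expectation over $\alp^n$ to a per-type-class optimization. Since $P(x^n)$ is constant on each $T\in\types$, one can regroup
\[
\lenpn(\Mmap) = \sum_{T\in\types} P(x^n_T) \sum_{x^n\in T} \log \Mmap(x^n),
\]
where $x^n_T$ is any representative of $T$. Hence it suffices to show that for each type class $T$,
\[
\sum_{x^n \in T} \log \Mmap(x^n) \;\le\; |T| \log |T|,
\]
because multiplying by $P(x^n_T)$ and summing yields the inequality $\lenpn(\Mmap) \le \sum_T P(T) \log|T| = \Exp_P\log|\type[X^n]|$.

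For the inner bound, I would apply Lemma~\ref{lem:condition-fv} to extract the divisibility constraint. Let $c_{T,M} = |\{x^n \in T : \Mmap(x^n) = M\}|$. Universality of $\fvr$ implies, via Lemma~\ref{lem:condition-fv}, that $M \mid c_{T,M}$ for every $M \in \target$ (this is the ``in particular'' clause, which is all that is needed here). The key observation is that whenever $c_{T,M} > 0$, divisibility forces $c_{T,M} \ge M$, and since $c_{T,M} \le |T|$ trivially, we must have $M \le |T|$ for every $M$ that actually occurs. Therefore
\[
\sum_{x^n\in T}\log\Mmap(x^n) = \sum_{M\in\target} c_{T,M}\log M \;\le\; \log|T| \sum_{M\in\target} c_{T,M} = |T|\log|T|,
\]
using $\sum_M c_{T,M} = |T|$ in the last step. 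This gives the desired per-type bound, completing the proof.

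I do not anticipate a hard obstacle: the only nontrivial ingredient is Lemma~\ref{lem:condition-fv}, which is already assumed. The argument is essentially a convexity/monotonicity observation inside each type class, exploiting that the divisibility constraint $M \mid c_{T,M}$ is precisely what limits how large $M$ can be on a nontrivial portion of $T$. The equality case makes clear why Procedure \EliasI\ is optimal: it concentrates the entire type class on the single value $M = |T|$, saturating the bound $\log M \le \log |T|$ on every sequence in $T$.
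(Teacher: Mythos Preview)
Your proof is correct and follows essentially the same approach as the paper. The paper states it even more tersely: from Lemma~\ref{lem:condition-fv} (and the contrapositive via Corollary~\ref{cor:universal=perfect}) one gets the pointwise bound $\Mmap(x^n)\le |\type|$ for every $x^n$, and then takes expectations directly; your per-type decomposition with the counts $c_{T,M}$ is just an explicit unpacking of the same observation that $c_{T,M}>0$ forces $M\le |T|$.
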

\begin{proof}
The equality is straightforward from the definition of Procedure~\EliasI. The
inequality follows from Lemma~\ref{lem:condition-fv} and
Corollary~\ref{cor:universal=perfect}, which imply that $\Mmap(x^n) \le
|\type|$ for all $x^n\in\alp^n$.
\end{proof}

The term on the rightmost side of~(\ref{eq:optFVlen}) was precisely estimated
in~\cite{MerhavWeinb04} in the context of universal simulation of sources in
$\Pclassk$, by analyzing the expectation of Whittle's formula, and obtaining
\begin{equation}\label{eq:Elogtype}
\Exp_P \log |\type[X^n]| = H_P(X^n) - (K/2)\log n + O(1)\,,
\end{equation}
where $H_P(X^n)$ denotes the entropy of the marginal $P(X^n)$,
$K=(\alpha-1)\alpha^k$, and the $O(\cdot)$ notation refers to asymptotics in
$n$.

\begin{remark}
The second-order term $(K/2) \log n$ on the right hand side
of~(\ref{eq:Elogtype}) resembles a typical ``model cost'' term in universal
lossless compression. By Theorem~\ref{thm:exp len simple}, this term
determines the rate at which the expected output length of $\fvrs$ approaches
(from below) $H_P(X^n)$, which is the best possible convergence rate for
\emph{any} universal \FVR. Notice, however, that by
Corollary~\ref{cor:universal=perfect}, the bound of Theorem~\ref{thm:exp len
simple} applies even if the \FVR\ is required to be perfect just for a single
process with parameter $\bldp\in\pDomain \setminus\pDomain_0$. Therefore, in
this case, the second-order term must be incurred (almost always) also in the
non-universal (known $P$) setting, and, in fact, it can be argued that there
is \emph{no asymptotic cost} for universality. Nevertheless, we will still
refer to the second order term as a \emph{model cost}, since it is
proportional to the size of the model, regardless of whether the parameters
of the input process are known or not.
\end{remark}
\begin{remark}
Procedure \EliasI\ is similar to a \emph{universal enumerative encoder}, a
two-part universal lossless compressor for the class $\Pclassk$. The encoder
differs from the \FVR\ in that it outputs, together with $\rho^\ast (x^n)$
and in lieu of $M$, an efficient description of $\type$. It is known (see,
e.g.,~\cite{WMF94}) that $K\log n + O(1)$ bits are sufficient for this
description, resulting in an overall expected code length of $H_P(X^n)
+(K/2)\log n + O(1)$, which is optimal, up to an additive constant, for any
universal lossless compressor for the class $\Pclassk$. The rate of
convergence to the entropy is the same as for \FVRs, but convergence, in this
case, is from above. We observe that, \emph{a fortiori}, a universal lossless
compressor \emph{cannot be} a universal \FVR\ for $\Pclassk$ (and vice
versa).
\end{remark}

We now shift our attention to more general target sets, which include also
sets of the form~(\ref{eq:coins}). Let $\target$ be an arbitrary subset of
the positive integers with $1\in\target$. For any $M \in \posints$, let
\[
\gtarg{M} = \max\left\{\,j\in\target\,|\,j\le M\,\right\}\,.
\]
Let $c$ be a constant, $c \ge 1$. We say that $\target$ is $c$-\emph{\dense}
if and only if for every $M\in\posints$, we have
\[
M \le c\gtarg{M}.
\]
For example, $\posints$ is $1$-\dense\ (no other subset of $\posints$ is),
and the target set in~(\ref{eq:coins}) (used in Elias's procedure for fair
$p$-sided coins) is $p$-\dense. In the sequel, we will assume that $\target$
is $c$-\dense\ for some $c$.

\newcommand{\EliasII}{E2}

Procedure \EliasII\ in \figurename~\ref{figproc:Elias} defines an \FVR\
$\fvrss = (\target,\Rmapss,\Mmapss)$
(we recall that $\indx_T(x^n)$ denotes the index of $x^n$ in an enumeration
of $\type$). The assumption that $1 \in \target$, and the fact that $r < \MM$
holds throughout after the execution of Step~\ref{step:M0}, guarantee that
Procedure \EliasII\ always stops, and, by the stopping condition in
Step~\ref{step:output}, the output satisfies the required condition
$r\in\setM$. It is also readily verified that $\fvrss$ satisfies the
condition of Lemma~\ref{lem:condition-fv} and is, thus, universal.

\begin{figure}
\begin{figprocedure}
{Sequence $x^n\in\alp^n$.}%
{Pair $(\rnd, M),\;\;M\in\target,\;\;\rnd\in\setM$.}%
\item\label{step:M0} Let $\MM = |\type|,\;\;\;\rnd  = \indx_T(x^n)$.
\item\label{step:loop} Repeat forever:
\begin{enumerate}
\item\label{step:M} Let $M = \gtarg{\MM}\;$.
\item\label{step:output} If $\rnd < M $ then
output $(\rnd, M)$ and \textbf{stop.}
\item\label{step:MM-M} Let $\MM = \MM - M,\;\; \rnd = \rnd - M$.
\end{enumerate}
\end{figprocedure}
\caption{\label{figproc:Elias}%
Procedure \EliasII: Generalized Elias procedure ($\fvrss$).}
\end{figure}

We refer to Procedure \EliasII\ as ``greedy,'' since, in
Step~\ref{step:MM-M}, it always chooses to reduce $\MM$ by the largest
possible element of $\target$. The procedure trivially coincides with
Procedure \EliasI\ when $\target=\posints$. When $\target$ is of the
form~(\ref{eq:coins}), the procedure coincides with Elias's original scheme
in~\cite{Elias72}, suitably extended to finite-memory sources of arbitrary
order, and coins with an arbitrary number of sides.

Suppose we do not let Procedure \EliasII\ stop in Step~\ref{step:output},
instead allowing it to run until $\MM = 0$ in Step~\ref{step:MM-M}. Then, the
procedure defines a decomposition of $|\type|$ into a sum
\begin{equation}\label{eq:decomp_mu}
|\type| = \sum_{i=1}^m M_i,
\end{equation}
where $M_i\in\target$, and $M_1 \ge M_2 \ge \cdots \ge M_m$. The term $M_i$
corresponds to the value that $M$ assumes at the $i$-th execution of
Step~\ref{step:M}, namely,
\begin{equation}\label{eq:Mi}
M_i = \gtarg{|\type| - \sum_{j=1}^{i-1}M_{j}},\quad 1 \le i \le m\,,
\end{equation}
where $m$ is the first index such that $M_m \in\target$ ($m$ is well defined
since $1\in\target$).

\begin{remark}\label{rem:minitypes}
Equations~(\ref{eq:decomp_mu})--(\ref{eq:Mi}) define a partition of the
integer $|\type|$ into summands in $\target$, which, through an enumeration
of $\type$, translates to a partition of $\type$ into subclasses, with the
size of each subclass belonging to $\target$. This partition induces a
refinement of the original type-class partition of $\alp^n$, so that all the
sequences in a refined subclass are still equiprobable for all $P \in
\Pclassk$. Procedure~\EliasII\ can then be interpreted as applying
Procedure~\EliasI, but using the refined partition in lieu of the type-class
partition. The procedure stops when it finds the subclass $x^n$ is in, at
which time the value of $r$ is the index of $x^n$ in the subclass.
\end{remark}

Next, we characterize the expected output length of $\fvrss$ when $\target$
is $c$-\dense. The characterization will make use of the following technical
lemma, a proof of which is deferred to Appendix~\ref{app:chcinv}.
\begin{lemma}\label{lem:chcinv}
Let $\bldq=[q_1,\,q_2,\,\ldots,\,q_m]$, with $q_1 \ge q_2 \ge \cdots q_m >
0$, be the vector of probabilities of a discrete distribution on $m$ symbols,
and let $H = -\sum_{i=1}^m q_i \log q_i$ denote its entropy. Assume that for
some constant $c \ge 1$, $\bldq$ satisfies
\begin{equation}\label{eq:qcond}
c\, q_i \ge 1 - \sum_{j=1}^{i-1} q_j,\quad 1 \le i \le m\,.
\end{equation}
Then, letting $h(\cdot)$ denote the binary entropy function, we have $H \le
c\,h(c^{-1})$.
\end{lemma}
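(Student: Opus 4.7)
My plan is to use the chain-rule decomposition of entropy combined with induction on $m$. Writing $p_1 = q_1$ and setting $q'_i = q_{i+1}/(1-q_1)$ for $i = 1, \ldots, m-1$, a standard computation yields
\begin{equation*}
H = h(p_1) + (1-p_1)\,H(\bldq'),
\end{equation*}
where $\bldq' = (q'_1, \ldots, q'_{m-1})$ is the conditional distribution on the tail. The first step is to verify that $\bldq'$ itself satisfies the hypotheses of the lemma with the same constant $c$: monotonicity is inherited immediately from $\bldq$, and the condition~(\ref{eq:qcond}) for $\bldq'$ at index $i$ reduces, after multiplying both sides by $1-q_1$, to condition~(\ref{eq:qcond}) for $\bldq$ at index $i+1$. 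In particular, applying~(\ref{eq:qcond}) at $i=1$ to the original $\bldq$ shows $p_1 \ge c^{-1}$, which will pin down the range over which the maximization is performed.

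I will then induct on $m$, the base case $m=1$ being trivial since then $q_1 = 1$ forces $H = 0 \le c\,h(c^{-1})$. For the inductive step, assuming $H(\bldq') \le c\,h(c^{-1})$, the recursion above gives $H \le g(p_1)$, where
\begin{equation*}
g(p) \defined h(p) + (1-p)\,c\,h(c^{-1}).
\end{equation*}
The remaining task is to show $g(p) \le c\,h(c^{-1})$ for every $p \in [c^{-1}, 1]$. Since $g''=h''<0$, the function $g$ is concave, so it will suffice to check that $g(c^{-1}) = c\,h(c^{-1})$ (by direct substitution) and that $g'(c^{-1}) \le 0$; together these imply $g$ is non-increasing on $[c^{-1}, 1]$.

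The main technical step is thus the derivative computation at the boundary. Using $g'(p) = \log((1-p)/p) - c\,h(c^{-1})$ and the expansion $c\,h(c^{-1}) = \log c + (c-1)\log(c/(c-1))$, I expect to obtain $g'(c^{-1}) = -c\log(c/(c-1))$, which is non-positive and vanishes only when $c=1$. The intuition behind the bound is that the extremal distribution is geometric with parameter $c^{-1}$, namely $q_i = c^{-1}(1-c^{-1})^{i-1}$, which saturates~(\ref{eq:qcond}) at every index and attains entropy exactly $c\,h(c^{-1})$ in the limit $m\to\infty$; the induction formalizes the fact that no distribution satisfying~(\ref{eq:qcond}), truncated or not, can exceed this geometric envelope.
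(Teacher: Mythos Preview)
Your proposal is correct and follows essentially the same route as the paper: induction on $m$, the chain-rule split $H = h(q_1) + (1-q_1)H(\bldq')$, verification that $\bldq'$ inherits condition~(\ref{eq:qcond}), and then bounding $g(p_1)\defined h(p_1)+(1-p_1)c\,h(c^{-1})$ by $c\,h(c^{-1})$ on $[c^{-1},1]$. The only difference is in this last step: the paper rewrites $g(q_1)=q_1\bigl(h(q_1)/q_1-h(c^{-1})/c^{-1}\bigr)+c\,h(c^{-1})$ and invokes the monotone decrease of $h(x)/x$, whereas you check $g(c^{-1})=c\,h(c^{-1})$, $g'(c^{-1})\le 0$, and concavity. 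Both arguments are short and equivalent; the paper's version avoids the explicit derivative computation.
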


\begin{theorem}\label{thm:optFV}
If $\target$ is $c$-\dense, the expected output length of $\fvrss$ for
$P\in\Pclassk$ is
\begin{equation}\label{eq:len c-dense}
\lenpn(\Mmapss) =\lenpn(\Mmaps) - O(1)\,.
\end{equation}
\end{theorem}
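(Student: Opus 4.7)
The plan is to reduce the theorem to a per-type-class bound via Lemma~\ref{lem:chcinv}. First, I would compute the contribution of each $\typet\in\types$ to the gap $\lenpn(\Mmaps)-\lenpn(\Mmapss)$. Since $\fvrs$ outputs $\Mmaps(x^n)=|\type|$ while $\fvrss$ outputs $\Mmapss(x^n)\in\{M_1,\ldots,M_m\}$ from the decomposition~(\ref{eq:decomp_mu})--(\ref{eq:Mi}), and both procedures enumerate $\type$ in the same way, the sequences to which $\fvrss$ assigns value $M_i$ are exactly the $M_i$ elements of $\typet$ whose index within $\typet$ falls in the $i$-th sub-block. Using $P(x^n)=P(\typet)/|\typet|$ for $x^n\in\typet$, this yields
\begin{equation*}
\lenpn(\Mmaps)-\lenpn(\Mmapss)
\;=\;\sum_{\typet\in\types}P(\typet)\sum_{i=1}^m\frac{M_i}{|\typet|}\log\frac{|\typet|}{M_i}
\;=\;\Exp_P H(\bldq_{\type[X^n]}),
\end{equation*}
where $\bldq_\typet=(M_1/|\typet|,\ldots,M_m/|\typet|)$ is a probability vector and $H$ denotes its Shannon entropy.

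Next, I would verify that $\bldq_\typet$ satisfies the hypotheses of Lemma~\ref{lem:chcinv} with the constant $c$ from the $c$-\dense\ property of $\target$. Monotonicity $q_1\ge q_2\ge\cdots\ge q_m>0$ follows from the greedy recursion~(\ref{eq:Mi}): the remainders $|\typet|-\sum_{j<i}M_j$ are strictly decreasing in $i$ while remaining positive (since $1\in\target$ forces termination at a finite $m$), and $\gtarg{\cdot}$ is non-decreasing, so $M_i$ is non-increasing in $i$. The key hypothesis~(\ref{eq:qcond}) becomes $cM_i\ge|\typet|-\sum_{j<i}M_j$, which is \emph{exactly} the $c$-\dense\ inequality applied to the remainder whose greedy floor equals $M_i$ by~(\ref{eq:Mi}).

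Applying Lemma~\ref{lem:chcinv} then gives $H(\bldq_\typet)\le c\,h(c^{-1})$ uniformly in $\typet$, so the full expected gap is at most the absolute constant $c\,h(c^{-1})$, establishing~(\ref{eq:len c-dense}). I do not anticipate any substantial obstacles: the only delicate step is the bookkeeping that recasts the per-type-class loss as the entropy of $\bldq_\typet$, and once that identification is made the $c$-\dense\ property and Lemma~\ref{lem:chcinv} finish the argument. The crucial feature is that the resulting bound depends only on $c$ and not on $n$ or $\typet$, which is what turns the gap into $O(1)$.
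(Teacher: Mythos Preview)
Your proposal is correct and follows essentially the same approach as the paper: both identify the per-type-class loss $\log|\typet|-\Exp[\log\Mmapss(X^n)\mid\typet]$ with the entropy $H(\bldq_\typet)$ of the probability vector $\bldq_\typet=(M_1/|\typet|,\ldots,M_m/|\typet|)$, verify via the greedy recursion~(\ref{eq:Mi}) and the $c$-\dense\ property that $\bldq_\typet$ satisfies the hypothesis~(\ref{eq:qcond}) of Lemma~\ref{lem:chcinv}, and conclude that the gap is bounded uniformly by $c\,h(c^{-1})$. Your write-up is in fact slightly more explicit than the paper's in checking the monotonicity $M_1\ge\cdots\ge M_m$, which the paper leaves implicit; the only point you leave unstated is that the nonnegativity of the gap (needed to interpret~(\ref{eq:len c-dense}) as a two-sided $O(1)$) comes from Theorem~\ref{thm:exp len simple}.
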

\begin{proof}
Let $\typet=\type$ denote an arbitrary type class and let
$M_1,M_2,\ldots,M_m$ denote the integers in $\target$ determined by the
decomposition of $|\typet|$ in~(\ref{eq:decomp_mu})--(\ref{eq:Mi}). Define
\begin{equation}\label{eq:qT}
\bldq(T) = |\typet|^{-1}\,\left[\,M_1,\,M_2,\,\ldots,\,M_m\,\right]\,.
\end{equation}
Clearly, $\bldq(T)$ is the vector of probabilities of a discrete
distribution, with entropy $H(\bldq(T))$.  By~(\ref{eq:Mi}), the $c$-density
assumption applied to the quantities $|\typet| - \sum_{j=1}^{i-1} M_j,\;1 \le
i \le m$, and the definition~(\ref{eq:qT}), Lemma~\ref{lem:chcinv} applies to
$\bldq(T)$. Since the sequences in a type class are equiprobable, the
expectation of $\log \Mmapss(X^n)$ conditioned on $T$ is given by
\begin{equation}\label{eq:L(T)}
 \len(T) \defined |T|^{-1}\sum_{i=1}^m M_i \log M_i
= \log|T| - H(\bldq(T)) \,,
\end{equation}
and
\begin{align}
\lenpn(\Mmapss) & = \sum_{T\in\types}P(T)\len(T)
\ge E_P\log|T| - c\,h(c^{-1})\nonumber\\
& = \lenpn(\Mmaps) - O(1)\,,
\label{eq:arriba}
\end{align}
where the inequality follows from~(\ref{eq:L(T)}) and Lemma~\ref{lem:chcinv},
and the last equality follows from the rightmost equality
in~(\ref{eq:optFVlen}). The claim of the theorem now follows by combining the
lower bound~(\ref{eq:arriba}) with the upper bound in Theorem~\ref{thm:exp
len simple}.
\end{proof}

\iftrue  
\begin{remark}\label{rmk:c-dense}
In a worst-case sense, the sufficient condition of $c$-density in the theorem
is also necessary, since, using the fact that $\len(T)\le \log M_1$
by~(\ref{eq:L(T)}) (with the notation in the proof), we have
\begin{equation}\label{eq:unbounded}
\log|T| - \len(T) \ge  \log|T| - \log M_1 = \log\frac{|T|}{\gtarg{|T|}}\,.
\end{equation}
Thus, if $\target$ is not $c$-\dense\ for any $c$ then the expression on the
rightmost side of~(\ref{eq:unbounded}) is unbounded.
\end{remark}
\fi

Theorem~\ref{thm:optFV} shows that if $\target$ is $c$-\dense,  $\fvrss$
performs to within an additive constant of the expected output length of
$\fvrs$, which is an upper bound for any universal \FVR, independently of the
target set. In particular, this implies that $\fvrss$ is optimal, up to an
additive constant, among all \FVRs\ for the same target set $\target$. While,
for a general $c$-dense target set, this additive constant is positive, the
following theorem shows that when $\target$ is of the form~(\ref{eq:coins}),
$\fvrss$ is in fact the optimal \FVR\ for $\target$. This result was proved
for $k{=}0$ in~\cite{PaeLoui06},~\cite{juels}, and for $k{=}1$
in~\cite{Zhou-Bruck12}. In fact, once the basic properties of type classes
are established, the proof should be rather insensitive to the order $k$, as
it follows, essentially, from Lemma~\ref{lem:condition-fv},
from~(\ref{eq:type_prob}), and from the fact that for an arbitrary positive
integer $\MM$, the sum $\sum_{i=0}^m i\, c_i p^i$, subject to $\sum_{i=0}^m
c_i p^i = \MM < p^{m+1}$, is maximized over vectors of nonnegative integers
$[c_0,c_1,...,c_m]$ when $c_0,c_1,...,c_m$ are the digits in the radix-$p$
representation of $\MM$ (in our case, $\MM$ corresponds to the size of a type
class).

\begin{theorem}\label{thm:Elias optimal}
Let $\target = \{ p^i \,|\, i\ge 0\,\}$ for some integer $p\ge 2$. Consider
$\fvrss$ with target set $\target$, and let $\fvr = (\target,\Rmap,\Mmap)$ be
any \FVR\ with the same target set. Then, for any $n$ and any $P\in\Pclassk$,
we have
\[
\lenpn(\Mmap) \le \lenpn(\Mmapss)\,.
\]
\end{theorem}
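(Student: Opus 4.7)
The plan is to reduce the optimization over universal \FVRs\ with target set $\target=\{p^i\}$ to a combinatorial maximization on each type class, and then to verify that the greedy decomposition computed by Procedure~\EliasII\ realizes that maximum.

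First I would extract the structural constraint on any competing $\fvr=(\target,\Rmap,\Mmap)$ from Lemma~\ref{lem:condition-fv}: for every $T\in\types$ and every $i\ge 0$, the number of sequences $x^n\in T$ with $\Mmap(x^n)=p^i$ must be a nonnegative integer multiple of $p^i$, say $c_i(T)\,p^i$, so that $|T|=\sum_{i\ge 0} c_i(T)\,p^i$. Since the sequences in a type class are equiprobable under every $P\in\Pclassk$, the expected output length splits as
\begin{equation*}
\lenpn(\Mmap) \;=\; (\log p)\sum_{T\in\types}\frac{P(T)}{|T|}\sum_{i\ge 0} i\,c_i(T)\,p^i,
\end{equation*}
and maximizing $\lenpn(\Mmap)$ simultaneously for all $P$ reduces to maximizing $S(T)\defined\sum_{i\ge 0} i\,c_i\,p^i$, subject to $\sum_{i\ge 0} c_i\,p^i=|T|$ with $c_i\in\nnints$, pointwise on each $T$.

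Next I would prove the combinatorial lemma alluded to in the paragraph preceding the theorem: the unique maximizer of $S(T)$ is the vector of radix-$p$ digits of $|T|$. The argument is a one-line exchange. Whenever some $c_i\ge p$, the substitution $c_i\to c_i-p$, $c_{i+1}\to c_{i+1}+1$ leaves $\sum_j c_j p^j$ unchanged while raising $S(T)$ by $p^{i+1}>0$ and strictly decreasing $\sum_j c_j$, so after finitely many such moves one reaches the unique representation with $0\le c_i<p$ for all $i$, i.e., the radix-$p$ expansion of $|T|$. Any other feasible vector strictly improves under the reduction and is therefore suboptimal.

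Finally, I would observe that $\fvrss$ realizes this optimum on every $T$. For $\target=\{p^i\}$, Step~\ref{step:M} of Procedure~\EliasII\ computes $\gtarg{\MM}=p^{\lfloor\log_p\MM\rfloor}$, so the recursion~(\ref{eq:Mi}) peels off the successive radix-$p$ digits of $|T|$ and the decomposition~(\ref{eq:decomp_mu}) coincides with the base-$p$ expansion of $|T|$. Hence the counts $c_i(T)$ associated with $\fvrss$ attain the combinatorial maximum on every $T$, and weighting by $P(T)/|T|$ yields $\lenpn(\Mmapss)\ge\lenpn(\Mmap)$ for every $P\in\Pclassk$. The main subtlety is a bookkeeping one at the very start: Lemma~\ref{lem:condition-fv} only constrains \FVRs\ that are perfect on a positive-measure subset of $\pDomain$, so the theorem should be read under this implicit universality assumption, consistent with the surrounding discussion; once universality is granted, the remainder of the argument is purely combinatorial.
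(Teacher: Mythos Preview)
Your proposal is correct and follows essentially the same route the paper sketches: use Lemma~\ref{lem:condition-fv} to reduce to a per-type-class maximization of $\sum_i i\,c_i p^i$ subject to $\sum_i c_i p^i=|T|$, show the radix-$p$ digits of $|T|$ are the unique maximizer, and observe that Procedure~\EliasII\ produces exactly that decomposition. Your exchange argument (carry $p$ units from level $i$ to level $i{+}1$, gaining $p^{i+1}$ in the objective while shrinking $\sum_j c_j$) is a clean variant of the paper's inductive proof of the same combinatorial fact, and your remark that the statement implicitly assumes universality (so that Lemma~\ref{lem:condition-fv} applies) is well taken.
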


\begin{remark}\label{rmk:initialFVR}
The proposed variants of the Elias procedure assume knowledge of the
(arbitrary) initial state $s_0$. If the initial state is unknown (possibly
nondeterministic), the procedure can consume $k$ input symbols to synchronize
its state and start at $x_{k+1}$, thus generating $\Exp_P \log
|\type[X_{k+1}^n]|$ random bits (up to an additive constant), which is still
asymptotically optimal. However, the pointwise optimality of
Theorem~\ref{thm:Elias optimal} is lost. Nevertheless, the modified procedure
can still be shown to be pointwise optimal in a more restrictive sense for
the target sets covered by the theorem. Specifically, consider a setting in
which an \FVR\ is said to be universal if it is perfect for every $P \in
\Pclassk$ and every initial state distribution (equivalently, for every fixed
initial state). The definition of a type class in~(\ref{eq:type_prob}) is
modified accordingly, and it is easy to see that the corresponding
combinatorial characterization is given by the set of sequences with a given
$\tcounts(x^n)$ and fixed $x_1^k$. It can also be shown that, with the
addition of $\alpha^k-1$ free parameters (those corresponding to the
distribution on the initial state), the type probabilities remain linearly
independent, as required by the proof of Lemma~\ref{lem:condition-fv}. It
follows that the modified \FVR\ is optimal among \FVRs\ that are perfect for
every $P \in \Pclassk$ and every initial state distribution.
\end{remark}

\subsection{Twice-universal \FVRs}
\label{sec:twiceuniversal} In this subsection, we assume that the order $k$
of the Markov source is not known, yet we want to produce a universal \FVR\
whose model cost is not larger (up to third order terms) than the one we
would incur had the value of $k$ been given. To this end, as mentioned in
Section~\ref{sec:intro}, we need to relax our requirement of a uniformly
distributed output. This is necessary since, by Theorem~\ref{thm:exp len
simple} and~(\ref{eq:Elogtype}), an \FVR\ that is universal in $\Pclassk$
would incur a model cost of the form $(K/2)\log n$, with $K = K(k) =
\alpha^k(\alpha-1)$, for any process in the class, including those of orders
$k'<k$, which form a subclass of $\Pclassk$. However, for such a subclass, we
aspire to achieve a smaller model cost proportional to $K(k')$.\footnote{Of
course, application of Procedure~\EliasII\ with $k$ replaced with a slowly
growing function of $n$ leads, for $n$ sufficiently large, to a perfect \FVR\
for any (fixed, but arbitrary) Markov order. However, the model cost incurred
does not meet our efficiency demands.} We assume throughout that $\target$ is
$c$-dense and that the fixed string determining the initial state is as long
as needed (e.g., a semi-infinite all-zero string).

Let  $Q_M (r)$ denote the output probability of $r{\in}\setM$, $M\in\target$,
conditioned on $\Mmap(x^n){=}M$, for an \FVR\ $\fvr=(\target,\Rmap,\Mmap)$.
The distance of $\fvr$ to uniformity is measured by
\begin{equation} \label{eq:distortion}
D(\fvr) \defined \sum_{M \in \target} \frac{P(\Mmap(X^n)=M)}{M}
\!\!\! \sum_{r,r' \in \setM} \!\! \left | Q_M(r) - Q_M(r') \right | \,.
\end{equation}
For any distribution $R(\cdot)$ with support $\mathcal{B}$, we have
\begin{align*}
\sum_{x \in \mathcal{B}} \left| R(x) - \frac{1}{|\mathcal{B}|} \right|
&= \frac{1}{|\mathcal{B}|}
\sum_{x\in\mathcal{B}}\bigg|\sum_{y\in\mathcal{B}}\big(R(x)-R(y)\big)\bigg|\\
&\le \frac{1}{|\mathcal{B}|} \sum_{x,y \in \mathcal{B}} \bigg| R(x) - R(y) \bigg| \,.
\end{align*}
In particular, the inner summation in~(\ref{eq:distortion}) is lower-bounded
by $M \sum_{r \in \setM} | Q_M(r) - 1/M |$. Therefore, our measure of
uniformity is more demanding than the weighted $L_1$ measure used
in~\cite{VembuVerdu95}. Notice that, as in~\cite{VembuVerdu95}, the
measure~(\ref{eq:distortion}) is \emph{unnormalized}. We aim at \FVRs\ for
which $D(\fvr)$ vanishes exponentially fast with $n$.

As in~\cite{MartinMSW10}, our twice-universal \FVR\ will rely on the
existence of Markov order estimators with certain consistency properties,
which are specified in Lemma~\ref{lemma:est} below. For concreteness, we will
focus on a penalized maximum-likelihood estimator that, given a sample $x^n$
from the source, chooses order $k(x^n)$ such that
\begin{equation} \label{eq:bic}
k(x^n) = \arg\min_{k \ge 0} \left\{\, \hat{H}_k (x^n) + \alpha^k \varphi(n) \,\right\}
\end{equation}
where $\hat{H}_k (x^n)$ denotes the $k$-th order empirical conditional
entropy for $x^n$, $\varphi(n)$ is a vanishing function of $n$, and ties are
resolved, e.g., in favor of smaller orders. For example, $\varphi(n)
\,{=}\,(\alpha\,{-}\,1)(\log n)/(2n)$ corresponds to the asymptotic version
of the MDL criterion~\cite{Riss84}. The estimate $k(x^n)$ can be obtained in
time that is linear in $n$ by use of suffix trees~\cite{BB03,MartSerWein04}.
The set of $n$-tuples $x^n$ such that $k(x^n) \,{=}\,i$ will be denoted
$\alp_i^n$. To state Lemma~\ref{lemma:est} we define, for a distribution $P
\,{\in}\, \Pclassk$, the overestimation probability
\[
\Pover (n) \defined P(k(X^n) > k)
\]
and, similarly, the underestimation probability
\[
\Punder (n) \defined P(k(X^n) < k) \,.
\]
Since the discussions in this subsection involve type classes of varying
order, we will use the notation $\typek{k}$ to denote the type class of $x^n$
with respect to $\Pclassk$.
\begin{lemma}[\cite{MartinMSW10}] \label{lemma:est}
For any $k \ge 0$ and any $P \,{\in}\, \Pclassk$, the estimator
of~(\ref{eq:bic}) satisfies
\begin{itemize}
\item [(a)] $(n+1)^{\alpha^{k+1}} \Pover(n)$ vanishes polynomially fast
    (uniformly in $P$ and $k$) provided $\varphi(n) \,{>}\, \beta (\log
    n)/n$ for a sufficiently large constant $\beta$.
\item [(b)] $\Punder(n)$ vanishes exponentially fast.
\end{itemize}
\end{lemma}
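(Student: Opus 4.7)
The plan is to handle both parts via the method of types, using the maximum-likelihood identity $\hat{H}_j(x^n) = -(1/n)\log \hat{P}_j(x^n)$, where $\hat{P}_j$ denotes the $j$-th order empirical (maximum-likelihood) probability assignment for $x^n$.

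For part (a), the event $k(X^n) = k'$ with $k' > k$ forces $\hat{H}_k(X^n) - \hat{H}_{k'}(X^n) \ge (\alpha^{k'} - \alpha^k)\varphi(n)$. Since any $P \in \Pclassk$ can be viewed as a $k'$-th order process, all sequences in a $k'$-th order type class $\typek{k'}$ share the same $P$-probability, which is upper-bounded by $\hat{P}_k(x^n) = 2^{-n\hat{H}_k(x^n)}$ by the ML property. The identity $\sum_{y^n \in \typek{k'}} \hat{P}_{k'}(y^n) \le 1$, combined with $\hat{P}_{k'}$ being constant on $\typek{k'}$, yields $|\typek{k'}| \le 2^{n\hat{H}_{k'}(x^n)}$, so
\[
P(\typek{k'}) = |\typek{k'}| \cdot P(x^n) \le 2^{-n[\hat{H}_k(x^n) - \hat{H}_{k'}(x^n)]} \le 2^{-n(\alpha^{k'}-\alpha^k)\varphi(n)}.
\]
Summing over the at most $(n+1)^{\alpha^{k'+1}}$ such type classes, multiplying by $(n+1)^{\alpha^{k+1}}$, and substituting $\varphi(n) \ge \beta(\log n)/n$ gives
\[
(n+1)^{\alpha^{k+1}} P(k(X^n){=}k') \le (n+1)^{\alpha^{k+1} + \alpha^{k'+1} - \beta(\alpha^{k'}-\alpha^k)}.
\]
Choosing $\beta$ large enough so that every exponent on the right is strictly negative (the binding case being $k' = k+1$), and summing over $k' > k$, yields the desired polynomial decay, uniformly in $P$ and $k$.

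For part (b), $k(X^n) < k$ forces, for some $k' < k$, the inequality $\hat{H}_{k'}(X^n) - \hat{H}_k(X^n) \le (\alpha^k - \alpha^{k'})\varphi(n)$. Because $P$ has strictly positive conditionals (i.e., $\bldp\in\pDomain$), the $k'$-th order conditional entropy of $P$ strictly exceeds its $k$-th order counterpart, with a positive gap $\delta = \delta(P, k') > 0$. As $\varphi(n) \to 0$, the event is eventually contained in $\{\hat{H}_{k'}(X^n) - \hat{H}_k(X^n) \le \delta/2\}$, which in turn is contained in a union of $k'$-th order type classes whose empirical transition counts deviate from the true ones by a positive constant. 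A standard method-of-types large-deviations estimate shows that the $P$-probability of this union decays as $2^{-nc}$ for some $c = c(P,k)>0$, even after multiplication by the polynomial number $(n+1)^{\alpha^{k+1}}$ of type classes; hence $\Punder(n)$ vanishes exponentially.

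The main technical obstacle is the tuning of $\beta$ in part (a): one must verify that the exponent $\alpha^{k+1} + \alpha^{k'+1} - \beta(\alpha^{k'}-\alpha^k)$ is strictly negative for every $k' > k$, and that summing over $k'$ preserves polynomial decay. Inspection shows the worst case is $k' = k+1$, and any $\beta > \alpha(\alpha+1)/(\alpha-1)$ suffices, yielding uniformity in $k$. Part (b) is mostly routine once the strict positivity of the entropy gap is invoked.
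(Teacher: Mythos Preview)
The paper does not prove this lemma; it is imported from \cite{MartinMSW10} (note the bracketed citation in the lemma header), so there is no in-paper argument to compare against. Your method-of-types proof is the standard one for penalized maximum-likelihood order estimators and is essentially what appears in that reference.

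Two minor points. In part~(a), the sum over $k'>k$ has only $O(\log_\alpha n)$ nonzero terms, since $k'$ can be a minimizer of~(\ref{eq:bic}) only if $\alpha^{k'}\varphi(n)\le\hat H_0(x^n)+\varphi(n)\le\log\alpha+\varphi(n)$; each term is dominated by the $k'=k{+}1$ term, so the polynomial decay survives the summation as you claim. In part~(b), the strictly positive entropy gap $\delta(P,k')>0$ that you invoke requires $P$ to be of order \emph{exactly}~$k$. Your justification ``because $\bldp\in\pDomain$'' does not establish this: positivity of all conditional probabilities does not preclude $P\in\Pclass_{k'}$ for some $k'<k$ (an i.i.d.\ source with full support lies in every $\Pclassk$, yet has zero gap). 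The exact-order reading is the one intended in both this paper and \cite{MartinMSW10}, but you should state that hypothesis explicitly rather than claim it follows from $\bldp\in\pDomain$.
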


Following~\cite{MartinMSW10}, we consider a partition of $\alp^n$ in which
the class of $x^n$, denoted $U(x^n)$, is given by
\begin{equation} \label{eq:tuclass}
U(x^n) \defined \typek{k(x^n)} \cap \alp_{k(x^n)}^n \,.
\end{equation}
Thus, two sequences are in the same class if and only if they estimate the
same Markov order and are in the same type class with respect to the
estimated order. Our twice-universal \FVR, $\tufvr =
(\target,\Rmaptu,\Mmaptu)$, is given by replacing, in Procedure \EliasII,
$\typek{k}$ with $U(x^n)$ and $\indx_T(x^n)$ with the index of $x^n$ in an
enumeration of $U(x^n)$.

\begin{theorem}\label{thm:TUElias}
For $P\in\Pclassk$, the \FVR\ $\tufvr$ satisfies $D(\tufvr)\le 2\Punder(n)$,
and, for a suitable choice of $\varphi(n)$ in~(\ref{eq:bic}), its expected
output length $\lenpn(\Mmaptu)$ satisfies
\begin{equation} \label{eq:tuout}
\lenpn(\Mmaptu)-\lenpn(\Mmaps) = O(1)
\end{equation}
provided $\target$ is $c$-\dense.
\end{theorem}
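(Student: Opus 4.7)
My plan is to handle the two assertions separately. The uniformity bound follows from the equiprobability of sequences within each refined class $U(x^n)$ whenever $k(x^n)\ge k$, and the expected length bound reduces to comparing $\Exp_P\log|U(X^n)|$ with $\Exp_P\log|\typek[X^n]{k}|$, with the comparison split according to the value of $k(X^n)$ and the correct-estimation case being the main combinatorial step.

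For the uniformity, the key observation is that when $k(x^n)\ge k$, $U(x^n)\subseteq\typek{k(x^n)}\subseteq\typek{k}$, since a $k'$-type with $k'\ge k$ refines a $k$-type (the $k$-counts are determined by the $k'$-counts). Thus sequences in any such $U(x^n)$ are equiprobable under any $P\in\Pclassk$, and running Procedure~\EliasII\ inside $U(x^n)$ therefore assigns the same probability mass to every pair $(r,M)$ arising from that class. Writing $Q_M(r)=Q_M^{\mathrm{g}}(r)+Q_M^{\mathrm{u}}(r)$, where the first term collects contributions from sequences with $k(x^n)\ge k$, the above shows $Q_M^{\mathrm{g}}(r)$ is independent of $r$, so $|Q_M(r)-Q_M(r')|\le Q_M^{\mathrm{u}}(r)+Q_M^{\mathrm{u}}(r')$. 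Substituting into~\eqref{eq:distortion} and telescoping gives $D(\tufvr)\le 2P(k(X^n)<k)=2\Punder(n)$.

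For the expected length, the same $c$-density argument as in Theorem~\ref{thm:optFV}, now applied inside each class $U(x^n)$ rather than inside a type class, together with Lemma~\ref{lem:chcinv}, yields
\[
\Exp_P\log|U(X^n)|-O(1)\;\le\;\lenpn(\Mmaptu)\;\le\;\Exp_P\log|U(X^n)|.
\]
Since $\lenpn(\Mmaps)=\Exp_P\log|\typek[X^n]{k}|$ by Theorem~\ref{thm:exp len simple}, it suffices to show $\Delta\defined\Exp_P\log(|\typek[X^n]{k}|/|U(X^n)|)=O(1)$. I split $\Delta$ by the events $\{k(X^n)=k\}$, $\{k(X^n)>k\}$, $\{k(X^n)<k\}$. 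For the correct-estimation event, letting $\rho(T)=|T\cap\alp_k^n|/|T|$ for each $k$-type $T$, the equiprobability of sequences in $T$ under $P$ gives $P(X^n\in T,\,k(X^n)=k)=P(T)\rho(T)$, and since $|U(x^n)|=\rho(T)|T|$ for $x^n\in T\cap\alp_k^n$, the contribution equals $\sum_T P(T)\rho(T)(-\log\rho(T))\le (e\ln 2)^{-1}\sum_T P(T)=O(1)$ via the elementary bound $-x\log x\le (e\ln 2)^{-1}$ on $[0,1]$. For the overestimation event, still $U(x^n)\subseteq\typek{k}$, so the integrand is nonnegative and bounded by $n\log\alpha$; choosing $\varphi(n)$ so that $(n+1)^{\alpha^{k+1}}\Pover(n)\to 0$ via Lemma~\ref{lemma:est}(a) makes the contribution $o(1)$. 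For the underestimation event, $|\log(|\typek[X^n]{k}|/|U(X^n)|)|\le n\log\alpha$ combined with the exponential decay of $\Punder(n)$ from Lemma~\ref{lemma:est}(b) again gives $o(1)$.

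The main obstacle is the correct-estimation term: pointwise, the loss $-\log\rho(T)$ can be arbitrarily large for type classes $T$ with atypically small $\rho(T)$, so no pointwise estimate on $\rho(T)$ is available, and Lemma~\ref{lemma:est} only controls the average $1-\rho$. The elementary trick $-\rho\log\rho\le (e\ln 2)^{-1}$ resolves this by observing that each such $T$ is weighted in the expectation by $P(T)\rho(T)$, a product that is universally bounded in $\rho$ and small precisely when the pointwise loss grows, so no finer-grained knowledge of the estimator is needed.
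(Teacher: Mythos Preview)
Your proof is correct. The uniformity argument is essentially the same idea as the paper's---equiprobability inside $U(x^n)$ when $k(x^n)\ge k$---but your decomposition $Q_M(r)=Q_M^{\mathrm g}(r)+Q_M^{\mathrm u}(r)$ is a bit more direct: the paper instead passes through the refined subclasses $\mathcal U_M$ and invokes the auxiliary bound $\sum_{u,v\in\mathcal B}|R(u)-R(v)|\le 2(|\mathcal B|-1)R(\mathcal B)$ (Lemma~\ref{lem:aux}) before restricting to the underestimation classes.

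For the expected-length bound, however, your route is genuinely different. The paper lower-bounds $\lenpn(\Mmaptu)$ by $\sum_{T}P(T\cap\alp_k^n)\log|T\cap\alp_k^n|+O(1)$ and then appeals to a structural result from~\cite{MartinMSW10} (their Lemma~1) stating that, for suitable $\varphi(n)$, if any sequence in a $k$-type class $T$ estimates order $k$ then $|T\cap\alp_k^n|=|T|-o(1)$; this pointwise control on $\rho(T)$ is what closes the gap to $\Exp_P\log|T_k(X^n)|$. You avoid that external lemma entirely: on the correct-estimation event you use only that the event has conditional probability $\rho(T)$ inside $T$, so its weighted loss is $P(T)\rho(T)(-\log\rho(T))\le P(T)/(e\ln 2)$, summing to a universal $O(1)$ regardless of how small individual $\rho(T)$'s may be. This is more elementary and self-contained, at the cost of giving no information about $\rho(T)$ itself; the paper's approach, by contrast, actually establishes that almost every sequence in the type class estimates the correct order, which is a stronger structural statement than what the theorem needs.
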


\begin{remark}
By Lemma~\ref{lemma:est}, Theorem~\ref{thm:TUElias} states that the distance
of $\tufvr$ to uniformity is exponentially small whereas,
by~(\ref{eq:optFVlen}) and~(\ref{eq:Elogtype}), its expected output length is
essentially the same as that of $\fvrs$. It should be pointed out, however,
that Theorem~\ref{thm:TUElias} falls short of stating that the cost of
twice-universality in terms of expected output length is asymptotically
negligible. The reason is that, in principle, it could be the case that by
allowing a small deviation from uniformity, as we do, we open the door for
schemes that (with knowledge of $k$) produce an output significantly longer
than $\fvrs$. We conjecture that, just as in twice-universal
simulation~\cite{MartinMSW10}, this is not the case.
\end{remark}

\begin{remark}
One problem in the implementation of $\tufvr$ is that it requires a
computationally efficient enumeration of $U(x^n)$. Such an enumeration
appears to be elusive. Instead, the following \FVR\ can be efficiently
implemented: Compute $k(x^n)$ and apply Procedure~\EliasII\ with
$k{=}k(x^n)$. A variant of the proof of Theorem~\ref{thm:TUElias} shows that
the output length of this scheme still satisfies~(\ref{eq:tuout}), whereas
its distance to uniformity is upper-bounded by $4 [\Punder (n) {+} \Pover
(n)]$. By Lemma~\ref{lemma:est}, this means that a suitable choice of
$\varphi(n)$ still guarantees vanishing distance, but we can no longer claim
it to be exponentially small.
\end{remark}

\begin{proof}[Proof of Theorem~\ref{thm:TUElias}]
Let $\calU$ denote the set of classes in the refinement of the
partition~(\ref{eq:tuclass}) determined by Procedure \EliasII\ (see
Remark~\ref{rem:minitypes}), and let $\calU_M$ denote the subset of $\calU$
formed by classes of size $M \in \target$. For $U \in \calU_M$, let
$\rho_U^{-1} (r)$ denote the unique sequence in $U$ such that
$\Rmaptu(\rho_U^{-1} (r)) = r$, $r \in \setM$. Let $Q(r,M)$ denote the
probability that $\Mmaptu(x^n)=M$ and $\Rmaptu(x^n)=r$, $M \in \target$, $r
\in \setM$, so that
\[
Q_M(r) = \frac{Q(r,M)}{\sum_{j \in \setM} Q(j,M)} =
\frac{Q(r,M)}{P(\Mmap(X^n)=M)}\,.
\]
 Clearly,
\[
Q(r,M) = \sum_{U \in \calU_M} P(\rho_U^{-1} (r)) \,.
\]
By~(\ref{eq:distortion}),
\begin{eqnarray*}
D(\tufvr) & \!\!\!\!\! =& \!\!\!\!\! \sum_{M \in \target} \! \frac{1}{M} \!
\sum_{r,r' \in \setM} | Q(r,M) - Q(r',M) | \\
&\!\!\!\!\!\le& \!\!\!\!\! \sum_{M \in \target} \! \frac{1}{M} \! \sum_{U \in \calU_M}
\sum_{r,r' \in \setM} | P(\rho_U^{-1} (r)) \!-\! P(\rho_U^{-1} (r')) |
\end{eqnarray*}
which, given the existence of a one-to-one correspondence between $U \in
\calU_M$ and $\setM$, takes the form
\begin{equation} \label{eq:multsum}
D(\tufvr) \le \sum_{M \in \target} \frac{1}{M} \sum_{U \in \calU_M} \sum_{u,v \in U}
| P(u) - P(v) | \,.
\end{equation}
Now, since $U$ is a subset of a type class $T \in \typesk{k(x^n)}$, we have
$P(u)=P(v)$ for all $u,v \in U$ whenever $k(x^n) \ge k$. In addition, we have
the following lemma, which is proved in Appendix~\ref{app:aux}.
\begin{lemma} \label{lem:aux}
For any distribution $R(\cdot)$ on a set containing $\mathcal{B}$, we have
\[
\sum_{u,v \in \mathcal{B}} | R(u) - R(v) | \le 2 (|\mathcal{B}|-1) R(\mathcal{B})\,.
\]
\end{lemma}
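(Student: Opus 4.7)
My plan is to bound each summand $|R(u)-R(v)|$ pointwise using the nonnegativity of $R$, and then do a simple counting of off-diagonal pairs.

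The key elementary observation I would use is that for any nonnegative reals $a,b$, one has $|a-b| \le a+b$ (since $a-b \le a$ and $b-a \le b$). Applying this termwise,
\[
\sum_{u,v \in \mathcal{B}} |R(u)-R(v)| \;\le\; \sum_{u,v \in \mathcal{B}} \bigl(R(u)+R(v)\bigr).
\]
The diagonal terms $u=v$ contribute $0$ to the left-hand side, so the sum on the right can be restricted to $u \ne v$ without loss. For each fixed $u \in \mathcal{B}$ there are $|\mathcal{B}|-1$ off-diagonal choices of $v$, and symmetrically for $v$, so
\[
\sum_{\substack{u,v \in \mathcal{B}\\u\ne v}} \bigl(R(u)+R(v)\bigr) \;=\; 2(|\mathcal{B}|-1)\,R(\mathcal{B}),
\]
which is precisely the bound claimed by the lemma.

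I don't foresee any real obstacle. The main point to get right is the convention for the double sum: the display immediately preceding the lemma, and the setting of~(\ref{eq:distortion}), both treat such sums as ranging over ordered pairs, which is what the counting step above uses. As a sanity check, the bound is tight when $R$ is concentrated on a single element of $\mathcal{B}$ (each of the $2(|\mathcal{B}|-1)$ ordered pairs involving that element then saturates $|a-b|\le a+b$), so the constant $2(|\mathcal{B}|-1)$ cannot be improved in general. An alternative route, should a cleaner derivation be preferred, would be to sort the values $r_1 \ge r_2 \ge \cdots \ge r_{|\mathcal{B}|}$, rewrite the left-hand side as $2\sum_{i<j}(r_i-r_j)$, note by standard counting that this equals $2\sum_k r_k\bigl(|\mathcal{B}|-2k+1\bigr)$, and bound each coefficient by $|\mathcal{B}|-1$; this yields the same estimate.
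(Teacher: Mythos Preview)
Your proof is correct and more direct than the paper's. The paper proceeds by induction on $|\mathcal{B}|$: it removes an element $u_0$ of maximum probability, bounds the sum over $\mathcal{B}\setminus\{u_0\}$ via the induction hypothesis, handles the cross terms $\sum_{v\ne u_0}\bigl(R(u_0)-R(v)\bigr)$ explicitly, and combines. Your argument bypasses all of this with the single pointwise inequality $|R(u)-R(v)|\le R(u)+R(v)$ and a count of off-diagonal ordered pairs. Both arrive at the same constant, and your observation that equality holds when $R$ is concentrated on one element of $\mathcal{B}$ confirms the bound is sharp; the inductive route offers no additional information here, so your approach is simply the cleaner one.
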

Therefore, letting $\calU_M^{\rm{u/e}}$ denote the subset of $\calU_M$ formed
by all the classes such that $k(x^n) < k$, (\ref{eq:multsum}) implies
\[
D(\tufvr) \le \sum_{M \in \target} \frac{2(M-1)}{M} \sum_{U \in \calU_M^{\rm{u/e}}}
P(U) \le 2 \Punder (n) \,,
\]
as claimed.

To lower-bound the expected output length, we first discard the output length
produced by sequences which are not in $\alp_k^n$. We then note that the
claim of Theorem~\ref{thm:optFV} is valid not only for expectations
conditioned on a type (as implicit in its proof, see~(\ref{eq:L(T)})), but
also when conditioning on subsets of types, thus obtaining
\[
\lenpn(\Mmaptu) \ge \sum_{T \in \typesk{k}} P(T \cap \alp_k^n)
\log |T \cap \alp_k^n| + O(1) \,.
\]
By~\cite[Lemma 1]{MartinMSW10}, the number of sequences in a type class $T$
that estimate order $k$ is $|T|-o(1)$ for suitable choices of $\varphi(n)$,
provided that at least one sequence in $T$ estimates order $k$ (i.e., almost
all the sequences in the type class estimate the right order). Therefore,
\begin{align*}
\lenpn(\Mmaptu) \ge  \; \Exp_P &\log |T_k(X^n)|\\
&  - n \left[\Punder (n) + \Pover (n)\right]
\log \alpha + O(1) \,,
\end{align*}
where we have also used the trivial bound $\log |T(x^n)| \le n \log \alpha$
for sequences $x^n$ outside $\alp_k^n$. The claim then follows from
Lemma~\ref{lemma:est}.
\end{proof}


\section{Universal \vartofixlen\ RNGs}
\label{sec:universal VFRs}

\subsection{Formal definition and preliminaries}
\label{sec:definition_vf}

We recall from Subsection~\ref{sec:strings and trees} that a dictionary is a
(possibly infinite) prefix-free set of finite strings $\dict
\subseteq\alp^\ast$, which we assume \complete. A \VFR\ is formally defined
by a triplet $\vfr = (\dict,\xmap,M)$ where $\dict$ is a dictionary, $M > 1$
is a fixed integer, and $\xmap$ is a function $\xmap:\dict\to\fixalp$. For
$N\ge 1$, the restriction to level $N$ of $\dict$ is
\[
\dictn = \{\,x^n\in\dict\,|\,n\le N\,\}\,.
\]
Associated with $\dictn$ is a \emph{failure set} $\failn$, defined as
\[
\failn = \left\{\,x^N\in\alp^N\;\bigbar\;
x^N \;\text{has no prefix in }\;\dictn\,\right\}\,.
\]
The strings in $\dictn\cup\failn$ are identified with the leaves of a finite
\complete\ tree, which is the truncation to depth $N$ of the tree, $\treed$,
corresponding to $\dict$. Nevertheless, we will slightly abuse terminology,
and refer to $\dictn$ (alone) as a \emph{truncated dictionary}. Notice that
$\bigcup_{N \ge 1} \failn$ corresponds to the set of internal nodes $\idict$
of $\treed$, whereas we recall that $\dict$ corresponds to the set of leaves
of $\treed$.

The \VFR\ $\vfr$ generates random numbers from a process $P$ by reading
symbols from a realization of $P$ until a string $x^n$ in $\dict$ is reached,
at which point $\vfr$ outputs $\xmap(x^n)$. The \emph{truncated
\VFR} (\TVFR) $\vfrn = (\dictn, \xmap, M)$, operates similarly, except that
it restricts the length of the input string to $n\le N$, so that $\xmap$ is
applied only to strings in $\dictn$, and the input may reach strings
$x^N\in\failn$, in which case the \TVFR\ \emph{fails} and outputs nothing.

A \VFR\ $\vfr = (\dict,\xmap,M)$ is \emph{perfect} for $P\in\Pclassk$ if for
every $n\ge 1$, either $\dictn[n]$ is empty, or $\xmap(X^n)$, conditioned on
$X^n\in\dictn[n]$, is uniformly distributed in $\fixalp$; $\vfr$ is
\emph{universal} in $\Pclassk$ if it is perfect for all $P\in\Pclassk$. By
extension, we also refer to a \TVFR\ that satisfies the same properties up to
a certain length $N$ as perfect or universal, as appropriate.

Next, we introduce tools that are instrumental in setting our objective. Let
the dictionary $\dict$ satisfy $\sum_{\xs\in\dict} P(\xs) = 1$ for all $P
\in\Pclassk$. Notice that, if $\dict$ is finite, this condition is trivially
satisfied by \completeness; however, as discussed in
Subsection~\ref{sec:strings and trees}, it may not hold for a \complete\
infinite dictionary (for which the summation is understood as an infinite
series in the usual manner), as it was shown in~\cite{Linder-Tarokh-Zeger97}
that the Kraft inequality may be strict. Notice also that the condition is
equivalent to $P(\failn) \stackrel{N \to \infty} {\longrightarrow} 0$. Let
$f$ be a real-valued function of $\xs \in \alp^\ast$. The expectation of $f$
over $\dict$ is denoted $\Exp_{P,\dict}\;f(X^\ast)$ and, in case $\dict$ is
infinite, it is given by
\begin{equation}\label{eq:expf}
\Exp_{P,\dict} \; f(X^\ast) =
\lim_{N \to \infty} \sum_{\xs \in \dictn} P(\xs) f(\xs) \,.
\end{equation}
If $f$ satisfies $0 \le f(y^\ast) \le f(\xs)$ for every prefix $y^\ast$ of
$\xs$ (which is the case for functions such as string length or
self-information), then it is easy to see that, due to the \completeness\ of
$\dict$ and to the vanishing failure probability, we have
\begin{equation}\label{eq:expunion}
\Exp_{P,\dict} \; f(X^\ast) \ge \Exp_{P,\dictn\cup\failn} f(X^\ast)
\end{equation}
for every $N{>}0$, provided the sequence on the right-hand side
of~(\ref{eq:expf}) converges. But, since $\dictn\subseteq\dictn\cup\failn$,
the reverse inequality must hold when we let $N\to\infty$ on the right-hand
side of~(\ref{eq:expunion}). Therefore, the expectation also takes the form
\begin{equation}\label{eq:exptruncf}
\Exp_{P,\dict} \; f(X^\ast) =
\lim_{N \to \infty} \Exp_{P,\dictn\cup\failn} f(X^\ast) \,.
\end{equation}

A useful tool in the analysis of $\Exp_{P,\dict}\;f(X^\ast)$ is the so-called
\emph{leaf-average node-sum interchange theorem} (LANSIT)~\cite[Theorem
1]{Rueppel94}, which states that
\begin{align}
\Exp_{P,\dict}\;& f(X^\ast) \nonumber \\
 &= \sum_{\xs \in \idict} P(\xs) \sum_{a \in \alp}
P(a|\xs) [f(\xs a)-f(\xs)] - f(\lambda) \,.
\label{eq:lansit}
\end{align}
In particular, for $f(\xs)=|\xs|$, the LANSIT reduces to the well-known fact
that the expected depth of the leaves of a tree equals the sum of the
probabilities of its internal nodes. We will use the LANSIT also for
$f(\xs)=1/\log P(\xs)$ and $f(\xs)= n_s (\xs)$, $s \in \alp^k$. The proof of
the theorem, by induction on the number of nodes, is straightforward.

Consider a \VFR\ $\vfr$ that is perfect for $P$. The quantity
\begin{align}
\lenpm(\dictn) &\defined \Exp_{P,\dictn\cup\failn} \; |X^\ast|\nonumber\\
& =
\sum_{n=1}^N \sum_{x^n \in \dictn} n P(x^n) + N P(\failn)
\label{eq:explen}
\end{align}
is an appropriate figure of merit for $\vfr$ at truncation level $N$, as it
measures the \emph{expected input length}, namely the amount of ``raw''
random data that the \VFR\ consumes in order to produce a perfectly uniform
distribution on $\fixalp$, when restricted to inputs of length at most $N$.
The expected input length includes the cost of ``unproductive'' input that is
consumed when the truncated \VFR\ fails (second term on the rightmost side
of~(\ref{eq:explen})). The figure of merit for $\vfr$ is given by
\begin{equation}\label{eq:len_vfr}
\lenpm(\dict) = \lim_{N \to \infty} \lenpm(\dictn)
\end{equation}
which, by~(\ref{eq:exptruncf}), coincides with the expected dictionary length
provided $\sum_{\xs\in\dict} P(\xs) = 1$.\footnote{If, instead, $\sum_{\xs
\in \dict} P(\xs) < 1$, it may be the case that the expected dictionary
length converges (again,~\cite{Linder-Tarokh-Zeger97} provides an example of
such a tree) while, clearly, the expected input length diverges. In this
case, the expected dictionary length is of no interest since, with a positive
probability, an input sample will not have a prefix in the dictionary (i.e.,
the \VFR\ will not stop). }

We are interested in universal \VFRs\ that minimize these measures
simultaneously for all $P\in\Pclassk$, either in a pointwise sense, i.e.,
minimizing $\lenpm(\dictn)$ for all $N$, or asymptotically, i.e., minimizing
the limit $\lenpm(\dict)$. A secondary objective is to minimize the failure
probability $P(\failn)$. Finally, we are interested in \emph{computationally
efficient implementations}, namely, \VFR\ procedures that process the input
sequentially, and run in time polynomial in the consumed input length.

By the LANSIT, we have
\begin{equation}\label{eq:internal}
\lenpm(\dictn) = \sum_{\xs \in \idictn} P(\xs) = \sum_{n=0}^{N-1} P(\failn[n])\,.
\end{equation}
Therefore, the limit in~(\ref{eq:len_vfr}) exists if and only if
$P(\failn[n])$ is summable. We will show that, in fact, the failure
probability in our constructions vanishes exponentially fast, so that the
limit does exist (and equals the expected dictionary length).

In the sequel, we will make extensive use of the following notation: For
$\typet \in \types$ and $\mathcal{S} \subset \alp^\ast$, $\mathcal{S}(\typet)
\defined \mathcal{S} \cap \typet$ (this definition is extended to the case in
which $\mathcal{S}$ is a set of nodes in a tree). In particular, we have
$\idict(T) = \failn[n](\typet)$ and $\treed (T) =
\dictn[n] (T) \cup \failn[n] (T)$.

\subsection{Necessary and sufficient condition for universality of \VFRs}
The analog of Lemma~\ref{lem:condition-fv} for \VFRs\ is given in the
following lemma. The proof is similar, and is presented, for completeness, in
Appendix~\ref{app:condition-vf}.

\begin{lemma}\label{lem:condition-VF}
Let $\vfr = (\dict,\xmap,M)$ be a \VFR\ satisfying the following condition:
For every $n$ and every $\typet\in\types$, the number of sequences
$x^n\in\dict(T)$ such that $\xmap(x^n) = r$ is the same for all $r \in
\fixalp$ (in particular, $|\dict(T)|$ is a multiple of $M$). Then, $\vfr$ is
universal in $\Pclassk$. If $\vfr$ \emph{does not} satisfy the condition,
then it can only be perfect for processes $P$ with parameter $\bldp$ in a
subset $\pDomain'_0$ of measure zero in $\pDomain$.
\end{lemma}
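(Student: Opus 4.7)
The plan is to adapt the argument of Lemma~\ref{lem:condition-fv} essentially verbatim, but now applied length-by-length along the dictionary tree. The key ingredient to import is the linear independence, as polynomial functions of $\bldp$, of the type-class probabilities $\{\pi_T\}_{T\in\types}$ at each fixed $n$, which is already established in the proof of Lemma~\ref{lem:condition-fv}.

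For sufficiency, I fix an arbitrary $n\ge 1$ with $\dictn[n]\ne\emptyset$ and partition the length-$n$ dictionary strings by type. Since all sequences in a single type class $T\in\types$ share a common probability $\pi_T$ under every $P\in\Pclassk$, the joint probability that the \VFR\ stops at length $n$ with output $r$ can be written as
\[
\Pr\bigl(X^n\in\dict,\,\xmap(X^n)=r\bigr)
\;=\;\sum_{T\in\types} \pi_T\,\bigl|\{\,x^n\in\dict(T):\xmap(x^n)=r\,\}\bigr|.
\]
The hypothesis renders each inner cardinality independent of $r$, so the whole expression is independent of $r$; normalizing by $\Pr(X^n\in\dictn[n])$ gives the required conditional uniformity.

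For necessity, suppose the condition fails at some length $n_0$ and some type $T_0\in\types[n_0]$, witnessed by $r,r'\in\fixalp$ with $a_{T_0}(r)\ne a_{T_0}(r')$, where $a_T(\cdot)$ denotes the cardinality that appeared in the display above. Perfection at length $n_0$ would then force
\[
\sum_{T\in\types[n_0]} \pi_T\,\bigl(a_T(r)-a_T(r')\bigr)=0.
\]
Each $\pi_T$ is a monomial in the conditional probabilities parameterizing $\bldp$, and by the linear independence result from the proof of Lemma~\ref{lem:condition-fv} the family $\{\pi_T\}_{T\in\types[n_0]}$ is linearly independent over $\pDomain$. Since the $T_0$-coefficient is nonzero, the relation is a nontrivial polynomial equation in $\bldp$, so its zero set is a proper algebraic subvariety of $\pDomain$ and hence has Lebesgue measure zero. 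Taking $\pDomain'_0$ to be the (at most countable) union, over all $n$ at which the condition fails, of these measure-zero sets yields the claim.

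The main obstacle I anticipate is ensuring the linear independence step transfers without modification. Because the partition $\types$ is the same object used in the FVR analysis, and the argument at each fixed length $n$ is insensitive to whether the sequences serve as full inputs or as stopping prefixes (only the ambient event $\{X^n\in\dict\}$ differs, which merely restricts which cardinalities $a_T(r)$ are nonzero), I expect to be able to quote the independence step of Lemma~\ref{lem:condition-fv} verbatim, so that the bulk of the proof is then a short recycling of that argument.
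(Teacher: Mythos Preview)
Your approach is essentially the paper's, but the necessity half has a gap in how you isolate the level-$n_0$ equation. Perfection, as defined here, means that for every truncation level $N$ the quantity $\sum_{n\le N}\sum_{T\in\types}\pi_T\,a_T(r)$ is independent of $r$; it does \emph{not} directly give you the single-level identity $\sum_{T\in\types[n_0]}\pi_T\,(a_T(r)-a_T(r'))=0$ that you write down. You need one more step: either (i) choose $n_0$ to be the \emph{smallest} length at which the condition fails, so that the contributions from levels $n<n_0$ satisfy the hypothesis and cancel in $P(r,n_0)-P(r',n_0)$ (this is exactly what the paper does), or (ii) observe that perfection at levels $n_0$ and $n_0-1$ together imply, by subtraction, exact-length uniformity at $n_0$. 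Without one of these, your displayed constraint does not follow from perfection.

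A smaller issue: in your sufficiency paragraph, the displayed formula is the probability of stopping at \emph{exactly} length $n$, but you then normalize by $\Pr(X^n\in\dictn[n])$, which in the paper's notation is the probability of stopping at some length $\le n$. Either sum your display over $m\le n$ before normalizing (matching the paper's argument), or normalize by the exact-length probability and then note that exact-length uniformity at every $n$ implies uniformity at every truncation level by summation. Once these two points are addressed, your argument and the paper's coincide.
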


An analog of Corollary~\ref{cor:universal=perfect} for the \VFR\ case is also
straightforward. Notice that if $|\dict(T)|$ is a multiple of $M$, then it is
trivial to define $\xmap$ so that $\vfr$ satisfies the condition of the
lemma. Therefore, designing a universal \VFR\ is essentially equivalent to
designing a dictionary $\dict$ such that
\begin{equation}\label{eq:condition-VF}
|\dict(\typet)| = \jT M\,,\;\;\forall T\in\types,\;\forall n\in\posints\,,
\end{equation}
where $\jT$ is a nonnegative integer dependent on $\typet$. In fact, in our
discussions, we will focus on the condition~(\ref{eq:condition-VF}) and
assume that a suitable mapping $\xmap$ is defined when the condition holds.

\begin{remark}\label{rem:even}
Lemma~\ref{lem:condition-VF} implies that our universal \VFRs\ are akin to
the \emph{even procedures} discussed in~\cite{Hoeffding-Simons70}
and~\cite{Stout-Warren84} (the term \emph{even} derives from the fact that
the emphasis in~\cite{Hoeffding-Simons70} is on the case $M=2$, although the
more general case is also mentioned). In our case, the necessity of the
condition~(\ref{eq:condition-VF}) stems from our requirement that the \VFR\
be perfect at every truncation level $N$. When this requirement is relaxed,
the condition need no longer hold, as evidenced by some of the procedures
presented in~\cite{Hoeffding-Simons70} and \cite{Stout-Warren84}. As we will
see, such a relaxation may reduce the expected input length of the optimal
\VFR\ only by a negligible amount relative to the main asymptotic term
(see also Example~\ref{ex:ex1} below).
\end{remark}

\begin{remark}\label{rem:freedom}
Notice that the condition on universality in Lemma~\ref{lem:condition-VF}
depends only on the \emph{sizes} of the sets $\dict(\typet)$, but not on
their composition. Clearly, the same holds for the expected length and the
failure probability of a (\truncatedD) dictionary, since sequences of the
same type have the same length and probability.  We conclude that the main
properties of interest for a \VFR\ are fully determined by the \emph{type
profile} of its dictionary, namely, the sequence of numbers
$\left\{\,|\dict(\typet)|\,\right\}_{\typet\in\types,\;n\ge 1}$.
\end{remark}

Define
\begin{equation}\label{eq:N0}
N_0(M) = \min\left\{\,n\,\bigbar\,\exists\,
\typet{\,\in\,}\types \text{ such that } |\typet|\ge M\,\right\}.
\end{equation}

An immediate consequence of Lemma~\ref{lem:condition-VF} is that if $\dict$
is the dictionary of a universal \VFR, then $n\ge N_0(M) \ge (\log
M)/(\log\alpha)$ for every $x^n\in\dict$, where the second inequality follows
from~(\ref{eq:N0}) and $|T|\le\alpha^n$.

\subsection{Optimality of a ``greedy'' universal \VFR}
\label{sec:optimal}

We describe the (conceptual) construction of a universal \VFR, and prove its
optimality. The construction is ``greedy'' in the sense that, at every point,
it tries to add to the dictionary as many sequences of a given length as
allowed by the necessary condition of Lemma~\ref{lem:condition-VF}, and by
the prefix condition. In this sense, the procedure can be seen as a
counterpart, for \VFRs, to Elias's scheme for \FVRs\ (recall the discussion
on the ``greediness'' of Procedure \EliasII\ in Subsection~\ref{sec:variation
Elias}). As in the \FVR\ case, it will turn out that greediness pays off, and
the constructed \VFR\ will be shown to be optimal in a pointwise,
non-asymptotic sense. The difficulty in establishing this optimality will
reside in the fact that sequences that get included in $\dict$ ``block'' all
of their continuations from membership in $\dict$. It seems possible, in
principle, that it might pay off not to include some sequences of a given
length, even though the conditions governing the construction allowed their
inclusion, so as to increase our choices for longer sequences. We will prove
that, in fact, this is not the case.

\newcommand{\GreedyI}{G1}
\begin{figure}
\begin{figprocedure}
{Integers $M\ge2,\;N\ge 1$.}%
{\TVFR\ $\vfrns=(\dictns,\xmapst,M)$.}

\item Set $n = 1$, $\dictns = \emptyset$, $\failns = \alp$.
\item\label{step:loop_types} For each type class $T\in\types$, do:
\begin{enumerate}
\item\label{step:divide} Let $\jT =
    \left\lfloor\medrule|\failns(T)|/\fixsize\right\rfloor$. Select
    any subset of $\jT\fixsize$ sequences from $\failns(T)$, add them to
    $\dictns$, and remove them from $\failns(T)$.

\item\label{step:map} Let $\indx(y^n)$ denote the index of $y^n
    \in\dictns(T)$ in some ordering of $\dictns(T)$. Define
\[
\xmapst(y^n) = \indx(y^n)\,\bmod\, \fixsize\,,\quad y^n\in\dictns(T).
\]
\end{enumerate}
\item \label{step:stop} If $n = N$, \textbf{stop}. Otherwise, for
    each sequence $x^n \in \failns$, remove $x^n$ and add all the
    sequences in $\{ x^n a,\, a \in \alp \}$ to $\failns$. Set
    $n \leftarrow n+1$ and go to Step~\ref{step:loop_types}.
\end{figprocedure}
\caption{\label{figproc:greedy}Procedure \GreedyI: Greedy \TVFR\
construction.}
\end{figure}

Procedure \GreedyI\ in \figurename~\ref{figproc:greedy} shows the
construction of a \TVFR\ $\vfrns=(\dictns,\xmap,M)$. The \VFR\
$\vfrs=(\dicts,\xmap,M)$ is then obtained by letting $\dicts=\bigcup_{N\ge 1}
\dictns$. The procedure starts from an empty dictionary, and adds to it
sequences of increasing length $n=1,2,3,\ldots$, sequentially, so that for
each type class $\typet\in\types$, it ``greedily'' augments $\dicts$ with the
largest possible number of sequences in $\typet$ that is a multiple of $M$
and such that these sequences have no prefix in $\dicts$.
The procedure is presented as a characterization of $\vfrs$, rather than as a
computational device. An effective, sequential implementation of $\vfrs$ will
be presented in Subsection~\ref{sec:sequential}.

\begin{theorem}\label{lem:valid}
The \TVFR\ $\vfrns=(\dictns,\xmapst,M)$ constructed by Procedure \GreedyI\ is
universal.
\end{theorem}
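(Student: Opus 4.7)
My plan is to apply Lemma~\ref{lem:condition-VF} directly: it suffices to verify that Procedure \GreedyI\ produces (i) a well-defined prefix-free dictionary and (ii) for every $n\le N$, every $T\in\types$, and every $r\in\fixalp$, the number of sequences $x^n\in\dictns(T)$ with $\xmapst(x^n)=r$ is the same across all $r$.

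For (i), I would argue by induction on $n$ that, at the start of each pass through Step~\ref{step:loop_types}, the set $\failns$ coincides with the set of strings in $\alp^n$ that have no prefix in $\dictns$, and that $\dictns$ is prefix-free. The base case $n=1$ is clear since $\dictns=\emptyset$ and $\failns=\alp$. For the inductive step, Step~\ref{step:divide} moves a subset of $\failns(T)$ from $\failns$ into $\dictns$, which preserves prefix-freeness (those strings had no prefix in $\dictns$, and no already-existing dictionary element can have one of them as a prefix since it has strictly smaller length). Step~\ref{step:stop} then replaces each remaining string $x^n\in\failns$ by its $\alpha$ children, so the new $\failns$ is exactly the set of length-$(n+1)$ strings whose length-$n$ prefix lies in the old $\failns$, i.e., the set of length-$(n+1)$ strings with no prefix in $\dictns$.

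For (ii), note that Step~\ref{step:divide} adds exactly $\jT\fixsize$ sequences of $T$ to $\dictns$ at length $n$; moreover, by Fact~\ref{fact:same_sn}, sequences of different type classes cannot share the same length, so the running dictionary $\dictns(T)$ consists of a single batch added in one iteration and hence has cardinality $\jT\fixsize$, a multiple of $M$. In Step~\ref{step:map}, the indices $\indx(y^n)$ for $y^n\in\dictns(T)$ are a permutation of $\{0,1,\ldots,\jT\fixsize-1\}$, so $\xmapst(y^n)=\indx(y^n)\bmod\fixsize$ takes each value $r\in\fixalp$ on exactly $\jT$ sequences of $T$. This is precisely the condition of Lemma~\ref{lem:condition-VF}, and universality follows.

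I do not anticipate a real obstacle here: once Fact~\ref{fact:same_sn} (ensuring that each type class contributes at length equal to its common length) is invoked to guarantee that the $\jT\fixsize$ sequences added for $T$ form \emph{all} of $\dictns(T)$, the verification is mechanical. The substantive content of Procedure \GreedyI\ — the greedy pointwise optimality of the resulting \VFR\ — is a separate claim that is not part of this theorem.
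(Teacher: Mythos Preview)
Your approach is the same as the paper's: induct on $n$ to verify the prefix structure, then invoke Lemma~\ref{lem:condition-VF} after checking that each $\dictns(T)$ has size $\jT M$ with values of $\xmapst$ distributed uniformly.

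One correction, though: your appeal to Fact~\ref{fact:same_sn} is both wrong and unnecessary. Fact~\ref{fact:same_sn} says that sequences in a type class share a common \emph{final state}; it says nothing about length, and the statement you attribute to it---``sequences of different type classes cannot share the same length''---is actually false, since $\types$ typically contains many distinct type classes, all consisting of length-$n$ sequences. What you really need is simply that $T\in\types$ implies $T\subseteq\alp^n$, so $\dictns(T)$ can only receive members at iteration $n$; this is immediate from the definition of $\types$ and needs no auxiliary fact. With that fix, your argument is complete.
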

\begin{proof}
The fact that the set $\dictns\cup\failns$ constructed by the procedure is
prefix-free and \complete\ can be easily seen by induction in $n$: the
procedure starts with $\dictns\cup\failns=\alp$ and at each iteration it
moves sequences from $\failns$ to $\dictns$ (Step~\ref{step:divide}) and
replaces the remaining sequences of length $n$ in $\failns$ with a full
complement of children of length $n+1$ (Step~\ref{step:stop}). Sequences from
a type class $\typet$ are added to $\dictns$ in sets of size $\jT M$, for
some $\jT\ge 0$ depending on $\typet$ (Step~\ref{step:divide}), and are
assigned uniformly to symbols in $\fixalp$ (Step~\ref{step:map}). Therefore,
by Lemma~\ref{lem:condition-VF}, the constructed \TVFR\ is universal.
\end{proof}

The following key lemma is the basis for the proof of pointwise optimality of
$\vfrns$.

\begin{lemma}\label{lem:mod M}
Let $\dict$ be the dictionary of a universal \VFR. Then, for every type class
$\typet\in\types$, we have
\begin{equation}\label{eq:R pmod M}
|\failn[n](\typet)|\equiv |\typet| \pmod{M}
\end{equation}
and, in particular, for the dictionary $\dicts$, we have
\begin{equation}\label{eq:R bmod M}
|\fails_n (\typet)| = |\typet| \bmod M\,.
\end{equation}
Moreover, if $|\failn[n](\typet)|<M$ for every type class $\typet\in\types$,
$1 \le n \le N$, then $|\dict(\typet)|=|\dicts(\typet)|$ for all
$\typet\in\types$, $1 \le n \le N$.
\end{lemma}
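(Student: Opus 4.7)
The plan is to prove the three parts in order, with part (\ref{eq:R pmod M}) doing most of the work. To establish (\ref{eq:R pmod M}), I fix $n$ and $T\in\types$, and partition $T$ according to whether a sequence is in $\dict$, has a strict proper prefix in $\dict$, or has no prefix in $\dict$ at all:
\[
|T| \;=\; |\dict(T)| + |B_n(T)| + |\failn[n](T)|,
\]
where $B_n(T)$ denotes the middle set. For any $y^m\in\dict$ of type $T'\in\types[m]$ with $m<n$, the number of extensions $y^m z^{n-m}$ of type $T$ is determined by the residual transition counts $\tcounts_T-\tcounts(y^m)$ and by the state from which $z^{n-m}$ is appended, i.e., the final $k$-gram of $y^m$. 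By Fact~\ref{fact:same_sn}, both are determined by $T'$ alone; call this common count $c(T',T)$. Summing over contributing prefixes,
\[
|B_n(T)| \;=\; \sum_{m=1}^{n-1}\sum_{T'\in\types[m]}|\dict(T')|\,c(T',T).
\]
Universality of $\dict$ together with Lemma~\ref{lem:condition-VF} (via~(\ref{eq:condition-VF})) yields $|\dict(T')|\equiv 0\pmod{M}$ for every $T'$, and in particular $|\dict(T)|\equiv 0\pmod{M}$. Substituting into the displayed identity proves (\ref{eq:R pmod M}).

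For (\ref{eq:R bmod M}), I inspect Procedure~\GreedyI: at the start of iteration $n$, Step~\ref{step:divide} removes exactly $\jT M$ sequences from each $\failns(T)$, leaving strictly fewer than $M$. Hence $|\fails_n(T)|<M$, and combining this with (\ref{eq:R pmod M})---applicable since $\vfrs$ is universal by Theorem~\ref{lem:valid}---yields $|\fails_n(T)|=|T|\bmod M$. The final statement I would prove by induction on $n\in\{1,\ldots,N\}$. The hypothesis $|\failn[n](T)|<M$ and (\ref{eq:R pmod M}) together force $|\failn[n](T)|=|T|\bmod M=|\fails_n(T)|$ for every $T\in\types$. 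In the base case $n=1$ no strict prefix exists, so $B_1(T)=B_1^\ast(T)=\emptyset$ and the three-term decomposition immediately gives $|\dict(T)|=|\dicts(T)|$. For $n>1$, the inductive hypothesis makes the formula above for $|B_n(T)|$ match the corresponding expression $|B_n^\ast(T)|$ for $\dicts$, and the decomposition then forces $|\dict(T)|=|\dicts(T)|$, closing the induction.

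The main obstacle is the type-invariance step in part (\ref{eq:R pmod M}): I am leaning squarely on Fact~\ref{fact:same_sn} to conclude that $c(T',T)$ is independent of the specific $y^m\in T'$, because only then can $|B_n(T)|$ be factored into $|\dict(T')|$-weighted terms and revealed, via universality of $\dict$, as a sum of multiples of $M$. Without the common final-state property the contributions of individual prefixes within a type class could differ and the mod-$M$ structure of $|B_n(T)|$ would be lost.
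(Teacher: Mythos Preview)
Your proof is correct and follows essentially the same approach as the paper's. Your three-way partition of $T$ and the count $c(T',T)$ correspond exactly to the paper's decomposition via the set $\Delta(T,T')$ of continuations (with $|B_n(T)|$ being the double sum $\sum_{m<n}\sum_{T'}|\Delta(T,T')|\cdot|\dict(T')|$), and your forward induction for the last part is the direct counterpart of the paper's minimal-counterexample argument.
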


\begin{proof}
Let $T \in \types$ and $T' \in \types[m]$, with $m<n$. For an arbitrary $y^m
\in T'$, consider the set
\begin{equation}\label{eq:diftype}
\Delta(T,T') =
\{\,z^{n-m} \in \alp^{n-m} \,\bigbar \,\, y^m z^{n-m} \in T \,\}
\end{equation}
which, by Fact~\ref{fact:same_sn}, depends only on $T$ and $T'$ and is
independent of the choice of $y^m$ (in fact, $\Delta(T,T') \in \types[n-m]$,
but with an initial state equal to the common final state of the sequences in
$T'$). Now, since $\dict$ is a \complete\ prefix set, if $y^m \in \dict(T')$
then $y^m z^{n-m} \notin \treed(T)$ and, conversely, each $x^n \in T
\setminus
\treed(T)$ must have a unique proper prefix $x^m \in \dict$, in a type class
$T' \in \types[m]$. Since a sequence in $T$ either has a proper prefix in
$\dict$ or it corresponds to a node in $\treed$, in which case the node is
either a leaf (sequences in $\dict(T)$) or internal (sequences in
$\fail_n(T)$), it follows that
\begin{equation}\label{eq:lostkey}
|\failn[n](\typet)| = |T| - |\dict(T)| -
\sum_{m=1}^{n-1} \sum_{T' \in \types[m]} |\Delta(T,T')| \cdot |\dict(T')| \,,
\end{equation}
where the double summation is the number of sequences in $\typet$ that have a
proper prefix in $\dict$. By Lemma~\ref{lem:condition-VF}, each $|\dict(T')|$
in~(\ref{eq:lostkey}), as well as $|\dict(T)|$, must be divisible by $M$,
implying~(\ref{eq:R pmod M}). Equation~(\ref{eq:R bmod M}) then follows from
the construction in Procedure~\GreedyI, which guarantees that
$|\fails_n(\typet)|<M$.

Next, assume $|\dict(\typet)| \neq |\dicts(\typet)|$ for some
$\typet{\in}\types$, $1 \le n \le N$. Without loss of generality, assume $n$
is the smallest such integer, so that $|\dict(T')|{=}|\dicts(T')|$ for all
$T'{\in}\types[m]$, $m<n$. By~(\ref{eq:lostkey}), we have $\fail_n(\typet)
\neq \fails_n(\typet)$. But, if $|\fail_n(\typet)|{<}M$, by~(\ref{eq:R pmod
M}) and~(\ref{eq:R bmod M}), we must have
$|\fail_n(\typet)|{=}|\fails_n(\typet)|$. Therefore, we must also have
$|\dict(\typet)| = |\dicts(\typet)|$ for every $\typet{\in}\types$,
$1{\,\le\,}n{\,\le\,}N$.
\end{proof}

The following theorem establishes the pointwise optimality of $\vfrns$ and
also the uniqueness of the optimal type profile for a universal \VFR.

\begin{theorem}\label{theo:optimal}
Let $\vfr = (\dict,\xmap,M)$ be a universal \VFR. Then, for every $N\ge 0$,
we have $\lenpm(\dictns)\le\lenpm(\dictn)$ and $P(\failns)\le P(\failn)$ for
all $P\in\Pclassk$. Moreover, if $|\dict(\typet)|\ne |\dicts(\typet)|$ for
any $n$ and $\typet \in \types$, then $\lenpm(\dictns)<\lenpm(\dictn)$ for
all $N > n$ and all $P\in\Pclassk$.
\end{theorem}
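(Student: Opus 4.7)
My plan is to reduce both inequalities to pointwise comparisons at the level of type-class populations, and then invoke Lemma~\ref{lem:mod M}. By the LANSIT identity~(\ref{eq:internal}), we have $\lenpm(\dictn) = \sum_{n'=0}^{N-1} P(\failn[n'])$, and since all sequences in a type class are equiprobable under any $P\in\Pclassk$, $P(\failn[n'](T)) = |\failn[n'](T)|\,P(x^{n'})$ for any $x^{n'}\in T$. Hence it suffices to establish that, for every $n'\ge 1$ and every $T\in\types[n']$,
\[
|\failns[n'](T)| \;\le\; |\failn[n'](T)|\,,
\]
and this single inequality will simultaneously yield both $P(\failns)\le P(\failn)$ (take $n'=N$) and the expected-length comparison.

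The type-class inequality is then immediate from Lemma~\ref{lem:mod M}: that lemma gives $|\failn[n'](T)|\equiv|T|\pmod{M}$ for \emph{any} universal dictionary, while the greedy construction achieves $|\failns[n'](T)|=|T|\bmod M$, the minimum nonnegative representative of the residue class. Any other nonnegative integer in this residue class is at least $|T|\bmod M$, so the inequality holds, and summation over $T\in\types[n']$ and over $n'$ delivers $\lenpm(\dictns)\le\lenpm(\dictn)$ and $P(\failns)\le P(\failn)$ for every $P\in\Pclassk$.

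For strict inequality, suppose $|\dict(T_*)|\neq|\dicts(T_*)|$ for some $T_*\in\types$, and let $n_0\le n$ be the smallest length at which some type class $T_0\in\types[n_0]$ satisfies $|\dict(T_0)|\neq|\dicts(T_0)|$. By minimality, the two dictionaries agree on type-class sizes at all lengths $m<n_0$. Applying the identity~(\ref{eq:lostkey}) from the proof of Lemma~\ref{lem:mod M} to both $\dict$ and $\dicts$ with this $T_0$, the right-hand sides agree in every term except the one involving $|\dict(T_0)|$, forcing $|\failn[n_0](T_0)|\neq|\fails_{n_0}(T_0)|$. Combined with the mod-$M$ congruence, this gives $|\failn[n_0](T_0)|\ge|\fails_{n_0}(T_0)|+M$, so the $n_0$-th term of $\lenpm(\dictn)-\lenpm(\dictns)$ contributes at least $M\cdot P(x^{n_0})>0$ for $x^{n_0}\in T_0$, positivity coming from the assumption that $P$ lies in the open set $\pDomain$ of parameters with all conditional probabilities nonzero. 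All remaining terms in $\sum_{n'=0}^{N-1}[P(\failn[n'])-P(\failns[n'])]$ are nonnegative by the previous paragraph, so for any $N>n\ge n_0$, we conclude $\lenpm(\dictns)<\lenpm(\dictn)$.

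The only delicate step is the conversion, via~(\ref{eq:lostkey}), of the assumed discrepancy in dictionary profiles at length $n_0$ into a discrepancy in failure-set sizes at the same length; once that is in hand, the rest is purely arithmetic and uses no further property of the greedy construction beyond what Lemma~\ref{lem:mod M} already guarantees.
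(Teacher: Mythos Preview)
Your proof is correct and follows essentially the same route as the paper's: both reduce the expected-length and failure-probability comparisons to the type-class inequality $|\failns[n'](T)|\le|\failn[n'](T)|$ via~(\ref{eq:internal}) and equiprobability, and both obtain that inequality from the two parts of Lemma~\ref{lem:mod M}. For the strict-inequality clause, the paper simply invokes the ``Moreover'' part of Lemma~\ref{lem:mod M} (in contrapositive form) to produce a type $T'\in\types[n']$, $n'\le n$, with $|\failn[n'](T')|\ge M$, whereas you re-derive that step by hand using~(\ref{eq:lostkey}) and minimality of $n_0$; this is exactly how the ``Moreover'' clause of Lemma~\ref{lem:mod M} is itself proved, so the two arguments are the same in substance.
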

\begin{proof}
If $\vfr$ is universal, then, by Lemma~\ref{lem:mod M}, we have
$|\failns(\typet)| \le |\failn(\typet)|$ for all $\typet \in \types[N]$ and
thus, since sequences of the same type are equiprobable, $P(\failns) \le
P(\failn)$ for every $N\ge 0$. Moreover, by~(\ref{eq:internal}), we also have
$\lenpm(\dictns) \le
\lenpm(\dictn)$. Now, if there exists $\typet \in \types$ such that
$|\dict(\typet)| \ne |\dicts(\typet)|$ then, by Lemma~\ref{lem:mod M}, we
have $|\fail_{n'}(T')| \ge M$ for some $T' \in \types[n']$, with $n' \le n$.
Therefore, $P(\fails_{n'})<P(\failn[n'])$ which, by~(\ref{eq:internal}),
implies that $\lenpm(\dictns)<\lenpm(\dictn)$ for all $N > n$ and all
$P\in\Pclassk$.
\end{proof}

\begin{remark}\label{rmk:initialVFR}
A modification analogous to the one presented in Remark~\ref{rmk:initialFVR}
is valid in the \VFR\ setting as well for the case in which the initial state
is not deterministic. Specifically, the \VFR\ consumes $k$ input symbols and
then applies $\vfrs$ with initial state $x_1^k$. Equivalently, the dictionary
of the modified procedure corresponds to a tree obtained by taking a
\balanced\ tree of depth $k$, and ``hanging'' from each leaf $s$ the
tree corresponding to $\vfrs$ with initial state $s$. Again, this
\VFR\ is optimal among \VFRs\ that are perfect for every $P \in
\Pclassk$ and every initial state distribution.
\end{remark}

By~(\ref{eq:internal}) and~(\ref{eq:R bmod M}) in Lemma~\ref{lem:mod M}, the
expected dictionary length of $\vfrns$ is given by
\begin{equation}\label{eq:exact}
\lenpm(\dictns) = \sum_{n=0}^{N-1} \sum_{\typet \in \types}
\frac{|\typet| \bmod M}{|\typet|} P(\typet) \,.
\end{equation}
As the exact formula in~(\ref{eq:exact}) appears to provide little insight
into the performance of $\vfrns$ in general (in terms of both expected
dictionary length and failure probability, except for special cases such as
in Example~\ref{ex:ex1} below), a precise estimate is deferred to
subsections~\ref{sec:perpre} and~\ref{sec:perbounds}. In particular, it will
be shown that $P(\failns)$ decays exponentially fast with $N$ and,
consequently, as discussed in Subsection~\ref{sec:definition_vf},
$\lenpm(\dictns)$ converges to the expected dictionary length of $\dicts$,
which is optimal among all universal \VFRs\ with vanishing failure
probability.

\begin{example}\label{ex:ex1}
Consider the VFR $\vfrs$ for the Bernoulli class and $M=3$. Clearly,
$N_0(3)=3$, and there exist two type classes of size $3$ in $\types[3]$,
namely $T = \{ 001,010,100 \}$ and $T' = \{ 011,101,110 \}$. By
Procedure~\GreedyI, both $T$ and $T'$ are included in $\dicts$ and, thus,
$\fails_3 = \{ 000,111 \}$. Next, the procedure considers the set of
extensions of $000$ and $111$. This set does not contain a subset of size $3$
of sequences of the same type for any $n<6$. For $n=6$, four such subsets do
exist, namely, the concatenations of $000$ and $111$ with the sequences in
$T$ and $T'$. Consequently, $\fails_6 = \{ 000000,000111, 111000,111111 \}$
(notice that $|T(000111)|=20$ and there are two sequences of this type in
$\fails_6$, as predicted by~(\ref{eq:R bmod M}) in Lemma~\ref{lem:mod M}).
Again, it can be readily verified that the set of extensions of the sequences
in $\fails_6$ does not contain a subset of size $3$ of sequences of the same
type for $n=7$, but such subsets do exist for $n=8$ (e.g., $\{
00011101,00011110,11100001 \}$). The construction proceeds in a similar
fashion.

Next, we let $P(0){\,=\,}p{\,=\,}1{-}q$ and bound $\lenpm(\dicts)$ in the
example. By~(\ref{eq:exact}), we have
\[
\lenpm(\dicts) = \sum_{n=0}^\infty \sum_{k=0}^n \left (\tbinom{n}{k}
\bmod 3 \right ) p^k q^{n-k} \,.
\]
To derive a lower bound, we take the terms in the sum corresponding to
$k{\,=\,}0,1,2$ and $n{\,-\,}k{\,=\,}0,1,2$, and solve the resulting sums for
$n$ (the cases $k{\,=\,}0,1$ and their complements are straightforward,
whereas for $k=2$ we note that $\tbinom{n}{2}=1\bmod 3$
when $n=2\bmod 3$, and $\tbinom{n}{2} = 0 \bmod 3$ otherwise).\footnote{%
More terms can be estimated using Lucas's Theorem, which applies to any prime
$M$.} After tedious but straightforward computations, it can be shown that
\begin{equation}\label{eq:example}
\lenpm(\dicts) > \frac{1}{pq} - \frac{pq(1+3pq)}{3+p^2q^2} \ge
\frac{1}{pq} - \frac{1}{7} \,.
\end{equation}
Notice that a direct generalization of von Neumann's
scheme~\cite{VonNeumann51} for the case $M=3$ would proceed as
Procedure~\GreedyI\ up to $n=3$, but, in case $x^3 \in \fails_3$, it would
iterate the procedure ``from scratch'' until an output is produced. This
scheme is clearly universal and it is straightforward to show that its
expected dictionary length is $1/(pq)$ which, therefore, by
Theorem~\ref{theo:optimal}, is an upper bound on $\lenpm(\dicts)$. However,
consider the following variant of the iterated scheme: the VFR, denoted
$\notilde{\vfr}=(\notilde{\dict},\notilde{\xmap}, 3)$, outputs
$\notilde{\xmap}(01)=0$, $\notilde{\xmap}(10)=1$, and $\notilde{\xmap}(001) =
\notilde{\xmap}(110)=2$, whereas on inputs $000$ and $111$, the procedure is
iterated. Since $P(01)=P(10)=P(001)+P(110) =pq$, $\notilde{\xmap}$ is
uniformly distributed for all values of $p$. Also, since the expected
dictionary length for the first iteration is $3-2pq$, and an iteration occurs
with probability $p^3+q^3$, overall, we have
\[
\lenpm(\notilde{\dict}) = \frac{3-2pq}{1-p^3-q^3} =
\frac{1}{pq} - \frac{2}{3} < \lenpm(\dicts)
\]
where the inequality follows from~(\ref{eq:example}).
The reason $\notilde{\vfr}$ can outperform $\vfrs$ is that it {\em does
not\/} preserve universality under truncation (note that the condition of
Lemma~\ref{lem:condition-VF} is not satisfied for $\notilde{\vfr}$, and the
truncated scheme $\notilde{\vfr}_N$ is not perfect whenever $N\equiv
2\pmod{3}$). Perfect VFRs without the truncation requirement are studied
in~\cite{Hoeffding-Simons70,Stout-Warren84}: in particular, it is shown
in~\cite{Stout-Warren84} that no single VFR can be optimal, in this sense,
for all values of $p$. Notice also that using the upper bound on
$\lenpm(\dicts)$ we obtain $\lenpm(\dicts)-\lenpm(\notilde{\dict})<2/3$. In
fact, as will be discussed in Section~\ref{sec:perbounds}, as $M$ grows, the
cost of requiring perfection under truncation becomes negligible.
\end{example}

The performance analysis in subsections~\ref{sec:perpre}
and~\ref{sec:perbounds}, as well as the design of an efficient implementation
in Subsection~\ref{sec:sequential}, require the discussion of additional
properties of type classes, dictionaries, and their interactions. We present
these properties in the next subsection.

\subsection{Additional properties}\label{sec:properties}
We next describe a decomposition of a type class $\typet\in\types$. For
$u\in\alp^\ell$, $0\le \ell\le n$, let
\begin{equation}
\label{eq:typenotcut}
\typecutnot[\typet]{u} = \{\,x^{n}\,|\, x^n \in \typet\,,\;x_{n-k-\ell+1}^{n-k}
= u\,\}\,
\end{equation}
(by our assumptions on initial states, and by the range defined for $\ell$,
$\typecutnot[\typet]{u}$ is well defined even if some of the indices of
$x_{n-k-\ell+1}^{n-k}$ in~(\ref{eq:typenotcut}) are negative). Clearly, we
can decompose $T$ as
\begin{equation}\label{eq:typesplit2}
\typet = \mbox{$\bigcup$}_{u\in\alp^\ell} \typecutnot[\typet]{u}\,,
\end{equation}
where, by~(\ref{eq:typenotcut}) and Fact~\ref{fact:same_sn}, the sequences in
$\typecutnot{u}$ coincide in their last $k+\ell$ symbols. For
$u\in\alp^\ell$, define
\begin{equation}\label{eq:typecut}
\typecut{u}  =
\{\,x^{n-\ell}\,|\, x^n \in \typecutnot{u}\,\}\,,\quad
0 \le \ell \le n\,,
\end{equation}
and $\typecut{u}  = \emptyset,\;\ell > n$. From the
definitions~(\ref{eq:typenotcut}),~(\ref{eq:typecut}), it is readily verified
that, for $u,v\in\alp^\ast$,
\begin{equation}\label{eq:au}
\typecut{vu} = \typecut[\typecut{u}]{v} \,.
\end{equation}

What makes the sets $\typecut{u}$ useful is the fact that they are,
generally, type classes themselves, as established in the following lemma.
\begin{lemma}\label{lem:typesplit}
If $\typecut{u}$ is not empty then $\typecut{u} \in \types[n-\ell]\,$.
\end{lemma}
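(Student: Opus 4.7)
The plan is to prove the lemma by induction on $\ell = |u|$, using the recursion $(\ref{eq:au})$ to reduce the general case to the single-symbol extension.

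For the base case $\ell = 1$, take $u = a \in \alp$. By Fact~\ref{fact:same_sn}, every $x^n \in \typet$ ends with a common final state $s_f \in \alp^k$, so the constraint $x_{n-k} = a$ fixes the last $k+1$ symbols of every $x^n \in \typecutnot[\typet]{a}$ to be $a s_f$. Writing $y^{n-1} = x^{n-1}$, the last $k$ symbols of every $y^{n-1} \in \typecut[\typet]{a}$ are therefore the common state $s^{*} = a\,(s_f)_1 \cdots (s_f)_{k-1}$, and the extension $y^{n-1} \mapsto x^n$ increments exactly one transition count, at position $(s^{*},(s_f)_k)$. Since all $x^n \in \typecutnot[\typet]{a}$ share the same $\tcounts(x^n)$ and each receives the same increment, the corresponding $y^{n-1}$'s share a common $\tcounts(y^{n-1})$, placing $\typecut[\typet]{a}$ inside a single $(n-1)$-type class $\widehat{T}$. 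Conversely, any $z^{n-1} \in \widehat{T}$ ends in the same state $s^{*}$ by Fact~\ref{fact:same_sn} applied in $\types[n-1]$; appending $(s_f)_k$ adds the same transition count, placing $z^{n-1}(s_f)_k$ in $\typet$, and because $(s^{*})_1 = a$ this extension lies in $\typecutnot[\typet]{a}$, whence $z^{n-1} \in \typecut[\typet]{a}$. Thus $\typecut[\typet]{a} = \widehat{T} \in \types[n-1]$.

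For the inductive step, write $u = v u'$ with $v \in \alp$ and $u' \in \alp^{\ell-1}$. Non-emptiness of $\typecut[\typet]{u}$ implies non-emptiness of $\typecut[\typet]{u'}$, since any $x^n \in \typecutnot[\typet]{u}$ automatically satisfies $x_{n-k-\ell+2}^{n-k} = u'$ and hence lies in $\typecutnot[\typet]{u'}$, giving a witness $x^{n-\ell+1} \in \typecut[\typet]{u'}$. Letting $\tilde{T} = \typecut[\typet]{u'}$, the inductive hypothesis gives $\tilde{T} \in \types[n-\ell+1]$, and the base case applied to $\tilde{T}$ with symbol $v$ gives $\typecut[\tilde{T}]{v} \in \types[n-\ell]$. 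Invoking $(\ref{eq:au})$ yields $\typecut[\typet]{u} = \typecut[\tilde{T}]{v} \in \types[n-\ell]$, closing the induction. The main (small) obstacle is the base case, where one must correctly pin down the shared last $k$ symbols $s^{*}$ of sequences in $\typecut[\typet]{a}$ and verify that the single-symbol extension respects both the type-count identity and the positional constraint defining $\typecutnot[\typet]{a}$; once this is in hand, the induction via $(\ref{eq:au})$ absorbs the bookkeeping that would otherwise require a direct case split on whether $\ell \le k$ or $\ell > k$.
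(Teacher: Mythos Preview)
Your proof is correct and follows essentially the same approach as the paper: induction on $\ell$, reducing via~(\ref{eq:au}) to the single-symbol case, where you show both inclusions by tracking the one transition count that changes when the last symbol is stripped. The only difference is organizational---you put the single-symbol argument in the base case $\ell=1$ and keep the inductive step purely structural, whereas the paper takes the trivial $\ell=0$ as base and carries out the single-symbol analysis inside the inductive step; the content is the same. One small cosmetic point: your notation $(s_f)_k$ for the appended symbol breaks down when $k=0$ (there is no state and the appended symbol is simply $a$), so as in the paper you should note that case separately.
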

\begin{proof}
We prove the result by induction on $\ell$. For $\ell=0$, the claim is
trivial, since $\typecut{\lambda} = \typet$. Assume the claim is true for all
$\ell'$, $0 \le \ell' < \ell$, and consider a string $u = a u'$, $a\in\alp$,
$u'\in\alp^{\ell-1}$. If $\typecut{u}$ is not empty, then neither is
$\typecut{u'}$, and, by the induction hypothesis, we have $\typecut{u'}=
\typet'\in\types[m]$, where $m = n-\ell+1$. Consider a sequence $x^{m}\in
\typet'$. The type of sequences in $\typecut{u}$ differs from that of $x^{m}$
in one count of $x_{m}$, which is deducted from $\nsa[x_{m}](x^{m})$,
$s=ax_{m-k+1}^{m-1}$, if $k>0$, or from the global count of $x_{m} = a$ if
$k=0$. In either case, by Fact~\ref{fact:same_sn} and the definition of
$\typecut{u}$, both $s$ and $x_{m}$ are invariant over $\typecut{u}$, and,
therefore, $\typecut{u}\subseteq\typet''$ for some type class
$\typet''\in\types[m-1]$. On the other hand, if a sequence
$y^{m-1}\in\typet''$ is extended with a symbol $y_{m}$ (whether $y_{m}{=}a$
when $k=0$ or $y_{m}$ is the invariant final symbol $x_m$ of sequences in
$\typet'$ when $k>0$), then the counts of $\typet'$ are restored, so we have
$y^{m}\in\typecutnot[T']{a}$, and, hence,
$\typet''\subseteq\typecut[T']{a}=\typecut{u}$, where the last equality
follows from~(\ref{eq:typecut}). Therefore, $\typecut{u} = \typet''$.
\end{proof}

\begin{figure*}
\begin{center}
\setlength{\unitlength}{0.015in}
\setlength{\fboxsep}{0pt}
\setlength\fboxrule{1.5pt}
\newlength{\savefboxrule}
\newlength{\thinboxrule}
\setlength{\thinboxrule}{0.8pt}
\newcommand{\thinbox}{\setlength{\savefboxrule}{\fboxrule}\setlength{\fboxrule}{\thinboxrule}}
\newcommand{\restorebox}{\setlength{\fboxrule}{\savefboxrule}}
\newcommand{\boxheight}{12}
\newcommand{\boxwidth}{30}
\newcommand{\tqboxwidth}{15}
\newcommand{\hboxwidth}{10}
\newcommand{\dboxwidth}{60}
\newcommand{\tboxwidth}{10}
\newcommand{\onebox}{\makebox(\boxwidth,\boxheight)}
\newcommand{\halfbox}{\makebox(\hboxwidth,\boxheight)}
\newcommand{\tqbox}{\makebox(\tqboxwidth,\boxheight)}
\newcommand{\dblbox}{\makebox(\dboxwidth,\boxheight)}
\newcommand{\tinybox}{\makebox(\tboxwidth,\boxheight)}
\newcommand{\fcgray}[1]{\fcolorbox{black}{gray!50}{#1}}
\newcommand{\fcggray}[1]{\thinbox\fcolorbox{black}{gray!50}{#1}\restorebox}
\newcommand{\fcgwhite}[1]{\thinbox\fcolorbox{black}{white}{#1}\restorebox}
\newcommand{\fcwhite}[1]{\fcolorbox{black}{white}{#1}}
\begin{picture}(340,130)(-20,-15)
\thicklines
\put(37,90){
	\put(-30,00){\fcwhite{\tqbox{}}}
	\put(-24.5,20){\makebox(0,0){\footnotesize$\typecut{00}$}}
	\put(-21,00){\vector(1,-3){12.5}}
	\put(0.5,20){\makebox(0,0){\footnotesize$\cdots$}}
	\put(01,06){\makebox(0,0){$\cdots$}}
	\put(13,00){\fcwhite{\tqbox{}}}
	\put(24.5,20){\makebox(0,0){\footnotesize$\typecut{\beta 0}$}}
	\put(22,00){\vector(-1,-3){12.5}}
	\put(-75,-13){
	    \put(55,0){\makebox(0,0)[r]{\footnotesize$x_{n{-}k{-}1}{=}0$}}
	    \put(76,00){\makebox(0,0){$\cdots$}}
	    \put(95,0){\makebox(0,0)[l]{\footnotesize$x_{n{-}k{-}1}{=}\beta$}}
	}
}
\put(124,90){
	\put(-30,00){\fcwhite{\tqbox{}}}
	\put(-24.5,20){\makebox(0,0){\footnotesize$\typecut{01}$}}
	\put(-21,00){\vector(1,-3){12.5}}
	\put(0.5,20){\makebox(0,0){\footnotesize$\cdots$}}
	\put(01,06){\makebox(0,0){$\cdots$}}
	\put(13,00){\fcwhite{\tqbox{}}}
	\put(24.5,20){\makebox(0,0){\footnotesize$\typecut{\beta1}$}}
	\put(22,00){\vector(-1,-3){12.5}}
	\put(-75,-26){
	    \put(58,0){\makebox(0,0)[r]{\footnotesize$x_{n{-}k{-}1}{=}0$}}
	    \put(76,13){\makebox(0,0){$\cdots$}}	
        \put(92,0){\makebox(0,0)[l]{\footnotesize$x_{n{-}k{-}1}{=}\beta$}}
	}
}
\put(187,96){\makebox(0,0){\Large$\cdots\cdots$}} %
\put(247,90){
	\put(-30,00){\fcwhite{\tqbox{}}}
	\put(-24.5,20){\makebox(0,0){\footnotesize$\typecut{ 0\beta}$}}
	\put(-21,00){\vector(1,-3){12.5}}
	\put(01,06){\makebox(0,0){$\cdots$}}
	\put(0.5,20){\makebox(0,0){\footnotesize$\cdots$}}
	\put(13,00){\fcwhite{\tqbox{}}}
	\put(24.5,20){\makebox(0,0){\footnotesize$\typecut{\beta\beta}$}}
	\put(22,00){\vector(-1,-3){12.5}}
	\put(-75,-13){	    \put(55,0){\makebox(0,0)[r]{\footnotesize$x_{n{-}k{-}1}{=}0$}}
	\put(76,00){\makebox(0,0){$\cdots$}}	
    \put(95,0){\makebox(0,0)[l]{\footnotesize$x_{n{-}k{-}1}{=}\beta$}}
	}
}
\put(65,40){ %
  \put(-44,0)  {
	\put(00,00){\fcwhite{\onebox{}}}
	\put(16,06){\makebox(0,0){\footnotesize$\typecut{0}$}}
  }
  \put(43,0) {
	\put(00,00){\fcwhite{\onebox{}}}
	\put(16,06){\makebox(0,0){\footnotesize$\typecut{1}$}}
  }
  \put(166,0){
	\put(00,00){\fcwhite{\onebox{}}}%
	\put(16,06){\makebox(0,0){\footnotesize$\typecut{\beta}$}}
  }
  \put(121,05){\makebox(0,0){\Large$\cdots\cdots$}}
}
\put(20,-50){
\put(020,89){\vector(2,-1){50}}
\put(105,89){\vector(0,-1){26}}
\put(224,89){\vector(-2,-1){50}}
\put(0,73){
    \put(32,0){\makebox(0,0)[r]{\footnotesize$x_{n-k}{=}0$}}
    \put(111,0){\makebox(0,0)[l]{\footnotesize$x_{n-k}{=}1$}}
    \put(160,0){\makebox(0,0)[l]{\Large$\cdots$}}
    \put(205,0){\makebox(0,0)[l]{\footnotesize$x_{n-k}{=}\beta$}}
}
}
\put(77,00){
	\multiput(0,0)(\boxwidth,0){2}{\fcgwhite{\onebox{}}}
	\put(16,06){\makebox(0,0){\footnotesize$\typecutnot{0}$}}
	\put(46,06){\makebox(0,0){\footnotesize$\typecutnot{1}$}}
	\put(82.5,5){\makebox(0,0){\large$\cdots$}}
	\put(100,0){\fcgwhite{\onebox{}}}%
	\put(116,06){\makebox(0,0){\footnotesize$\typecutnot{\beta}$}}
	\fbox{\makebox(130,\boxheight){}} %
	\put(-65,-10){\makebox(0,0){\large$\typet$}}
}
\put(300,04.5){\makebox(0,0)[l]{\large$\types[n]$}}
\put(300,44.5){\makebox(0,0)[l]{\large$\types[n-1]$}}
\put(300,94.5){\makebox(0,0)[l]{\large$\types[n-2]$}}
\end{picture} %
\end{center}
\caption{\label{fig:typerelations}Type class relations
(with $\alp=\{0,1,\ldots,\beta\},\;\;\beta=\alpha{-}1$).}
\end{figure*}

\begin{remark} \label{rem:distinct}
When $\ell > k+1$, the type classes $\typecut{u}$ and $\typecut{u'}$ may
coincide even if $u \neq u'$. Specifically, letting $s_f \in \alp^k$ denote
the final state of $T$, this situation arises if and only if $u = u_1^k v$,
$u' = u_1^k v'$, and $T(vs_f)=T(v's_f)$, where both type classes assume an
initial state $u_1^k$ (it is easy to see that this situation requires
$|v|>1$). In fact, the type class $T(vs_f)$ is precisely the set
$\Delta(T,T')$ defined in~(\ref{eq:diftype}), for $T'=\typecut{u}$.
\end{remark}

Equations~(\ref{eq:typesplit2})--(\ref{eq:typecut}) and
Lemma~\ref{lem:typesplit} show how we can trace the origins of sequences in a
type class $\typet$ to the type classes $\typecut{u}$ of their prefixes.
This relation between type classes is illustrated, for $\ell=2$, in
\figurename~\ref{fig:typerelations}. Extending the figure recursively,
using~(\ref{eq:au}), $\typet$ can be seen as being at the root of an
$\alpha$-ary tree tracking the path, through shorter type classes, of
sequences that end up in $\typet$. This structure will be useful in the
derivation of various results in the sequel.

We now apply the foregoing type class relations to obtain a recursive
characterization of $\fail_n(\typet)$, $\typet\in\types$, for a given
dictionary $\dict$.
\begin{lemma}\label{lem:A from R}
For any dictionary $\dict$ and any class type $\typet\in\types$, there is a
one-to-one correspondence between $\treed(\typet)$ and $\bigcup_{a \in \alp}
\idict(\typecut{a})$, which implies
\begin{equation}\label{eq:A from R num}
|\fail_n(\typet)| =
\sum_{a \in \alp} |\failn[n-1](\typecut{a})| - |\dict(\typet)| \,,
\end{equation}
where we take $\failn[n-1](\emptyset) = \emptyset$ (some of the sets
$\typecut{a}$ may be empty).
\end{lemma}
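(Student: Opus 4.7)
The plan is to exhibit an explicit bijection $\phi: \treed(\typet) \to \bigcup_{a \in \alp} \idict(\typecut{a})$ given by truncation, $\phi(x^n)=x^{n-1}$, and then equate cardinalities. First I would verify that $\phi$ is well defined and injective. By Fact~\ref{fact:same_sn}, every sequence in $\typet$ shares the same final symbol $c$, so distinct $x^n \in \typet$ have distinct length-$(n-1)$ prefixes. Moreover, if $x^n$ is a node of $\treed$ then so is its prefix $x^{n-1}$; since $x^n$ is one of its children, $x^{n-1}$ must be internal, i.e., $x^{n-1} \in \idict$.

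Next I would identify the image. For each $x^n \in \treed(\typet)$, set $a = x_{n-k}$ (taking $a = x_n = c$ when $k=0$); by definition $x^n \in \typecutnot{a}$, so $x^{n-1} \in \typecut{a}$, which by Lemma~\ref{lem:typesplit} is a type class in $\types[n-1]$ whenever non-empty. Hence $x^{n-1} \in \idict(\typecut{a}) = \failn[n-1](\typecut{a})$. Conversely, given any internal node $y^{n-1} \in \idict(\typecut{a})$, the \completeness\ of $\treed$ forces the child $y^{n-1} c$ to be a node of $\treed$, while the definition of $\typecut{a}$ via $\typecutnot{a}$ ensures $y^{n-1} c \in \typet$; thus $y^{n-1} c \in \treed(\typet)$ with $\phi(y^{n-1} c) = y^{n-1}$.

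To conclude I would check that the $\typecut{a}$, $a \in \alp$, are pairwise disjoint subsets of $\alp^{n-1}$: for $k \ge 1$ they are separated by the fixed value of coordinate $n-k$, while for $k=0$ Fact~\ref{fact:same_sn} makes all summands but $\typecut{c}$ empty. The bijection then yields $|\treed(\typet)| = \sum_{a \in \alp} |\failn[n-1](\typecut{a})|$, and since $\treed(\typet)$ is the disjoint union of $\dict(\typet)$ and $\failn[n](\typet)$, rearrangement gives~(\ref{eq:A from R num}). The one subtlety worth flagging is the $k=0$ case, where the ``selector'' $x_{n-k}=x_n$ lies outside the prefix $x^{n-1}$; Fact~\ref{fact:same_sn} resolves it cleanly by collapsing the union to its single non-empty summand.
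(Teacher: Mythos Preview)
Your approach is the same as the paper's: send $x^n\in\treed(\typet)$ to its parent $x^{n-1}$, observe that the parent lies in $\idict(\typecut{a})$ for $a=x_{n-k}$, and invert by appending the unique symbol that returns to $\typet$. For $k\ge 1$ your argument is correct and matches the paper almost verbatim.

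The $k=0$ case, however, is handled incorrectly. Fact~\ref{fact:same_sn} asserts that sequences in $\typet$ share the same final \emph{state}, i.e., the last $k$ symbols; when $k=0$ this is the empty string and fixes nothing. Sequences in a memoryless type class need not share a final symbol (take $T=\{01,10\}$), so your claimed fixed $c$ does not exist, several $\typecut{a}$ can be nonempty simultaneously, and your inverse $y^{n-1}\mapsto y^{n-1}c$ is undefined. Your explicit resolution (``Fact~\ref{fact:same_sn} \ldots collaps[es] the union to its single non-empty summand'') is therefore false.

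The repair is easy and uniform in $k$. Two distinct sequences of $\typet$ cannot share the same length-$(n{-}1)$ prefix, because appending the last symbol changes exactly one entry of the count vector $\tcounts$, so the last symbol is determined by $\typet$ and the prefix. This single observation gives injectivity of $\phi$, disjointness of the $\typecut{a}$ (if $y^{n-1}\in\typecut{a}\cap\typecut{b}$ then $y^{n-1}a,\,y^{n-1}b\in\typet$, forcing $a=b$), and the correct inverse on $\idict(\typecut{a})$, namely $y^{n-1}\mapsto y^{n-1}d$ where $d$ is that uniquely determined symbol (equal to the common final symbol of $\typet$ when $k\ge 1$, and to $a$ when $k=0$). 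This is precisely what the paper means by ``the direction prescribed by the decomposition.''
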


\begin{proof}
Clearly, by the decomposition~(\ref{eq:typesplit2}), each node in $\treed(T)$
is a child of a node in $\idict(\typecut{a})$ for some $a \in \alp$.
Conversely, a node in $\idict(\typecut{a})$ has a \emph{unique} child in
$\treed(T)$, in the direction prescribed by the decomposition. To complete
the proof, we recall that, for any $T'\in\types[n']$ and any $n'$, the nodes
in $\idict(T')$ correspond to the sequences in $\fail_{n'}(T')$, whereas the
sequences in $\dict(T')$ correspond to the leaves in $\treed(T')$.
\end{proof}

\subsection{Sequential implementation of $\vfrs$}
\label{sec:sequential}
\newcommand{\GreedyII}{G2}
Procedure \GreedyI\ in \figurename~\ref{figproc:greedy} constructs
dictionaries of depth $N$, for arbitrarily large values of $N$. In a
practical \VFR\ application, however, dictionaries are not actually
constructed. Instead, what is required is a procedure that reads the input
sequence $x_1x_2 \ldots x_n \ldots$, sequentially, and, for each $n$, makes a
decision as to whether it should produce an output (and what that output
should be) and stop, or continue processing more input. Procedure \GreedyII\
in \figurename~\ref{figproc:sequential} describes such a sequential
implementation of $\vfrs$, without truncation. The procedure can easily be
modified to implement a \TVFR\ for arbitrary $N$, with a possible failure
exit.

\begin{figure}
\begin{figprocedure}
{Sequence $x_1 x_2 x_3 \cdots x_n \cdots $, integer $M > 1$.}
{Number $\rnd\in[M]$.}

\item Set $\indx_\remainltr = 0,\; n=0,\;\tcounts(x^n) = \mathbf{0}$.
\item\label{step:read} Increment $n$, read $x_n$ and update $\tcounts(x^n)$.
    Let $T = \type$.
\item\label{step:RD} Compute $|\fail_{n-1}(\typecut[\typet]{a})| =
    |\typecut[\typet]{a}| \bmod M$, for each $a\in\alp$.
\item\label{step:IA} Set $\indx_{\treed} = \sum_{a<x_{n-k}}
    |\fail_{n-1}(\typecut[\typet]{a})| + \indx_\remainltr$.
\item\label{step:jT} Set $\jT = \Bigl\lfloor \sum_{a\in\alp}
    |\fail_{n-1}(\typecut{a})|/M \Bigr\rfloor$.

\item\label{step:decide} If
$\indx_{\treed}{<\,}\jT M$ then \textbf{output}
    $\rnd{\,=\,}\indx_{\treed}{\,\bmod\,}M$ and \textbf{stop}.

Otherwise, set $\indx_\remainltr = \indx_{\treed} - \jT M$ and \textbf{go to}
Step~\ref{step:read}.
\end{figprocedure}
\caption{\label{figproc:sequential}Procedure \GreedyII: Sequential
implementation of $\vfrns$.}
\end{figure}

The procedure relies on a sequential alphabetic enumeration of
$\treed(\type)$, which defines a partition of this set into $\fail_n(\type)$
and $\dict(\type)$, and yields an enumeration of the two parts. These
enumerations are based, in turn, on the one-to-one correspondence of
Lemma~\ref{lem:A from R}, and determine whether an output is produced, and
the value of the output. We assume a total (alphabetic) order $<$ of the
elements of $\alp$; for the purpose of comparing sequences of length $n$,
significance increases with the coordinate index (i.e., $x_n$ is the most,
and $x_1$ the least, significant symbol in $x^n$). We assume, recursively,
that after processing $x^{n-1}$ we have its index
$\indx_{\remainltr}(x^{n-1})$ in $\fail_{n-1}(\typecut[\type]{x_{n-k}})$
(notice that $\type[x^{n-1}] =
\typecut[\type]{x_{n-k}}$). In \figurename~\ref{figproc:sequential}, this
index is assumed stored in the variable $\indx_\remainltr$ when
Step~\ref{step:read} is reached. Since all the sequences in $\typet = \type$
coincide in their last $k$ symbols, if $y^n\in\typet$ and $y_{n-k} <
x_{n-k}$, then $y^n < x^n$ in the alphabetical order. Therefore, using the
one-to-one correspondence of Lemma~\ref{lem:A from R}, the index of $x^n$ in
$\treed(\typet)$ is given by
\begin{equation}\label{eq:IA}
\indx_{\treed}(x^n) =
\sum_{a<x_{n-k}} |\fail_{n-1}(\typecut[\typet]{a})|
+ \indx_\remainltr(x^{n-1})\,.
\end{equation}
In \figurename~\ref{figproc:sequential}, this computation is performed in
Step~\ref{step:IA}, based on the value of the aforementioned index
$\indx_\remainltr(x^{n-1})$ available when Step~\ref{step:read} was reached,
and on the values $|\fail_{n-1}(\typecut[\typet]{a})|$ computed in
Step~\ref{step:RD}. The computations in~Step~\ref{step:RD}, in turn, can be
derived from~(\ref{eq:R bmod M}), by means of Whittle's
formula~\cite{Whittle} applied to $\typecut[\typet]{a} \in \types[n-1]$.
Notice that the type $\tcounts^{(a)}$ associated with $\typecut[\typet]{a}$,
which is required to evaluate Whittle's formula, is easily obtained from the
type $\tcounts(x^n)$. In Step~\ref{step:jT}, the factor $\jT$ of $|\dict(T)|
= \jT M$ is obtained, based again on the quantities computed in
Step~\ref{step:RD} and on~(\ref{eq:A from R num}). We assume that
$\dict(\typet)$ consists of the first $\jT M$ sequences in the alphabetic
ordering of $\treed(\typet)$. If the index~(\ref{eq:IA}) of $x^n$ in this
ordering is less than $\jT M$, as checked in Step~\ref{step:decide}, then
$x^n$ is in the dictionary, and an output is produced. Setting the output
value as in Step~\ref{step:decide} guarantees that sequences in
$\dict(\typet)$ are assigned uniformly to values in $\fixalp$, as required by
the condition of Lemma~\ref{lem:condition-VF}. If $x^n$ is not in $\dict$,
then its index in $\fail_n(\typet)$ is obtained by subtracting
$|\dict(\typet)| = \jT M$ from its index in $\treed(\typet)$, so
$\fail_{n}(\typet)$ inherits the alphabetic ordering of
$\treed(\typet){\,=\,}\dict(\typet){\,\cup\,}\fail_n (\typet) $, and the
assumptions for the next iteration are satisfied. Since Whittle's formula can
be evaluated in time polynomial in $n$, the procedure in
\figurename~\ref{figproc:sequential} runs in polynomial time.

\subsection{Performance: Preliminaries}
\label{sec:perpre}
We study the performance of $\vfrns$ in terms of expected dictionary length
and failure probability for large $N$. First, we show that the failure
probability vanishes exponentially fast and that, as a result, the expected
dictionary length converges. We then characterize the asymptotic behavior
with respect to $M$ of the convergence value, up to an additive constant
independent of $M$. To this end, for sufficiently large $N$, we derive a
lower bound on $\lenpm(\dictn)$ for any \truncatedD\ dictionary $\dictn$
derived from a universal \VFR\ and we show that the bound is achievable
within such a constant. For the achievability result we will not use the
optimal universal \VFR\ $\vfrs$, but a different \VFR, for which the analysis
is simpler.\footnote{The situation is akin to lossless source coding, for
which the entropy bound is shown to be achievable with, say, the Shannon
code, rather than with the (optimal) Huffman code.}

\begin{theorem}\label{theo:failure}
For every $P \in \Pclassk$, $P(\failns)$ decays exponentially fast with $N$.
\end{theorem}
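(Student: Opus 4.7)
The plan is to bound $P(\failns)$ by combining the modular identity from Lemma~\ref{lem:mod M} with the fact that, since $\bldp\in\pDomain$ lies in the open parameter simplex, every single-sequence probability decays exponentially with its length.

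First, I would expand
\[
P(\failns) \;=\; \sum_{\typet \in \types[N]} |\failns(\typet)|\,P(x_\typet^N),
\]
where $x_\typet^N$ denotes any representative of $\typet$, using the fact that all sequences in a type class are equiprobable under every $P \in \Pclassk$. By~(\ref{eq:R bmod M}) in Lemma~\ref{lem:mod M} we have $|\failns(\typet)| = |\typet| \bmod M < M$, so
\[
P(\failns) \;<\; M \sum_{\typet \in \types[N]} P(x_\typet^N).
\]

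Second, I would bound the right-hand side as a polynomial-in-$N$ count of types multiplied by an exponentially small single-sequence probability. The number of nonempty type classes $|\types[N]|$ is at most polynomial in $N$ (of order $O(N^{\alpha^{k+1}})$) by a standard count of realizable transition-count vectors. For the probability factor, set $q \defined \max_{a\in\alp,\,s\in\alp^k} p(a|s)$; since $p(a|s)>0$ for every $(a,s)$ and the $\alpha\ge 2$ conditional probabilities $p(\cdot|s)$ sum to one, we have $q<1$, and hence $P(x^N) \le q^{N-k}$ for every $x^N \in \alp^N$ (the factor $q^{-k}$ absorbs the dependence on the fixed initial state). Combining,
\[
P(\failns) \;<\; M\,|\types[N]|\,q^{N-k} \;=\; O\!\left(N^{\alpha^{k+1}}\,q^N\right),
\]
which decays exponentially in $N$, as claimed.

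The only nonroutine ingredient is the strict inequality $q<1$, which is precisely where the assumption $\bldp \in \pDomain$ (excluding the boundary of the simplex, so that every $p(a|s)$ is bounded away from $1$) comes in. Everything else is routine method-of-types counting together with the modular identity already established in Lemma~\ref{lem:mod M}.
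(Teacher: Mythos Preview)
Your proof is correct but takes a more elementary route than the paper's. The paper splits the failure event according to whether the type class $T(X^N)$ is ``large'' (size at least $2^{N(\Hrate-\epsilon)}$) or not: on the large-type event, the bound $|\failns(T)|<M$ gives a contribution of at most $M\cdot 2^{-N(\Hrate-\epsilon)}$ to $P(\failns)$, while the complementary event $\{|T(X^N)|<2^{N(\Hrate-\epsilon)}\}$ is a large-deviations event and hence has exponentially vanishing probability. This yields an exponent essentially equal to $\Hrate$. You bypass typicality entirely: since $\bldp$ lies in the open simplex and $\alpha\ge 2$, every conditional probability is strictly less than $1$, so every individual $N$-sequence has probability at most $q^N$ (your $q^{N-k}$ is even looser, but fine), and a polynomial type count finishes the job. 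Your argument is shorter and fully self-contained, avoiding any appeal to large-deviations or Whittle-formula asymptotics; the price is a cruder exponent $\log(1/q)$ rather than $\Hrate-\epsilon$. Since the theorem as stated only asserts exponential decay without specifying the rate, either proof suffices, though the paper's sharper rate is what is implicitly reused later (e.g., in~(\ref{eq:event}) and the discussion around~(\ref{eq:N1})).
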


\begin{proof}
By~(\ref{eq:R bmod M}), recalling that sequences of the same type are
equiprobable, for any $\epsilon > 0$ we have
\begin{equation}\label{eq:expfail}
P(\failns) <
P \left ( |T(X^N)| < 2^{N(\Hrate-\epsilon)} \right ) +
M 2^{-N(\Hrate-\epsilon)} \,.
\end{equation}
Now, recalling that, if the maximum-likelihood estimator for $x^N$ is bounded
away from the boundary of $\pDomain$, then $|T(x^N)|$ is exponential in $N
\hat{H}_k (x^N)$ for large $N$, the event $|T(X^N)| < 2^{N(\Hrate -
\epsilon)}$ is a large deviations one. Therefore, both terms on the rightmost
side of~(\ref{eq:expfail}) decay exponentially fast with $N$.
\end{proof}

As argued, by~(\ref{eq:internal}), Theorem~\ref{theo:failure} implies that
$\lenpm(\dictns)$ converges to the expected dictionary length of $\dicts$.
Next, we develop the basic tools we will use in the characterization of
$\lenpm(\dictns)$, starting with some definitions. Throughout this subsection
we assume that all dictionaries satisfy that, for every $P \in \Pclassk$,
$P(\failn)$ vanishes as $N \to \infty$. In the next subsection, as we apply
these tools to specific dictionaries, this property will need to be verified
in each case.

The entropy of a dictionary $\dict$ is defined as $\Hdict \defined
-\Exp_{P,\dict} \log P(X^\ast)$. When the initial state $s$ differs from
$s_0$, we use the notation $\Hdict[P,s]$ for this entropy. As shown
in~(\ref{eq:exptruncf}), due to the vanishing failure probability assumption,
if $\dict$ is infinite then $\Hdict$ coincides with the limit of the entropy
of the truncated dictionary, completed with the failure set. For the latter
entropy, we use the notation $\Hdictn$, omitting the union with the failure
set (just as in $\lenpm(\dictn)$).

Given $\delta > 0$, let
\begin{equation}
\label{eq:deltacounts}
S^{(\delta)}_n \defined \{\,x^n \in \alp^n \,\bigbar
\exists \,\, a{\,\in\,}\alp,\,s{\,\in\,}\alp^k \,\text{ s.t. }\,\nsa(x^n) <
\delta n \, \} \,.
\end{equation}
Thus, by our assumption that all conditional probabilities in processes $P
\in \Pclassk$ are nonzero, for sufficiently small $\delta$ (depending on the
specific $P$), $S^{(\delta)}_n$ is a large deviations event and thus its
probability vanishes exponentially fast with $n$.

Our results are based on the following key lemma on dictionary entropies,
where we use bounding techniques that are rooted in the source coding
literature, particularly~\cite{tjalkens87} and~\cite{tjalkens92}. In the
sequel, the $O(\cdot)$ notation refers to asymptotics relative to $M$. Thus,
$O(1)$ denotes a quantity whose absolute value is upper-bounded by a
constant, independent of $M$.

\begin{lemma}\label{lem:asymptotics}
Let $P \in \Pclassk$ and let $\dict$ be a dictionary.
\begin{itemize}
\itemsep=-0.2mm
\item [(i)]
If for every $\xs \in \dict$ we have $|T(\xs)| \ge M$, then
\begin{equation}\label{eq:lower-HD}
\Hdictn \ge \log M + (K/2) \log \log M - O(1)
\end{equation}
for every $N>(\log M)/(\Hrate-\epsilon)$, where $\epsilon > 0$ is
arbitrary. In addition,~(\ref{eq:lower-HD}) also holds in the limit for
$\Hdict$.
\item [(ii)]
If for every $\xs \in \dict$ we have $|T(\xs)| < CM$ for some positive
constant $C$, except for a subset $\dict_0$ of $\dict$ such that
\begin{equation}\label{eq:subcond}
\sum_{\xs \in \dict_0} |\xs| P(\xs) = O(1) \,,
\end{equation}
then
\begin{equation}\label{eq:HD}
\Hdictn \le \log M + (K/2) \log \log M + O(1)
\end{equation}
for every $N > 0$. In addition,~(\ref{eq:HD}) also holds in the limit for
$\Hdict$.
\end{itemize}
\end{lemma}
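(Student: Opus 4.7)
The plan is to bound both parts via the decomposition
\[
\Hdictn \;=\; \Exp_{P,\dictn\cup\failn}\bigl[\log|T(X^\ast)|\bigr] \,+\, \Exp_{P,\dictn\cup\failn}\bigl[-\log P(T(X^\ast))\bigr],
\]
which follows from $P(\xs) = P(T(\xs))/|T(\xs)|$ for $\xs \in T(\xs)$ together with the definition of $\Hdictn$. Both summands are nonnegative and can be treated independently.

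For part~(i), the first summand is at least $(1-P(\failn))\log M$, using $|T(\xs)|\ge M$ on $\dict$ and $|T(\xs)|\ge 1$ on $\failn$. A large-deviations argument in the spirit of Theorem~\ref{theo:failure}---which relies only on the hypothesis $|T(\xs)|\ge M$ for $\xs\in\dict$---gives exponential decay of $P(\failn)$ once $N>(\log M)/(\Hrate-\epsilon)$, so $P(\failn)\log M=O(1)$. The second summand is controlled pointwise: Whittle's formula together with Stirling's approximation yield $\max_{T\in\types[n]} P(T)=\Theta(n^{-K/2})$, hence $-\log P(T)\ge(K/2)\log n-O(1)$ for any $T\in\types[n]$. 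Since the hypothesis forces $|\xs|\ge(\log M)/\log\alpha$ on $\dict$ (via $M\le|T(\xs)|\le\alpha^{|\xs|}$), this gives $-\log P(T(\xs))\ge(K/2)\log\log M-O(1)$ on $\dict$, while the $\failn$ contribution is nonnegative. Combining the two summand bounds and passing to the limit $N\to\infty$ proves the claim in both its truncated and untruncated forms.

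For part~(ii), the first summand is bounded above by splitting over $\dict\setminus\dict_0$, $\dict_0$, and $\failn$: the size hypothesis gives $\log|T|<\log M+\log C$ on the first set; the trivial bound $\log|T|\le|\xs|\log\alpha$ combined with $\sum_{\xs\in\dict_0}|\xs|P(\xs)=O(1)$ gives an $O(1)$ contribution on $\dict_0$; and the exponential decay of $P(\failn)$ makes $P(\failn)\cdot N\log\alpha=O(1)$ uniformly in $N$. Hence $\Exp_{P,\dictn\cup\failn}[\log|T(X^\ast)|]\le\log M+O(1)$.

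The main obstacle is the second summand in part~(ii). Here I would partition $\dict$ into a ``typical'' set $\mathcal{A}$ (sequences whose empirical conditional distribution lies in a small neighborhood of $P$) and its complement. On $\mathcal{A}$, the Whittle--Stirling analysis gives $P(T(\xs))=\Theta(|\xs|^{-K/2})$, hence $-\log P(T(\xs))\le(K/2)\log|\xs|+O(1)$; taking expectation and applying Jensen's inequality yields a bound $(K/2)\log\Exp_{\mathcal{A}}[|\xs|]+O(1)$. The size hypothesis $|T(\xs)|<CM$, together with the Whittle lower bound $\log|T|\ge|\xs|\hat{H}_k(\xs)-(K/2)\log|\xs|-O(1)$ and the fact that $\hat{H}_k(\xs)$ is bounded away from zero on $\mathcal{A}$, forces $\Exp_{\mathcal{A}}[|\xs|]=O(\log M)$, so the bound becomes $(K/2)\log\log M+O(1)$. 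On $\dict\setminus\mathcal{A}$, the standard large-deviations decay of $P(\dict\setminus\mathcal{A})$ dominates the at-most-$O(|\xs|)$ growth of $-\log P(T(\xs))$, so the contribution is $O(1)$. Combining the two summand bounds gives $\Hdictn\le\log M+(K/2)\log\log M+O(1)$, and passing to the limit yields the claim for $\Hdict$. The delicate point throughout is to ensure that atypical type classes, where $-\log P(T)$ may exceed $(K/2)\log n$ by a large margin, contribute only $O(1)$ in total via the exponential rarity of such classes.
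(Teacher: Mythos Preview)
Your decomposition $-\log P(\xs)=\log|T(\xs)|-\log P(T(\xs))$ is valid and differs from the paper's, which instead writes $-\log P=-\log(P/Q)-\log Q$ for the Laplace mixture $Q$, expands $-\log Q$ via Whittle's formula into $\log|T|-\log W+\sum_s\log\binom{n_s+\alpha-1}{\alpha-1}$, and controls the divergence term $D_{\dict'}(P\|Q)$ using the Clarke--Barron/Atteson asymptotics. For Part~(i) your route is arguably more direct (the pointwise bound $\max_{T\in\types}P(T)=O(n^{-K/2})$ does hold, by the local CLT for Markov types). For Part~(ii), however, the divergence framing is what makes the argument work.

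There are two gaps.

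First, in both parts you assert exponential decay of $P(\failn)$, but the hypotheses of the lemma say nothing about $\failn$: a dictionary can satisfy $|T(\xs)|\ge M$ (or $<CM$) for every $\xs\in\dict$ and still have $P(\failn)$ arbitrarily close to~$1$ at $N$ just above $N_1$. The proof of Theorem~\ref{theo:failure} does not transfer, since it uses $|\failns(T)|<M$, a property specific to $\dicts$. For Part~(i) the repair is easy and is exactly what the paper does: sequences $x^N\in\failn$ have length $N>N_1$, so $|T(x^N)|\ge 2^{N(\Hrate-\epsilon)}\ge M$ except on a large-deviations event of exponentially small probability in $N$; hence $\Exp[\log|T(X^\ast)|]\ge\log M-O(1)$ regardless of how large $P(\failn)$ is. For Part~(ii) the repair is to invoke monotonicity: $\Hdictn$ is nondecreasing in $N$, so it suffices to bound $\Hdict$, where the standing assumption $P(\failn)\to 0$ makes the failure contribution disappear in the limit.

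Second, and more seriously, your Part~(ii) claim that $P(T(\xs))=\Theta(|\xs|^{-K/2})$ on a fixed-neighborhood typical set $\mathcal{A}$ is false. If the empirical distribution $\hat P_{\xs}$ sits at fixed Euclidean distance $\delta>0$ from $P$, then $P(T(\xs))\asymp|\xs|^{-K/2}\exp\bigl(-|\xs|\,D(\hat P_{\xs}\|P)\bigr)$, so $-\log P(T(\xs))=(K/2)\log|\xs|+\Theta(|\xs|)$, not $(K/2)\log|\xs|+O(1)$. Shrinking the neighborhood to radius $O(|\xs|^{-1/2})$ would rescue the pointwise bound but then the complement has only CLT-tail (not large-deviations) decay, and your ``$O(1)$ from exponential rarity'' argument on $\dict\setminus\mathcal{A}$ collapses. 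The paper avoids this trap because in its decomposition the $(K/2)\log\log M$ arises as the \emph{difference} of two terms: the binomial term $\sum_s\log\binom{n_s+\alpha-1}{\alpha-1}$ contributes $K\log\log M$ upward, and the \emph{subtracted} divergence $D_{\dictn\cup\failn}(P\|Q)\ge D_{\alp^{N_2}}(P\|Q)=(K/2)\log\log M+O(1)$ (monotonicity of divergence under tree refinement plus the redundancy asymptotics) cancels half of it. No typical/atypical split is needed, because the divergence lower bound is a global, structural inequality rather than a pointwise estimate on $-\log P(T)$.
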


\begin{proof}
Consider the universal sequential probability assignment $Q(\xs)$ on
$\alp^\ast$ given by a uniform mixture over $\Pclassk[k]$ (namely, Laplace's
rule of succession applied state by state), for which it is readily verified
that
\[
Q(x^n) = \prod_{s \in \alp^k} \frac{\prod_{a \in \alp} \nsa(x^n)! }{(\ns(x^n)+\alpha-1)! }
(\alpha-1)! \,.
\]
It follows from Whittle's formula~\cite{Whittle} that
\begin{equation} \label{eq:mixture}
Q(x^n) = \frac{W(x^n)} {|T(x^n)|}
\prod_{s \in \alp^k} \tbinom{\ns(x^n)+\alpha-1}{\alpha-1}^{-1}
\end{equation}
where $W(x^n)$ denotes a determinant in the formula (``Whittle's cofactor'')
that satisfies $0 <  W(x^n) \le 1$, and accounts for certain restrictions in
the state transitions which limit the universe of possible sequences that
have the same type as $x^n$. We use this probability assignment to write the
entropy of a generic, finite dictionary $\dict'$, as
\begin{equation} \label{eq:divergence}
H_P (\dict') = - \Exp_{P,\dict'} \left [ \log \frac{P(X^\ast)}{Q(X^\ast)} \right ]
+ \Exp_{P,\dict'} \left [ \log\frac{1}{Q(X^\ast)} \right ] ,
\end{equation}
where the first summation is the divergence between $P$ and $Q$ as
distributions over $\dict'$, which we denote $D_{\dict'} (P||Q)$.
By~(\ref{eq:mixture}), we obtain
\begin{align}
H_P (\dict') = & - D_{\dict'} (P||Q) \nonumber\\ %
& + \Exp_{P,\dict'} \biglbracket[1.5em]{[} \log |T(X^\ast)| -
\log W(X^\ast)\nonumber\\
&\hspace{5em} + \sum_{s \in \alp^k} \log \tbinom{\ns(X^\ast)+\alpha-1}{\alpha-1}
\bigrbracket[1.5em]{]}  \,.
 \label{eq:divergence2}
\end{align}

To prove Part~(i), we first notice that if $\dict''$ is also a dictionary and
$\tree_{\dict''} \subseteq \tree_{\dict'}$, then it is easy to see that $H_P
(\dict'') \le H_P (\dict')$. Therefore, it suffices to prove the lemma for $N
= N_1$, where
\begin{equation} \label{eq:N1}
N_1 \defined \left \lceil \frac{\log M}{\Hrate-\epsilon} \right \rceil \,.
\end{equation}
For convenience, the dictionary $\dictn[N_1] \cup \failn[N_1]$ is denoted
$\dict'$. Now, if $x^n \in \dict'$, then either $|T(x^n)| \ge M$ or $n =
N_1$, implying that
\begin{align}
\Exp_{P,\dict'} & \log |T(X^\ast)| \nonumber\\
& \ge
\left [1 - P \left (|T(x^{N_1})| < 2^{N_1(\Hrate-\epsilon)}
\right ) \right ] \log M \,.
\label{eq:event}
\end{align}
Since the probability on the right-hand side of~(\ref{eq:event}) decays
exponentially fast with $N_1$ (see proof of Theorem~\ref{theo:failure}), we
obtain
\begin{equation} \label{eq:LBsize}
\Exp_{P,\dict'} \log |T(X^\ast)| \ge \log M - O((\log M)/M) \,.
\end{equation}

Next, recalling the definition of $N_0(M)$ in~(\ref{eq:N0}), for every $\xs
\in \dict'$ and $s \in \alp^k$, $x^{N_0(M)}$ is a prefix of $\xs$, and
therefore $\ns(\xs) \ge \ns(x^{N_0(M)})$. If $x^{N_0(M)} \notin
S^{(\delta)}_{N_0(M)}$ (recall~(\ref{eq:deltacounts})), we have
$\ns(x^{N_0(M)}) > \alpha \delta N_0(M) \ge(\alpha\delta\log
M)/(\log\alpha)$. Thus, by Stirling's approximation, sequences $x^{N_0(M)}
\in \alp^{N_0(M)}{\,\setminus\,}S^{(\delta)}_{N_0(M)}$ satisfy
\begin{equation} \label{eq:binobound}
\sum_{s \in \alp^k} \log \tbinom{\ns(x^{N_0(M)})+\alpha-1}{\alpha-1} >
K \log \log M - O(1) \,.
\end{equation}
Since $N_0(M)=\Omega(\log M)$ we have, for sufficiently small $\delta$,
$P(S^{(\delta)}_{N_0(M)})=O(1/M)$. Thus, the right-hand side
of~(\ref{eq:binobound}) is also a lower bound on $\Exp_{P,\dict'}
\sum_{s \in \alp^k} \log \tbinom{\ns(X^\ast)+\alpha-1}{\alpha-1}$.
Finally, we have $\log W(\xs) \le 0$ for every $\xs \in \dict'$, so we
conclude from~(\ref{eq:divergence2}), (\ref{eq:LBsize}),
and~(\ref{eq:binobound}), that
\begin{equation} \label{eq:divergence3}
H_P (\dict') \ge - D_{\dict'} (P||Q) + \log M + K \log \log M - O(1) \,,
\end{equation}
where the $O(1)$ term depends on $P$.

As for the divergence term in~(\ref{eq:divergence3}), it is easy to see that,
since $\tree_{\dict'}$ is a sub-tree of the \balanced\ tree of depth $N_1$,
we have
\begin{equation} \label{eq:extension}
D_{\dict'} (P||Q) \le D_{\alp^{N_1}} (P||Q) \,.
\end{equation}
Applying the divergence estimate in~\cite{atteson} (which extends to Markov
sources the results in~\cite{clarke} on the asymptotics of the redundancy of
Bayes rules with continuous prior densities) to sequences of length $N_1$, we
conclude that the divergence term in~(\ref{eq:divergence3}) is upper-bounded
by $(K/2)\log \log M + O(1)$, proving (\ref{eq:lower-HD}).\footnote{While the
claim in~\cite[Corollary 1]{atteson} is for a source in stationary mode, its
proof actually builds on showing the same result for a fixed initial state,
as in our setting.} The claim on $\Hdict$ follows from its definition as the
limit of a nondecreasing sequence.

To prove Part~(ii) we first notice that, since $\Hdictn$ is nondecreasing in
$N$, it suffices to prove the upper bound for sufficiently large $N$.
Moreover, since extending dictionary strings by a finite amount cannot lower
the entropy, it suffices to prove it for dictionaries $\dict$ such that the
length of the shortest sequence in the dictionary is at least $N_2 \defined c
\log M$ for some positive constant $c$. Therefore, proceeding as
in~(\ref{eq:extension}) we have, for sufficiently large $N$,
\begin{equation} \label{eq:divlow}
D_{\dictn \cup \failn} (P||Q) \ge D_{\alp^{N_2}} (P||Q) =
\frac{K}{2} \log \log M + O(1) \,,
\end{equation}
where the estimate follows, again, from~\cite[Corollary 1]{atteson}.

Next, we observe that, for $x^n \in \dict \setminus \dict_0$, we have
$|T(x^n)| < CM$, whereas for $x^n \in \dict_0$ we can use the trivial bound
$\log |T(x^n)| < n \log \alpha$. Therefore, by~(\ref{eq:subcond}),
\begin{equation} \label{eq:typeup}
\Exp_{P,\dict} \log |T(X^\ast)| < \log M + O(1) \,.
\end{equation}
Similarly, for $n>N_1(C) \defined (\log (CM))/(\Hrate-\epsilon)$ for some
$\epsilon >0$, $\{x^n \in \dict \setminus \dict_0\}$ is a large deviations
event and its probability decreases exponentially fast with $n$, as the type
class size for a typical sequence will be at least $CM$. Thus, using again
Stirling's approximation and~(\ref{eq:subcond}), we obtain
\begin{align}
\sum_{s \in \alp^k} \Exp_{P,\dict} \log \tbinom{\ns(X^\ast)+\alpha-1}{\alpha-1}
&< K \log N_1(C) + O(1) \nonumber\\
& = K \log \log M + O(1).
\label{eq:binup}
\end{align}
Clearly, since $P(\failn)$ vanishes as $N$ grows, the upper
bounds~(\ref{eq:typeup}) and~(\ref{eq:binup}) hold, \emph{a fortiori}, when
the expectations are taken over sequences in $\dictn \cup \failn$ instead of
$\dict$, for any $N>0$.

Finally, for sequences $x^n \in \alp^n \setminus S^{(\delta)}_n$ for some
$\delta>0$, $W(x^n)$ is known to be lower-bounded by a positive function of
$\delta$ (see, e.g.,~\cite[proof of Lemma 3]{WMF94}). For sequences $x^n \in
S^{(\delta)}_n$ (an event whose probability decreases with $n$ exponentially
fast for sufficiently small $\delta$), $W(x^n)$ is
$\Omega(1/n^k)$~\cite{Boza}. Therefore, $\Exp_{P,\dictn\cup\failn}\log
(1/W(X^\ast))=O(1)$ for any $N>0$. The upper bound~(\ref{eq:HD}) then follows
from~(\ref{eq:divergence2}), (\ref{eq:divlow}), (\ref{eq:typeup}),
and~(\ref{eq:binup}), both for $\Hdictn$ and for $\Hdict$.
\end{proof}

To apply Lemma~\ref{lem:asymptotics} to the estimation of the expected
dictionary length, we need to link $\lenp(\dictn)$ to $\Hdictn$. Applying the
LANSIT (recall~(\ref{eq:lansit})) to the self-information function, we obtain
the ``leaf-entropy theorem'' (see, e.g.,~\cite{massey83}), which states, for
a generic dictionary $\dict$, that
\[
\Hdict = \sum_{\xs \in \idict} P(\xs) H_P (X|s(s_0,\xs)) \,,
\]
where $s(s_0,\xs)$ denotes the state assumed by the source after emitting
$\xs$, starting at $s_0$. In the memoryless case, $H_P (X|s(s_0,\xs))$ is
independent of $\xs$, and further applying the LANSIT to the length function
(as in~(\ref{eq:internal})), we obtain $\Hdict = \Hrate \lenp(\dict)$. This
relation directly provides the desired link, and is used, e.g.,
in~\cite{HanHoshi97}. The situation is more intricate for sources with
memory, for which, regrouping terms, the theorem clearly takes the form
\[
\Hdict = \sum_{t \in \alp^k} H_P (X|t) \sum_{\xs \in \idict}
P(\xs) \delta(t,s(s_0,\xs)) \,,
\]
where $\delta(t,s(s_0,\xs))=1$ if $t=s(s_0,\xs)$, and $0$ otherwise. An
additional application of the LANSIT, this time to the function $n_t (\xs)$,
then yields
\begin{equation} \label{eq:leaf-entropy}
\Hdict = \sum_{t \in \alp^k} H_P (X|t) \Exp_{P,\dict} \;n_t (X^\ast) \,.
\end{equation}

A variant of this problem is studied in~\cite{tjalkens87} in the context of
variable-to-fixed-length source coding. We will make use of a result
in~\cite{tjalkens87}, for which we need to consider the \emph{extended}
source defined on the strings of $\dict$ (referred to as a ``segment source''
in~\cite{tjalkens87}), which is clearly also Markov with the same state set
$\alp^k$. Let $P^{\rm{seg}}(s)$ denote its steady-state distribution when the
chain is started with the stationary distribution of the basic (non-extended)
source.\footnote{As noted in~\cite{tjalkens87}, the segment source may not be
irreducible, and therefore $P^{\rm{seg}}(s)$ is one of possibly multiple
stationary distributions. Note also that, in general, $P^{\rm{seg}}(s)$ need
not coincide with $\Pstat(s)$, unless $\dict$ is a \balanced\ tree.} Letting
$\lenp^{\rm{seg}}(\dict)$ denote the expected dictionary length when the
distribution on the initial state is $P^{\rm{seg}}(s)$, \cite[Lemma
1]{tjalkens87} states that, for any state $t \in \alp^k$,
\begin{equation} \label{eq:TandW}
\sum_{s \in \alp^k} P^{\rm{seg}}(s) \Exp_{P,\dict} \;n_t (s,X^\ast) =
\Pstat (t) \lenp^{\rm{seg}}(\dict) \,,
\end{equation}
where $n_t (s,X^\ast)$ is the same as $n_t (X^\ast)$ but with the source
starting at state $s$, rather than $s_0$.\footnote{While the statement
of~\cite[Lemma 1]{tjalkens87} uses the above specific stationary
distribution, the proof in fact holds for any stationary distribution. It is
essentially based on counting the number $N_j(t)$ of occurrences of $t$ in a
sequence composed of $j$ source segments, for large $j$. Since the state
distribution at the segment starting points converges to $P^{\rm{seg}}(s)$,
with probability one, $N_j(t) = j \Pstat (t)
\lenp^{\rm{seg}}(\dict)$. On the other hand, for segments starting at
state $s$, $t$ occurs $\Exp_{P,\dict} \; n_t (s,X^\ast)$ times in the limit.}
Hence,
\begin{align}
&\sum_{s \in \alp^k} P^{\rm{seg}}(s) \Hdict[P,s] \nonumber\\
&\; = \sum_{t \in \alp^k}
H_P (X|t) \sum_{s \in \alp^k} P^{\rm{seg}}(s) \Exp_{P,\dict} \;
n_t (s,X^\ast) \nonumber \\
&\;= \sum_{t \in \alp^k} H_P (X|t) \Pstat (t) \lenp^{\rm{seg}}(\dict) =
\Hrate \lenp^{\rm{seg}}(\dict) \,, \label{eq:mymassey}
\end{align}
where the first equality follows from~(\ref{eq:leaf-entropy}), the second one
from~(\ref{eq:TandW}), and the third one from~(\ref{eq:Hrate}). Notice,
however, that the expected dictionary length may, in general, be quite
sensitive to the initial state. Therefore, it is not clear whether the
rightmost side of~(\ref{eq:mymassey}), which involves
$\lenp^{\rm{seg}}(\dict)$, can provide information on $\lenp(\dict)$. In
addition, a dictionary that satisfies the conditions of
Lemma~\ref{lem:asymptotics} for a given initial state $s_0$ may not satisfy
these conditions for a different initial state. While this issue is less
serious (as the class type size is not very sensitive to the initial state),
a direct application of the bounds shown in the lemma to the left-most side
of~(\ref{eq:mymassey}) is problematic.

Next, we present an auxiliary lemma that will address these problems. To
state the lemma, we need to introduce some additional objects. Given two
states $s,t \in \alp^k$, consider the set $\dict_{t,s}$ of all sequences
$\xs$ such that $s=s(t,\xs)$ and no proper prefix of $\xs$ has this property.
Since $\dict_{t,s}$ has the prefix property and since every state is
reachable from any other state, it is a (\complete, infinite) dictionary with
a failure probability that vanishes as the truncation level grows for any $P
\in \Pclassk$. The expected dictionary length for $\dict_{t,s}$ (where the
probabilities are computed with an assumed initial state $t$), which is the
mean first passage time from $t$ to $s$, is finite (since all states are
positive recurrent); we denote it by $\mathcal{L}_{t,s}$.

For sequence sets $U$ and $V$, let $U{\cdot}V$ denote the set $\{uv\,|\,u\in
U,\,v\in V\}$. If $U$ and $V$ are dictionaries, then $U{\cdot}V$ is a
dictionary whose corresponding tree is obtained by ``hanging'' the tree
corresponding to $V$ from each of the leaves of the tree corresponding to
$U$.

\begin{lemma}\label{lem:Massey}
Let $\{ \mathbb{D}_s \}_{s \in \alp^k}$ be a collection of sets of
dictionaries, one set for each state in $\alp^k$, which satisfies the
following property: For every pair of states $s,t$, if $\dict_s \in
\mathbb{D}_s$ then $\dict_{t,s} {\cdot} \dict_s \in \mathbb{D}_t$.
Let $L_s^*$ denote the infimum over $\mathbb{D}_s$ of the expected dictionary
length, where the expectation assumes the initial state $s$, $s \in \alp^k$,
and $P \in \Pclassk$ is arbitrary. Assume $L_s^*$ is finite and let
$\dict_s^* \in \mathbb{D}_s$, $s \in \alp^k$, denote a dictionary that
attains this infimum within some $\epsilon > 0$. Then, for any $s \in
\alp^k$, we have
\[
\min_{t \in \alp^k} \Hdic[P,t] (\dict_t^*) + K_1 \le \Hrate L_s^*
\le \max_{t \in \alp^k} \Hdic[P,t] (\dict_t^*) + K_2
\]
for some constants $K_1$ and $K_2$ that are independent of $\{ \mathbb{D}_s
\}$.
\end{lemma}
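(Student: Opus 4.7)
My plan is to prove both inequalities together by (i) showing that the infima $\{L_s^*\}_{s\in\alp^k}$ are mutually within a constant depending only on $P$, and (ii) applying the segment-source identity~(\ref{eq:mymassey}) to an iterated composite dictionary built from the $\dict_s^*$'s, whose segment-boundary stationary distribution supplies a convex combination of both the $L_s^*$'s and the entropies $\Hdic[P,t](\dict_t^*)$.

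\emph{Step 1.} Fix any two states $s,t\in\alp^k$. By the closure property, $\dict_{t,s}\cdot\dict_s^*\in\mathbb{D}_t$, and reading it from initial state $t$ consumes on average $\mathcal{L}_{t,s}$ symbols to reach $s$ and then an additional $L_{P,s}(\dict_s^*)\le L_s^*+\epsilon$ symbols, so its expected length from $t$ is $\mathcal{L}_{t,s}+L_{P,s}(\dict_s^*)$. Taking the infimum defining $L_t^*$ yields $L_t^*\le \mathcal{L}_{t,s}+L_s^*+\epsilon$; swapping the roles of $s$ and $t$ gives $|L_s^*-L_t^*|\le C_1\defined \mathcal{L}_{\max}+\epsilon$, where $\mathcal{L}_{\max}=\max_{s,t\in\alp^k}\mathcal{L}_{s,t}$ is finite because the nonzero conditional probabilities in $\Pclassk$ make the chain irreducible with all mean first-passage times finite.

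\emph{Step 2.} For each $j\ge 1$, define a dictionary $\dict^{(j)}$ by iterated hanging: $\dict^{(1)}=\dict_{s_0}^*$, and $\dict^{(j)}$ is obtained from $\dict^{(j-1)}$ by replacing every leaf $u$ with a copy of $\dict_{s(s_0,u)}^*$. The states at successive segment boundaries form a Markov chain on $\alp^k$; let $\bar\pi$ be any of its stationary distributions, and set $\bar L=\sum_s\bar\pi(s)L_{P,s}(\dict_s^*)$ and $\bar H=\sum_s\bar\pi(s)\Hdic[P,s](\dict_s^*)$. By the strong Markov property at segment boundaries, additivity of length, and the chain rule for entropy, when the initial state of $\dict^{(j)}$ is drawn from $\bar\pi$ we have $\sum_s\bar\pi(s)L_{P,s}(\dict^{(j)})=j\bar L$ and $\sum_s\bar\pi(s)\Hdic[P,s](\dict^{(j)})=j\bar H$. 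By stationarity $\bar\pi$ coincides with $P^{\mathrm{seg}}$ for $\dict^{(j)}$, so~(\ref{eq:mymassey}) applied to $\dict^{(j)}$ yields $j\bar H=\Hrate\cdot j\bar L$, and cancelling $j$ gives the clean identity $\bar H=\Hrate \bar L$.

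\emph{Step 3.} Since each $L_{P,t}(\dict_t^*)\in[L_t^*,L_t^*+\epsilon]$, Step~1 gives $|\bar L-L_s^*|\le C_1+\epsilon$ for every $s$, and then $\bar H=\Hrate\bar L$ implies $|\Hrate L_s^*-\bar H|\le \Hrate(C_1+\epsilon)$. Because $\bar H$ is a convex combination of $\{\Hdic[P,t](\dict_t^*)\}_{t\in\alp^k}$, we have $\min_t\Hdic[P,t](\dict_t^*)\le\bar H\le\max_t\Hdic[P,t](\dict_t^*)$, yielding the two stated inequalities with $K_1=-K_2=-\Hrate(C_1+\epsilon)$, constants depending only on $P$ and $\epsilon$, not on $\{\mathbb{D}_s\}$. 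The main technical obstacle I anticipate is the rigorous treatment of Step~2: verifying that $\dict^{(j)}$ is a complete dictionary with vanishing failure probability at every truncation level, that the segment-boundary Markov chain's stationary distribution can legitimately play the role of $P^{\mathrm{seg}}_{\dict^{(j)}}$ in~(\ref{eq:mymassey}), and that the multi-dictionary iteration preserves the identity's hypotheses.
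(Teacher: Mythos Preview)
Your Step~1 is correct and matches the paper's opening move. The gap is in Step~2. The identity~(\ref{eq:mymassey}) applies to a \emph{single fixed} dictionary $\dict$, with $P^{\mathrm{seg}}$ the stationary distribution of the chain induced by parsing repeatedly with that one $\dict$. Your $\dict^{(j)}$, as a subset of $\alp^*$, is well-defined only relative to the initial state $s_0$: at each leaf $u$ you hang $\dict^*_{s(s_0,u)}$, where $s(s_0,u)$ is the state reached \emph{from $s_0$}. If you now evaluate this same string set from a different initial state $s\neq s_0$, the state actually reached after the first segment need not equal $s(s_0,u)$, so the hung dictionary is the wrong one and the claimed decompositions $\sum_s\bar\pi(s)L_{P,s}(\dict^{(j)})=j\bar L$ and $\sum_s\bar\pi(s)\Hdic[P,s](\dict^{(j)})=j\bar H$ fail. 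Relatedly, $\bar\pi$ is the stationary law of your \emph{inner} segment-boundary chain (one $\dict^*_t$ at a time), not of the $\dict^{(j)}$-segment source required by~(\ref{eq:mymassey}); the two need not coincide. So the anticipated ``technical obstacle'' is in fact a structural mismatch, not a detail to be cleaned up.

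The paper avoids this by a different construction: it prepends a balanced tree of depth $k$ and hangs $\dict^*_{t,N}$ from each leaf $t\in\alp^k$, obtaining a single dictionary $\dict^{\mathrm{univ}}_N=\{t\,u_t:\,t\in\alp^k,\ u_t\in\dict^*_{t,N}\}$. The point is that after $k$ symbols the state \emph{is} those $k$ symbols, regardless of the initial state, so from any $s$ one has exact decompositions $L_{P,s}(\dict^{\mathrm{univ}}_N)=k+\sum_t P^{(k)}(t|s)L_{P,t}(\dict^*_{t,N})$ and the analogous one for entropy. Now~(\ref{eq:mymassey}) applies directly to this one dictionary; the right-hand side is within a constant of $\Hrate L_s^*$ by your Step~1, and the left-hand side is a constant plus a convex combination of the $\Hdic[P,t](\dict_t^*)$. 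If you want to salvage your route instead, you would need to reprove the counting argument behind~(\ref{eq:TandW}) for a \emph{state-indexed family} $\{\dict_s^*\}$ rather than a single dictionary, which yields $\bar H=\Hrate\bar L$ directly with no iteration; but that is new work, not an application of~(\ref{eq:mymassey}).
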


The proof of Lemma~\ref{lem:Massey} is presented in
Appendix~\ref{app:Massey}.

\subsection{Performance: Tight Bounds}
\label{sec:perbounds}
In view of Lemma~\ref{lem:condition-VF} and~(\ref{eq:lower-HD}) in Part~(i)
of Lemma~\ref{lem:asymptotics}, to obtain a lower bound on $\lenp (\dict)$
for universal \VFRs, it suffices to apply Lemma~\ref{lem:Massey} to a
suitable collection of sets $\{ \mathbb{D}_s \}_{s \in \alp^k}$; the lower
bound will translate to truncated dictionaries by typicality arguments
similar to those employed in the proof of Lemma~\ref{lem:asymptotics}. Our
lower bound is stated in Theorem~\ref{theo:lower} below.

\begin{theorem}\label{theo:lower}
Let $\vfr=(\dict,\xmap,M)$ be a universal \VFR\ such that
$P(\failn)\stackrel{N\to\infty}{\longrightarrow} 0$ for every $P \in
\Pclassk$. Then, for every $P \in \Pclassk$ and every $N >
(\log M)/(\Hrate-\epsilon)$, where $\epsilon > 0$ is arbitrary, we have
\begin{equation}\label{eq:lowerbound}
\lenpm(\dictn) \ge \frac{\log M + (K/2) \log \log M - O(1)}{\Hrate} \,.
\end{equation}
\end{theorem}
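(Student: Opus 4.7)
The plan is to combine Part~(i) of Lemma~\ref{lem:asymptotics} with Lemma~\ref{lem:Massey}. Since $\lenpm(\dictn) = \sum_{n=0}^{N-1} P(\failn[n])$ is nondecreasing in $N$ by~(\ref{eq:internal}), it suffices to establish the bound at the smallest integer $N_1$ with $N_1 > (\log M)/(\Hrate-\epsilon)$. By Lemma~\ref{lem:condition-VF}, every $\xs \in \dict$ lies in a type class $T$ with $|\dict(T)|$ a positive multiple of $M$, so $|T(\xs)| \ge M$. Hence in the truncated dictionary $\dictn[N_1]\cup\failn[N_1]$, every element $x^n$ either satisfies $|T(x^n)| \ge M$ (if $x^n \in \dictn[N_1]$) or has length $n = N_1$ (if $x^n \in \failn[N_1]$).

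I would then define, for each $s \in \alp^k$, a collection $\mathbb{D}_s$ consisting of all complete dictionaries from initial state $s$ whose elements $\xs$ satisfy the relaxed condition $|T(\xs)| \ge M$ or $|\xs| \ge N_1$; by the observation above, $\dictn[N_1]\cup\failn[N_1] \in \mathbb{D}_{s_0}$. To verify the closure property required by Lemma~\ref{lem:Massey}, take $u \in \dict_{t,s}$ and $v \in \dict_s \in \mathbb{D}_s$: the transitions of $uv$ after the first $|u|$ positions coincide with those of $v$ viewed as a standalone sequence from state $s$, so for every $v' \in T(v)$ the composite $uv'$ lies in $T(uv)$, yielding $|T(uv)| \ge |T(v)|$. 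Consequently either $|T(uv)| \ge M$ or $|uv| \ge |v| \ge N_1$, and $\dict_{t,s} \cdot \dict_s \in \mathbb{D}_t$. The collection is nonempty (for instance, truncations at $N_1$ of Procedure~\GreedyI\ started from $s$), so the infimum $L_s^*$ is finite and a near-infimum $\dict_s^*$ exists.

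To finish, I would apply the proof of Part~(i) of Lemma~\ref{lem:asymptotics} to the length-$N_1$ truncation of each $\dict_s^*$: that proof only uses the property that every element of the length-$N_1$ truncation has type-class size $\ge M$ or length $= N_1$, which is exactly the condition defining $\mathbb{D}_s$. This yields $\Hdic[P,s](\dict_s^*) \ge \log M + (K/2)\log\log M - O(1)$, uniformly in $s$ since $|\alp^k|$ is finite. Lemma~\ref{lem:Massey} then gives
\[
\Hrate L_{s_0}^* \ge \min_{t\in\alp^k} \Hdic[P,t](\dict_t^*) + K_1 \ge \log M + (K/2)\log\log M - O(1),
\]
and since $\dictn[N_1]\cup\failn[N_1] \in \mathbb{D}_{s_0}$, we obtain $\lenpm(\dictn[N_1]) \ge L_{s_0}^*$, from which monotonicity in $N$ extends the bound to all $N > (\log M)/(\Hrate-\epsilon)$. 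The principal obstacle is the design of $\mathbb{D}_s$, which must simultaneously contain the truncated target dictionary, be closed under prefixing by first-passage dictionaries, and remain within the scope of Lemma~\ref{lem:asymptotics}; the mixed ``or'' condition threads all three requirements, with the nontrivial closure step reducing to the type-class inequality $|T(uv)| \ge |T(v)|$.
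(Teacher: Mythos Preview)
Your argument is correct and follows the same overall plan as the paper (Lemma~\ref{lem:condition-VF} for $|T(\xs)|\ge M$, then Lemma~\ref{lem:asymptotics}(i) combined with Lemma~\ref{lem:Massey}), but the execution differs in one useful way. The paper takes $\mathbb{D}_s$ to be the set of dictionaries with $|T(\xs)|\ge M$ for \emph{every} $\xs$; the truncated target $\dictn[N_1]\cup\failn[N_1]$ is then \emph{not} in $\mathbb{D}_{s_0}$, so the paper first prunes $\dict$ to an auxiliary $\dict'\in\mathbb{D}_{s_0}$ (replacing long words by shorter prefixes of large type), bounds $\lenpm(\dict')$ via Lemma~\ref{lem:Massey}, and closes the gap between $\lenpm(\dict')$ and $\lenpm(\dictn[N_1])$ with a separate large-deviations estimate. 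Your relaxed condition ``$|T(\xs)|\ge M$ or $|\xs|\ge N_1$'' places the truncated target directly in $\mathbb{D}_{s_0}$, so no pruning and no final typicality argument are needed; the price is that Lemma~\ref{lem:asymptotics}(i) as stated does not apply to $\dict_t^*$, and you must invoke its proof (which, as you correctly note, only uses the ``or'' condition at level $N_1$). One small caveat: Lemma~\ref{lem:Massey} is proved under the standing assumption of vanishing failure probability, so when defining $\mathbb{D}_s$ you should either add that assumption or observe that finite complete dictionaries satisfy it trivially, which suffices for your purposes.
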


\begin{proof}
By Lemma~\ref{lem:condition-VF}, $|T(\xs)| \ge M$ for every $\xs \in \dict$.
Let $\dict'$ denote the dictionary obtained by ``pruning'' $\tree_\dict$ as
follows: Replace every $\xs \in \dict$ such that $|\xs| > N_1$ (where $N_1$
is given in~(\ref{eq:N1}) with $\epsilon$ an arbitrary positive constant) by
its shortest prefix of length at least $N_1$ whose type class contains at
least $M$ elements. We first prove that the expected length of $\dict'$
satisfies the claimed lower bound.

To this end, consider the collection of dictionary sets $\{ \mathbb{D}_s
\}_{s\in\alp^k}$, where $\mathbb{D}_s$ is the set of dictionaries
$\dict_s$ such that, for an initial state $s$, $|T(\xs)| \ge M$ for every
$\xs \in \dict_s$ and $P(\failn)\stackrel{N\to\infty}{\longrightarrow} 0$ for
every $P \in \Pclassk$. We show that this collection satisfies the conditions
of Lemma~\ref{lem:Massey}. It is easy to see that $L_s^*$ (where we use the
notation introduced in Lemma~\ref{lem:Massey}) is finite for all $s \in
\alp^k$ (in fact, by typicality arguments, $L_s^* \le N_1$) although, in any
case, the claimed lower bound on $\lenpm(\dict')$ would be trivial if $L_s^*$
were infinite. To see that if $\dict_s \in \mathbb{D}_s$ then $\dict_{t,s}
{\cdot} \dict_s \in \mathbb{D}_t$ for every pair of states $s,t$, it suffices
to observe that for a sequence $\xs = uv$ such that $u$ takes the source from
state $t$ to state $s$, if $v' \in T(v)$ (where the type class assumes an
initial state $s$) then $uv' \in T(\xs)$ (where the type class assumes an
initial state $t$). Therefore, if $v \in \dict_s$, we conclude that $|T(\xs)|
\ge M$ for $\xs \in \dict_{t,s} {\cdot} \dict_s$. Since, clearly,
$\dict_{t,s} {\cdot} \dict_s$ also has vanishing failure probability, the
collection $\{ \mathbb{D}_s \}_{s \in \alp^k}$ indeed satisfies the
conditions of Lemma~\ref{lem:Massey}.

Now, applying Part~(i) of Lemma~\ref{lem:asymptotics} to $\dict_t^*$, we
obtain
\[
\Hdic[P,t] (\dict_t^*) \ge \log M + (K/2) \log \log M - O(1)
\]
for every $t \in \alp^k$. It then follows from Lemma~\ref{lem:Massey} that
\[
L_{s_0}^* \ge \frac{\log M + (K/2) \log \log M - O(1)}{\Hrate} \,.
\]
Since $\dict' \in \mathbb{D}_{s_0}$ and $L_{s_0}^*$ is the infimum over
$\mathbb{D}_{s_0}$ of the expected dictionary length, the proof of our claim
on $\dict'$ is complete.

Next, we observe that since for every internal node $\xs$ at depth larger
than $N_1$ of the tree corresponding to $\dict'$ we have $|T(\xs)| < M$, then
\[
\lenpm(\dict')-\lenpm(\dictn[N_1]) \le \sum_{i=0}^\infty (i+1)
P(|T(x^{N_1+i})| < M) \,.
\]
Therefore, our claim on $\lenpm(\dictn)$ follows from our lower bound  on
$\lenpm(\dict')$ by the typicality arguments used in the proof of
Lemma~\ref{lem:asymptotics}, by which the probability that the type class of
a sequence be smaller than $2^{N_1(\Hrate-\epsilon)}$ decays exponentially
fast with the sequence length $N_1+i$.
\end{proof}

\begin{remark}\label{rem:surprise}
Although Theorem~\ref{theo:lower} is stated for a universal \VFR, it is easy
to see that it applies, like Lemma~\ref{lem:condition-VF}, also to perfect
\VFRs\ for ``almost all'' $P \in \Pclassk$. At first sight, this fact may
seem surprising, since perfect \VFRs\ for arbitrary memoryless distributions
$P\in\Pclassk[0]$ with expected dictionary length of the form $(\log M +
O(1))/\Hrate$ are described in~\cite{Roche91} and~\cite{HanHoshi97}, where it
is also shown that these \VFRs\ are optimal up to a constant term. However,
notice that, unlike the setting in Theorem~\ref{theo:lower}, these
\VFRs\ are not required to be perfect at all truncation levels. Thus, in the
memoryless case, the extra cost incurred by requiring perfection at all
truncation levels is at least $(\log\log M)/(2\Hrate)$. Since, as will be
shown in the sequel, the lower bound~(\ref{eq:lowerbound}) is achievable, the
above minimum value of the extra cost is also achievable. The situation in
the case $k>0$, for which~\cite{HanHoshi97} proposes a \VFR\ but does not
provide tight bounds, is discussed later.
\end{remark}

Next, we show that the lower bound~(\ref{eq:lowerbound}) is achievable. To
this end, we use Part~(ii) of Lemma~\ref{lem:asymptotics}, for which we need
a universal \VFR\ such that the size of the type class of each sequence in
its dictionary is at most $CM$ for some constant $C$, except for a negligible
subset of dictionary members. Such a bound on the type class size does not
appear to follow easily from the definition of the optimal \VFR\ $\vfrs$
since, in principle, the construction may require $|T|>CM$ for any constant
$C$ to guarantee $|\dict^*(T)| \ge M$. Therefore, we will use a different
universal \VFR\ to show achievability.

To describe this universal \VFR\ we will make use of an auxiliary dictionary
$\tilde{\dict}$, given by
\begin{align}
\tilde{\dict} = \left\{\,\xs \,\bigbar \, \right.&
|\typecut[T(\xs)]{u}| \ge M \;\, \forall u \in \alp^{k+1},\nonumber\\
&\left.\mbox{and no  } x^{\astast}\pprefix\xs \mbox{ has this property} \,
\medrule\right\}\,,
\label{eq:nochi}
\end{align}
where $\pprefix$ denotes the proper prefix relation. Thus, $\tilde{\dict}$
grows until the first time \emph{each} set $\typecut[T(\xs)]{u}$ (which, by
Lemma~\ref{lem:typesplit} and Remark~\ref{rem:distinct}, if not empty, are
distinct type classes in $\types[n-k-1]$) is large enough.

\begin{remark}
The ``stopping set'' $S$ defined in~\cite[Section IV]{Zhou-Bruck-arch} is the
special case of $\tilde{\dict}$ for the class of Bernoulli sources.
\end{remark}

We first show that Part~(ii) of Lemma~\ref{lem:asymptotics} is applicable to
$\tilde{\dict}$.

\begin{lemma}\label{lem:CM}
For every $P \in \Pclassk$, there exists a constant $C$ such that
$|T(\xs)|<CM$ for every $\xs \in \dictld \setminus\dictld_0$, where
$\dictld_0$ is a subset of $\dictld$  satisfying
\begin{equation} \label{eq:deltalike}
\sum_{\xs \in \dictld_0} |\xs| P(\xs) = O(1) \,.
\end{equation}
In addition, the probability of the failure set of $\dictld$ vanishes
(exponentially fast).
\end{lemma}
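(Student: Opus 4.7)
The plan is to establish the bound $|T(\xs)|<CM$ by translating the minimality in~(\ref{eq:nochi}) into a constraint on a sub-type-class of $T(\xs)$ indexed by a string of length $k+2$, and then to convert this, via Whittle's formula, into a bound on the full type-class size for typical sequences. The exceptional set $\dictld_0$ will absorb atypical sequences, and the failure probability will follow from a similar large-deviations argument.

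I would first fix $\xs=x^n\in\dictld$ and use the minimality in~(\ref{eq:nochi}) applied to the parent $x^{n-1}$: this parent must violate the stopping condition, so some $u^*\in\alp^{k+1}$ must satisfy $|\typecut[T(x^{n-1})]{u^*}|<M$. Since $x^{n-1}\in\typecut[T(x^n)]{x_{n-k}}$ and the latter is in $\types[n-1]$ by Lemma~\ref{lem:typesplit}, we get $T(x^{n-1})=\typecut[T(x^n)]{x_{n-k}}$, which via~(\ref{eq:au}) gives $|\typecut[T(x^n)]{v^*}|<M$ for $v^*=u^*\,x_{n-k}\in\alp^{k+2}$.

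Next, I would define $\dictld_0=\dictld\cap\bigcup_{n\ge 1} S^{(\delta)}_n$ for $\delta>0$ sufficiently small that $S^{(\delta)}_n$ is a large-deviations event (possible since all transition probabilities of $P$ are bounded away from $0$). For $\xs=x^n\in\dictld\setminus\dictld_0$ with $n$ large enough, $\nsa(x^n)\ge\delta n$ for all $s,a$, so the $k{+}2$ decrements associated with $v^*$ are feasible and $\typecut[T(x^n)]{v^*}$ is nonempty. Two applications of Whittle's formula~(\ref{eq:mixture}) then give
\[
\frac{|T(x^n)|}{|\typecut[T(x^n)]{v^*}|} \,=\, \frac{W(T(x^n))}{W(\typecut[T(x^n)]{v^*})}\prod_{j=1}^{k+2}\frac{n_{s_j}(x^n)}{n_{s_j}^{(a_j)}(x^n)}\,,
\]
where $(s_j,a_j)_{j=1}^{k+2}$ are determined by $v^*$ and the final state of $T(x^n)$. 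Each factor in the product will be at most $1/\delta$ by typicality, and the cofactor ratio is $O(1)$ by the bounds used in the proof of Lemma~\ref{lem:asymptotics} (from~\cite[proof of Lemma 3]{WMF94} and~\cite{Boza}). Hence $|T(x^n)|<CM$ for a constant $C$ depending only on $P$ (through $\delta$, $k$ and $\alpha$). Condition~(\ref{eq:deltalike}) is then immediate from the prefix-freeness of $\dictld$:
\[
\sum_{\xs\in\dictld_0}|\xs|\,P(\xs)\,=\sum_{n\ge 1} n\, P(\dictld_0\cap\alp^n)\le\sum_{n\ge 1} n\, P(S^{(\delta)}_n)=O(1)\,.
\]

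For the failure probability, I would observe that any $x^N$ in the failure set of $\dictld$ at truncation level $N$ must satisfy $|\typecut[T(x^N)]{u}|<M$ for some $u\in\alp^{k+1}$ (otherwise some prefix of $x^N$ would already lie in $\dictld$). For typical $x^N$ and $N\ge(1+\eta)(\log M)/\bar{H}$, a further application of Whittle's formula gives $|\typecut[T(x^N)]{u}|\gg M$ uniformly in $u$, so the failure set will be contained in $S^{(\delta)}_N$ for $N$ beyond a threshold, yielding the claimed exponential decay in $N$. The main obstacle is ensuring that $|\typecut[T(x^n)]{v^*}|\ne 0$ in the Whittle ratio (some transition counts might otherwise be decremented below zero), which is exactly what excluding atypical sequences via $\dictld_0$ accomplishes.
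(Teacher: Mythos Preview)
Your proposal is correct and follows essentially the same approach as the paper: the same definition of $\dictld_0$ via $S^{(\delta)}_n$, the same reduction (via minimality of the parent and~(\ref{eq:au})) to a bound $|\typecut[T(\xs)]{v}|<M$ for some $v\in\alp^{k+2}$, and the same use of Whittle's formula to conclude that for typical sequences the full type class is within a constant factor of this sub-type-class. The paper phrases the Whittle step more tersely (``deleting the last symbol can only affect the type class size by a multiplicative constant independent of $n$''), whereas you write out the ratio explicitly; the paper's failure-probability argument is also slightly more careful, separately invoking $P(S^{(\delta)}_N)$ and $P(|\typecut[T(x^N)]{u}|<M)$ rather than claiming the failure set is contained in $S^{(\delta)}_N$ past a threshold, but the substance is the same.
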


\begin{proof}
Recalling the definition~(\ref{eq:deltacounts}), let
\[
\dictld_0 \defined \dictld \cap \Bigl( \;\bigcup_{n \ge 1} S^{(\delta)}_n \;\Bigr )
\]
where $\delta>0$ is small enough for $\{ x^n \in S^{(\delta)}_n \}$ to be, by
our assumptions on $\Pclassk$, a large deviations event, so its probability
vanishes with $n$ exponentially fast. Therefore, (\ref{eq:deltalike}) holds.

Next, observe that, by Whittle's formula for $|T(x^n)|$, if $x^n \not\in
S^{(\delta)}_n$ then deleting its last symbol can only affect the type class
size by a multiplicative constant independent of $n$. As a result, for all
$u$, $|T(x^n)| < \beta |\typecut[T(x^n)]{u}|$ for some constant $\beta$ (that
depends on $\delta$ and $|u|$). Since, by~(\ref{eq:nochi}) and~(\ref{eq:au}),
if $\xs \in \tilde{\dict}$ then $|\typecut[T(\xs)]{u}|<M$ for some $u \in
\alp^{k+2}$, we conclude that for every sequence $\xs \in \dictld \setminus
\dictld_0$ we have $|T(\xs)| < CM$ for some constant $C$, as claimed.

Finally, notice that the failure set of $\dictld$ at truncation level $N$
consists of sequences $x^N$ such that $|\typecut[T(x^N)]{u}|<M$ for some $u
\in \alp^{k+1}$. The event $\{ x^N \in S^{(\delta)}_N \}$ has exponentially
vanishing probability. If $x^N \not\in S^{(\delta)}_N$ then, as discussed,
$|T(x^N)| < \beta |\typecut[T(x^N)]{u}|$. Hence, clearly, $P(T(x^N)) < \beta
P(\typecut[T(x^N)]{u})$. If $N > N_1$ (as defined in~(\ref{eq:N1})),
$P(|\typecut[T(x^N)]{u}| < M)$ vanishes exponentially fast. Therefore, so
does the probability of the failure set.
\end{proof}

We can now use the upper bound~(\ref{eq:HD}) and Lemma~\ref{lem:Massey} to
upper-bound $\lenpm(\tilde{\dict})$ as follows.

\begin{lemma}\label{lem:lendictilde}
For every $P \in \Pclassk$ we have
\begin{equation}\label{eq:upperbound}
\lenpm(\tilde{\dict}) \le \frac{\log M + (K/2) \log \log M + O(1)}{\Hrate} \,.
\end{equation}
\end{lemma}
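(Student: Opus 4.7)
The plan is to combine the entropy bound from Part~(ii) of Lemma~\ref{lem:asymptotics}---which applies to $\dictld$ by virtue of Lemma~\ref{lem:CM}---with Lemma~\ref{lem:Massey}, thereby converting the entropy estimate into a bound on $\lenpm(\dictld)$.

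First, I would extend Lemma~\ref{lem:CM} to the dictionaries $\tilde{\dict}^{(s)}$ obtained by running the construction~(\ref{eq:nochi}) from an arbitrary initial state $s \in \alp^k$. The argument is a verbatim repetition of the one in Lemma~\ref{lem:CM}, since the large-deviations estimates for $S^{(\delta)}_n$ and the Stirling-type bounds derived from Whittle's formula are uniform in the initial state under the assumptions on $\Pclassk$. Part~(ii) of Lemma~\ref{lem:asymptotics} then yields the uniform estimate $\Hdic[P,s](\tilde{\dict}^{(s)}) \le \log M + (K/2)\log\log M + O(1)$ for every $s$, with the $O(1)$ term independent of $s$.

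Next, I would apply Lemma~\ref{lem:Massey} to the collection $\{\mathbb{D}_s\}_{s \in \alp^k}$, where $\mathbb{D}_s$ consists of all dictionaries with vanishing failure probability (relative to initial state $s$) all of whose strings $\xs$ satisfy $|\typecut[T(\xs)]{u}| \ge M$ for every $u \in \alp^{k+1}$. Verification of the closure condition in Lemma~\ref{lem:Massey} parallels the argument in the proof of Theorem~\ref{theo:lower}: for $u \in \dict_{t,s}$, $v \in \dict \in \mathbb{D}_s$, and $w \in \alp^{k+1}$, each $v' \in \typecut[T(v)]{w}$ (computed from state $s$) extends to $uv' \in \typecut[T(uv)]{w}$ (computed from state $t$), yielding $|\typecut[T(uv)]{w}| \ge |\typecut[T(v)]{w}| \ge M$. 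The crucial structural observation is that $\tilde{\dict}^{(s)}$ is the \emph{minimum-length} element of $\mathbb{D}_s$: every $\xs$ in any $\dict \in \mathbb{D}_s$ satisfies the defining condition of~(\ref{eq:nochi}) and therefore admits a (possibly improper) prefix in $\tilde{\dict}^{(s)}$, namely its shortest prefix for which the condition holds. Consequently every internal node of the tree associated with $\tilde{\dict}^{(s)}$ is also an internal node of the tree associated with $\dict$, so by the LANSIT identity~(\ref{eq:internal}), $\lenpm(\tilde{\dict}^{(s)}) \le \lenpm(\dict)$, and hence $L_s^* = \lenpm(\tilde{\dict}^{(s)})$, attained by $\dict_s^* = \tilde{\dict}^{(s)}$.

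The conclusion then follows immediately from the upper-bound half of Lemma~\ref{lem:Massey}:
\[
\Hrate\,\lenpm(\dictld) \;=\; \Hrate L_{s_0}^* \;\le\; \max_{t \in \alp^k} \Hdic[P,t](\tilde{\dict}^{(t)}) + K_2 \;\le\; \log M + (K/2)\log\log M + O(1),
\]
which is~(\ref{eq:upperbound}). The main obstacle---resolved by the minimality observation above---is precisely that Lemma~\ref{lem:Massey} controls only the infimum $L_s^*$ over the collection $\mathbb{D}_s$, not the expected length of any specific dictionary; without identifying $\tilde{\dict}^{(s)}$ as the minimizer in a suitable collection satisfying the closure condition, the conversion from the entropy bound to a length bound on $\dictld$ itself would not go through.
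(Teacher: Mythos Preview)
Your proposal is correct and follows essentially the same route as the paper: define $\mathbb{D}_s$ as the dictionaries (with initial state $s$) whose strings satisfy $|\typecut[T(\xs)]{u}|\ge M$ for all $u\in\alp^{k+1}$, observe that $\tilde{\dict}^{(s)}$ is the minimum-length member of $\mathbb{D}_s$, invoke Lemma~\ref{lem:Massey}, and bound each $\Hdic[P,t](\tilde{\dict}^{(t)})$ via Lemma~\ref{lem:CM} and Part~(ii) of Lemma~\ref{lem:asymptotics}. Your explicit justification of the minimality of $\tilde{\dict}^{(s)}$ via the prefix argument and~(\ref{eq:internal}), and your remark that Lemma~\ref{lem:CM} must be applied uniformly over initial states, simply spell out steps the paper leaves implicit.
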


\begin{proof}
Consider the collection $\{ \mathbb{D}_s \}_{s \in \alp^k}$ where
$\mathbb{D}_s$ is the set of dictionaries $\dict_s$ such that, for an initial
state $s$, if $x^\ast \in \dict_s$ then $|\typecut[T(\xs)]{u}| \ge M$ for all
$u \in \alp^{k+1}$. Clearly, the same arguments as in the proof of
Theorem~\ref{theo:lower} prove that the collection $\{ \mathbb{D}_s \}_{s \in
\alp^k}$ satisfies the conditions of Lemma~\ref{lem:Massey}.
By~(\ref{eq:nochi}), the dictionary with shortest expected length over
$\mathbb{D}_s$ is precisely $\tilde{\dict}$ with initial state $s$, which we
denote $\tilde{\dict}_s$. Therefore, by Lemma~\ref{lem:Massey},
\[
\Hrate \lenpm(\tilde{\dict}_s) \le \max_{t \in \alp^k}
\Hdic[P,t] (\tilde{\dict}_t)  + K_2
\]
for some constant $K_2$. Now, by Lemma~\ref{lem:CM} and Part~(ii) of
Lemma~\ref{lem:asymptotics}, we have
\[
\Hdic[P,t] (\tilde{\dict}_t) \le \log M + (K/2) \log \log M + O(1)
\]
independently of the initial state, which completes the proof.
\end{proof}

Thus, the expected length of $\tilde{\dict}$ coincides (up to an additive
constant) with the lower bound of Theorem~\ref{theo:lower} on the expected
length of a universal \VFR. Recall that Lemma~\ref{lem:condition-VF} requires
that, for the dictionary $\dict$ of a universal \VFR, $M$ divide $|\dict
(T(\xs))|$ for all $\xs \in \dict$. While $\dictld$ may not satisfy this
property, the following result makes it a suitable ``building block'' in the
construction of a universal \VFR.

\begin{lemma}\label{lem:availability}
For every $\xs \in \tilde{\dict}$ we have $|\tilde{\dict} (T(\xs))| \ge M$.
\end{lemma}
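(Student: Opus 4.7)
Set $T = T(\xs)$, $n = |\xs|$, and consider the length-$(n-1)$ prefix $x_1^{n-1}$ with type $T' = T(x_1^{n-1}) = \typecut[T]{x_{n-k}}$. The plan is to inject $T'$ into $\dictld(T)$ via $\phi(y^{n-1}) = y^{n-1} x_n$ and separately show $|T'| \ge M$, giving $|\dictld(T)| \ge M$. First, I claim $T'$ fails the defining condition of $\dictld$: since $\xs \in \dictld$, no proper prefix of $\xs$ lies in $\dictld$, which forces $T(x_1^m)$ to fail the condition for every $m < n$ (otherwise the smallest such $m$ would yield a prefix of $\xs$ already in $\dictld$); in particular $T'$ fails. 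Second, $|T'| \ge M$: by (\ref{eq:au}), $|T'| = \sum_{v \in \alp^k} |\typecut[T]{v x_{n-k}}|$, and each summand is $\ge M$ by the defining condition on $T$ applied with $u = v x_{n-k} \in \alp^{k+1}$.

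By Fact~\ref{fact:same_sn} applied to $T'$, every $y^{n-1} \in T'$ ends in the common final $k$-state $x_{n-k}^{n-1}$, so $\phi(y^{n-1})$ appends the same transition $(x_{n-k}^{n-1}, x_n)$ that takes $x_1^{n-1}$ to $\xs$, and hence $\phi(y^{n-1}) \in T$; the map is manifestly injective. The crux is then to show $\phi(y^{n-1}) \in \dictld$ for every $y^{n-1} \in T'$: $T(\phi(y^{n-1})) = T$ satisfies the condition, and $y^{n-1} \notin \dictld$ because $T'$ fails it; what remains is to rule out any strictly shorter proper prefix $y^m$ (with $m < n - 1$) of $y^{n-1}$ lying in $\dictld$.

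This last step is the main obstacle, and I would handle it via a \emph{monotonicity} sub-claim: once $T(z^l)$ satisfies ``$|\typecut[T(z^l)]{u}| \ge M$ for all $u \in \alp^{k+1}$'', so does $T(z^{l+1})$ for every one-symbol extension. Granting the sub-claim, a hypothetical prefix $y^m \in \dictld$ would have $T(y^m)$ satisfying the condition, and iterating would force $T(y^{n-1}) = T'$ to satisfy as well, contradicting the first paragraph. The sub-claim is immediate for Bernoulli sources ($k = 0$) via Pascal's identity on the binomials that express $|\typecut[T]{a}|$; for general finite-memory sources it reduces, using Whittle's formula for subclass sizes, to monotonicity of the relevant multinomial-like factors under incrementing a single transition count, though this requires careful bookkeeping of how the final state of $T$ and the transitions trimmed by the $\typecut$ operation shift when one symbol is appended. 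Combining everything yields $|\dictld(T(\xs))| \ge |\phi(T')| = |T'| \ge M$.
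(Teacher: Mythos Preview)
Your monotonicity sub-claim is false for $k\ge 1$, so the argument has a genuine gap that cannot be repaired by ``careful bookkeeping.'' A concrete counterexample: take $k=1$, $\alp=\{0,1\}$, initial state $0$, $M=2$. Let $T_-$ be the type with transition counts $(n_0^{(0)},n_0^{(1)},n_1^{(0)},n_1^{(1)})=(100,10,10,1)$ and final state $0$. All four sets $\typecut[T_-]{u}$, $u\in\alp^2$, are large; in particular $\typecut[T_-]{11}$ (remove one $1{\to}1$ and one $1{\to}0$) has counts $(100,10,9,0)$ and size $\binom{109}{9}\ge 2$, so $T_-$ satisfies the defining condition. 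Now append $z_{l+1}=1$: the resulting type $T_+$ has counts $(100,11,10,1)$ and final state $1$, and $\typecutnot[T_+]{11}$ would force the last three symbols to be $111$, i.e., \emph{two} $1{\to}1$ transitions, while $n_1^{(1)}(T_+)=1$; hence $|\typecut[T_+]{11}|=0$. So the condition is lost after a one-symbol extension. This also breaks your broader strategy of injecting all of $T'$ into $\dictld(T)$: choose $z^l\in\typecutnot[T_-]{11}$, so the unique $11$ sits at positions $l{-}2,l{-}1$; every proper prefix $z^m$ then has $n_1^{(1)}(z^m)\le 1$ with final state forcing $\typecut[T(z^m)]{11}=\emptyset$, hence $z^l\in\dictld$. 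Its extension $z^{l+1}=z^l 1$ lies in $T_+$, so whenever $T'=T_+$ arises as $T(x^{n-1})$ for some $x^n\in\dictld$, the image $\phi(z^{l+1})$ has the proper prefix $z^l\in\dictld$ and therefore is \emph{not} in $\dictld$.

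The paper does not try to push all of $T'$ into $\dictld(T)$. Instead it searches (incrementally, using the auxiliary Lemma~\ref{lem:small-ancestor}) for a string $w$ with three properties: (P1) $|\typecut[T']{w}|\ge M$; (P2) every suffix $t$ of $w$ admits some $u\in\alp^{k+1}$ with $|\typecut[T']{ut}|<M$; (P3) $|\typecut[T']{aw}|<M$ for all $a\in\alp$. Property~(P3), together with the elementary suffix inequality $|\typecut[T]{w'}|\le|\typecut[T]{w''}|$ whenever $w''$ is a suffix of $w'$, forces every prefix of length at most $n-|w|-1$ of any $y^n\in\typecutnot[T]{wx_{n-k}}$ to fail the condition; (P2) handles the remaining intermediate prefixes. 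Thus the whole block $\typecutnot[T]{wx_{n-k}}$, of size $|\typecut[T']{w}|\ge M$, lies in $\dictld(T)$. Your $k=0$ argument is exactly the special case $w=\lambda$, which works there because the Pascal/multinomial identities do give monotonicity for memoryless types; for $k\ge 1$ one genuinely needs $|w|>0$ in general, and that is precisely what the paper's construction supplies.
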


\begin{proof}
Let $x^n \in \tilde{\dict}$, $T \defined T(x^n)$, and $T' \defined
T(x^{n-1})$. For a generic string $w \in \alp^\ast$, consider the following
properties:
\begin{itemize}
\itemsep=-0.2mm
\item [(P1)] $|\typecut[T']{w}| \ge M$;
\item [(P2)] For every suffix $t$ of $w$, there exists $u \in \alp^{k+1}$
    such that $|\typecut[T']{ut}|<M$;
\item [(P3)] $|\typecut[T']{aw}| < M$ for all $a \in \alp$.
\end{itemize}
We claim the existence of a string $w$ satisfying~(P1)--(P3). We will exhibit
such a string,  by constructing a sequence $v^{(0)}, v^{(1)}, \cdots v^{(i)},
\cdots$ of strings of strictly increasing length, each satisfying~(P1)--(P2),
and with the property that, given $v^{(i)}$, $i\ge 0$, there exists a string
$ z $ such that either $w =  z  v^{(i)}$ satisfies~(P1)--(P3) (and our claim
is proven) or, for some $c\in\alp$, $v^{(i+1)}=c z  v^{(i)}$ satisfies
(P1)--(P2), and we can extend the sequence by one element. Such a
construction cannot continue indefinitely without finding the desired string
$w$, since, as the length of $v^{(i)}$ increases, eventually we would have
$\typecut[T']{v^{(i)}} = \emptyset$, so $v^{(i)}$ would not satisfy (P1).

To construct the sequence $\{ v^{(i)} \}$, we first establish that
$v^{(0)}=\lambda$ satisfies (P1)--(P2). Since $T' = \typecut[T]{x_{n-k}}$ and
$x^n \in \dictld$, by~(\ref{eq:nochi}), $|T'|\ge M$, so (P1) holds for
$w=\lambda$. But since $x^{n-1} \notin \dictld$, again by~(\ref{eq:nochi}),
(P2) also holds for $\lambda$. Next, to prove the existence of the mentioned
string $ z  $ given $v^{(i)}$, $i\ge 0$, we need the following lemma, which
is proved in Appendix~\ref{app:lema14}.

\begin{lemma} \label{lem:small-ancestor}
Let $T_1 \in \types[n+1]$, $b,c \in \alp$, $T_2 = \typecut[T_1]{b}$, and $T_3
= \typecut[T_1]{c}$. Then, there exists $u \in \alp^{k+1}$ such that $|T_2|
\ge |\typecut[T_3]{u}|$.
\end{lemma}

Assume that we are given $v^{(i)}$, $i \ge 0$, satisfying (P1)--(P2). By
(P2), there exists $z' \in \alp^{k+1}$ such that $|\typecut[T']{z'
v^{(i)}}|<M$. \emph{A fortiori}, every suffix of $z' v^{(i)}$ also
satisfies~(P2). Let $z''$ denote the shortest suffix of $z'$ such that
$|\typecut[T']{z'' v^{(i)}}| < M$. Since $v^{(i)}$ satisfies~(P1), we have
$|z''|> 0$, so let $z''=b z  $, $b \in \alp$. Now, if for every $c \in \alp$
we have $|\typecut[T']{c z  v^{(i)}}| <  M$, then~(P1)--(P3) hold for $w= z
v^{(i)}$, as claimed. Otherwise, let $c$ be such that $|\typecut[T']{c z
v^{(i)}}| \ge  M$. By Lemma~\ref{lem:small-ancestor} (with $T_1 =
\typecut[T']{ z   v^{(i)}}$), there exists $r\in\alp^{k+1}$ such that
$|\typecut[T']{rc z  v^{(i)}}| \le |\typecut[T']{b z  v^{(i)}}| < M$.
Hence,~(P1)--(P2) hold for $v^{(i+1)}=c z  v^{(i)}$. We have thus shown the
existence of a string $w$ satisfying~(P1)--(P3).

Next, let $y^n {\in} \typecutnot[T]{vx_{n-k}}$, and denote $|w|{=}\ell$.
By~(\ref{eq:au}), $y^{n-\ell-1} {\in} \typecut[T']{v}$. Since $w$
satisfies~(P3), no prefix of $y^{n-\ell-1}$ satisfies the membership
condition of~(\ref{eq:nochi}). Consider now $y^{n-j-1}$, $0 \le j < \ell$.
Since $y^{n-1} {\in} \typecutnot[T']{w}$, it is easy to see that $y^{n-j-1}
{\in} \typecut[T']{t}$, where $t$ is the (proper) suffix of $w$ of length
$j$. Thus, by~(P2), $y^{n-j-1}$ does not satisfy the membership condition
either. It follows that the membership condition is not satisfied for any
proper prefix of $y^n$. But since $\typecutnot[T]{wx_{n-k}} \subseteq T$, the
condition is satisfied for $y^n$ and, hence, $y^n \in \dictld$. The proof is
complete by noticing that $|\typecutnot[T]{wx_{n-k}}| {=} |\typecut[T']{w}|$
and invoking~(P1) for $w$.
\end{proof}

Next, we describe the construction of the dictionary $\dictlds$ of a
universal
\VFR. In this construction, the value of the initial state implicit in the
class type definition in $\dictld$ will \emph{not} be fixed at the same $s_0$
throughout, and the value of the threshold for the class type size used
in~(\ref{eq:nochi}) may differ from $M$. Therefore, it will be convenient to
explicitly denote $\tilde{\dict}$ by $\tilde{\dict}_s(\ell)$, where $s$
denotes the initial state and $\ell$ is the threshold, while maintaining the
shorthand notation $\tilde{\dict} \defined \tilde{\dict}_{s_0}(M)$.

\begin{figure}
\begin{figprocedurenoIO}
\small
\item\label{step:dictld0}
Set $i=0$, $\dictlds = \emptyset$, and
let $\dictldi[0] = \dictld$. For $\xs\in\dictld$, define $\chunki[0](\xs) =
\dictld(\type[\xs])$.
\item\label{step:dictldi} Set $\dictldi[i+1]=\emptyset$.
 For each $\chunk = \chunki(x^\ast)$, $x^{\ast}\in\dictldi$, do:

\begin{enumerate}
\item\label{step:todict} Let $m = |\chunk|
\bmod M$, and let $U$ be a set of $m$ sequences from $\chunk$.\ \
Add $\chunk\setminus U$ to $\dictlds$.
\item \label{step:append} If $m>0$, let $s_f$ be the common final state
of all sequences in $U$.\ \ Add $U{\cdot}\dictld_{s_f}\left(\lceil
 M/m\rceil\right)$ to $\dictldi[i+1]$.
\end{enumerate}
\item\label{step:chunk i+1} If $\dictldi[i+1]=\emptyset$, \textbf{stop}.
Otherwise, for each $\xs \in \dictldi[i+1]$, let $x^{\ell}$ be its prefix in
$\dictldi$, and define

$
\quad\chunki[i+1](\xs) =
\{\,y^\ast\in\type[\xs]\cap\dictldi[i+1] \;|\; y^{\ell}\in\chunki(x^{\ell})\,\}\,.
$

Increment $i$, and go to Step~\ref{step:dictldi}.
\end{figprocedurenoIO}
\vspace{-0.55em}
\caption{\label{fig:dictlds}Description of the universal \VFR\ $\dictlds$.}

\vspace{-0.8em}
\end{figure}
The dictionary $\dictlds$ is iteratively described, as shown in
Figure~\ref{fig:dictlds}. At the beginning of the $i$th iteration, $\dictlds$
contains a prefix set of sequences which have been added to the set in
previous iterations. In addition, there is a prefix set $\dictldi$ of
sequences that are still pending processing (i.e., either inclusion in
$\dictlds$, or extension), such that $\dictlds \cup \dictldi$ is a
\complete\ dictionary. Initially, $\dictlds$ is empty and $\dictldi[0] =
\dictld$. Sequences in $\dictldi$ are collected into groups $\chunki(\xs)$,
where the latter consists of all the pending sequences of the same type as
$\xs$, and whose prefixes in the previous iteration were in the same group
$\chunki[i-1](\xs)$ (thus, recursively, the prefixes were also of the same
type in all prior iterations).

The dictionary $\dictlds$ is built up, in Step~\ref{step:todict}, of sets of
sizes divisible by $M$, consisting of sequences of the same type. Thus, $M$
divides $|\tilde{\dict}^\ast(T)|$ for all $n$ and all type classes
$T\in\types$, so that Lemma~\ref{lem:condition-VF} guarantees the existence
of a universal \VFR\ based on $\dictlds$. The remaining $m$ sequences are
recursively extended by ``hanging,'' in Step~\ref{step:append}, dictionaries
$\dictld_{s_f} \left(\lceil M/m\rceil\right)$. Thus, by
Lemma~\ref{lem:availability}, the new set $\Delta_{i+1}$ contains $m$ copies
of type classes of sizes at least $M/m$. Unless, at some step $i$, $m=0$ for
all groups (i.e., $\dictldi[i+1]$ is empty), the procedure continues
indefinitely and, as a result, $\tilde{\dict}^\ast$ contains more infinite
paths than $\tilde{\dict}$. The choice of threshold $\lceil M/m\rceil$ in
Step~\ref{step:append} guarantees the following property for the groups
$\chunki$ in Step~\ref{step:todict}.

\begin{lemma}\label{lem:atleastM}
For all $i\ge 0$ and all $\xs\in\dictldi$, we have $|\chunki(\xs)|\ge M$.
\end{lemma}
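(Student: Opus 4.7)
The plan is to argue by induction on $i$. For the base case $i=0$, the claim reduces to $|\chunki[0](\xs)|=|\tilde{\dict}(T(\xs))|\ge M$, which is exactly Lemma~\ref{lem:availability}.

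For the inductive step, fix $i\ge 0$ and assume $|\chunki(y^\ast)|\ge M$ for every $y^\ast\in\dictldi$. Consider an arbitrary $\xs\in\dictldi[i+1]$. By Step~\ref{step:append} in the construction (which is only executed when $m>0$), $\xs$ is of the form $\xs = uv$, where $u$ belongs to the set $U$ extracted from some group $\chunk = \chunki(x^\ast)$ with $m=|\chunk|\bmod M > 0$ and common final state $s_f$, and where $v\in\dictld_{s_f}(\lceil M/m\rceil)$. The definition of $\chunki[i+1]$ then reads
\[
\chunki[i+1](\xs) = \{\,y^\ast\in T(\xs)\cap\dictldi[i+1]\;|\;y^{|u|}\in\chunk\,\}.
\]

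The key step is to decompose this group explicitly. By a straightforward invariant of the construction, all sequences in $\chunk$ (and hence in $U$) belong to the same type class, so in particular they all share the final state $s_f$. For $y^\ast\in\chunki[i+1](\xs)$, write $y^\ast = u'v'$ with $u'\in U$ and $v'\in \dictld_{s_f}(\lceil M/m\rceil)$. Because the transition counts $\nsa$ are additive under concatenation at a matched state, $T(u'v')$ is determined by $T(u')$ and by the type of $v'$ viewed with initial state $s_f$. Since $T(u')$ is the common type of $U$, the condition $T(u'v')=T(\xs)$ is equivalent to $T(v')=T(v)$, and conversely any such pair produces a member of $\chunki[i+1](\xs)$. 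Hence
\[
\chunki[i+1](\xs) = U\cdot\bigl(\dictld_{s_f}(\lceil M/m\rceil)\cap T(v)\bigr).
\]

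Lemma~\ref{lem:availability} generalizes, with the identical proof, to $\dictld_{s}(\ell)$ for any initial state $s$ and threshold $\ell$, since neither $s_0$ nor $M$ is used in any essential way in~(\ref{eq:nochi}) or in the argument. Applying this generalization with $s=s_f$, $\ell=\lceil M/m\rceil$, and sequence $v$ yields $|\dictld_{s_f}(\lceil M/m\rceil)\cap T(v)|\ge \lceil M/m\rceil$. Combined with $|U|=m$, we conclude
\[
|\chunki[i+1](\xs)| = m\cdot\bigl|\dictld_{s_f}(\lceil M/m\rceil)\cap T(v)\bigr|
\ge m\lceil M/m\rceil \ge M,
\]
completing the induction. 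The only nontrivial obstacle is the type-decomposition argument establishing the displayed identity for $\chunki[i+1](\xs)$; everything else is either the inductive hypothesis, the generalized Lemma~\ref{lem:availability}, or the elementary inequality $m\lceil M/m\rceil\ge M$.
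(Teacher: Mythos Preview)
Your proof is correct and follows essentially the same approach as the paper: induction on $i$, with Lemma~\ref{lem:availability} handling the base case, and the inductive step obtained by observing that the $m$ sequences in $U$, each concatenated with the $\ge\lceil M/m\rceil$ sequences of a fixed type in $\dictld_{s_f}(\lceil M/m\rceil)$ (again via Lemma~\ref{lem:availability}, generalized to arbitrary initial state and threshold), all land in the same group $\chunki[i+1](\xs)$. The only difference is cosmetic: you establish the exact identity $\chunki[i+1](\xs)=U\cdot\bigl(\dictld_{s_f}(\lceil M/m\rceil)\cap T(v)\bigr)$, whereas the paper argues only the inclusion needed for the lower bound.
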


\begin{proof}
Lemma~\ref{lem:availability} guarantees that the claim is true for $i=0$
(Step~\ref{step:dictld0} in Fig.~\ref{fig:dictlds}). By
Steps~\ref{step:dictld0} and~\ref{step:chunk i+1}, sequences in the same
group $\chunki$ are indeed of the same type, and, thus, $s_f$ is well defined
in Step~\ref{step:append}, where $\dictldi[i+1]$ is built-up of subsets of
the form $U{\cdot}\dictld_{s_f}(\lceil M/m\rceil)$. Also, by
Lemma~\ref{lem:availability}, the size of the type class of every sequence in
$\dictld_{s_f}(\lceil M/m\rceil)$ is at least $\lceil M/m\rceil$. Now,
sequences in the same group $\chunki$, ending in state $s_f$, when appended
with sequences from $\dictld_{s_f}(\lceil M/m\rceil)$ that are of the same
type with respect to the \emph{initial} state $s_f$, remain in the same group
$\chunki[i+1]$ (Step~\ref{step:append} and definition of $\chunki[i+1]$ in
Step~\ref{step:chunk i+1}). In particular, this applies to the sequences in
the set $U$, and, therefore, $|\chunki[i+1](\xs)|\ge m\lceil M/m\rceil\ge M$
for all $\xs\in\dictldi[i+1]$.
\end{proof}

We now turn to the computation of $\lenpm(\tilde{\dict}^\ast)$. We first
show, in Lemma~\ref{lem:constant} below, that the iterative process described
in Figure~\ref{fig:dictlds} does not increase the expected length of
$\tilde{\dict}$ by more than a constant.

\begin{lemma}\label{lem:constant}
For every $P \in \Pclassk$ we have
\[
\lenpm(\tilde{\dict}^\ast) - \lenpm(\tilde{\dict}) = O(1) \,.
\]
\end{lemma}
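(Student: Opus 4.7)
The plan is to set up a self-similar recursion for $\lenpm(\dictlds)$ induced by the construction in \figurename~\ref{fig:dictlds}, and then to show that the resulting series is absolutely summable. Write $L^*_s(M)\defined\lenpm(\dictld_s(M))$ and $\bar{L}_s(M)\defined\lenpm(\dictlds_s(M))$; the goal becomes $\bar{L}_{s_0}(M)-L^*_{s_0}(M)=O(1)$. Since Step~\ref{step:append} extends each leftover set of size $m$ inside a type class $T$ by an independent copy of $\dictld_{s_f(T)}(\lceil M/m\rceil)$, the expected length satisfies
\[
\bar{L}_s(M) = L^*_s(M) + \sum_T m_T\,p_T\,\bar{L}_{s_f(T)}\!\left(\lceil M/m_T\rceil\right),
\]
where the sum is over the type classes $T$ in $\dictld_s(M)$, $m_T$ is the leftover count mod $M$, $p_T=P(T)/|T|$, and $s_f(T)$ is the common final state (Fact~\ref{fact:same_sn}). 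Unrolling this identity expresses $\mu_s(M)\defined \bar{L}_s(M)-L^*_s(M)=\sum_{i\ge 1}E_i$, where $E_i$ is the expected length produced by the $i$-th extension layer; the layers organize into a tree of sub-branches, each sub-branch $B$ characterized by a threshold $M_B\le M$, an initial state $s_B$, and an unconditional arrival probability $\mathrm{arr}(B)$.

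The first estimate is a per-sub-branch bound. The key ingredient is Lemma~\ref{lem:atleastM}, which ensures that every type class $T$ of a sequence touched anywhere by the construction satisfies $|T|\ge M$ (the \emph{original} threshold), not merely $|T|\ge M_B$. Hence, letting $P_B$ denote the conditional type distribution inside $B$, $p_T^B\defined P_B(T)/|T|\le P_B(T)/M$. Combining the upper bound on $L^*$ from Lemma~\ref{lem:lendictilde} with the elementary inequality $x\log(1/x)\le 1/(e\ln 2)$ for $x\in(0,1)$, applied to $x=m_T/M_B$, one obtains a constant $c$ (depending only on $P$) for which
\[
\sum_T m_T\,p_T^B\,L^*_{s_f(T)}\!\left(\lceil M_B/m_T\rceil\right) \le c\,\frac{M_B}{M}.
\]
Summing over sub-branches at iteration $i$ and defining $\Sigma_i\defined\sum_B\mathrm{arr}(B)\,M_B$ yields $E_{i+1}\le c\,\Sigma_i/M$.

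Finally, I plan to bound $\Sigma_i$ by the analogous computation: using $\lceil M_B/m_T\rceil\le M_B/m_T+1$, $\sum_T p_T^B\le 1/M$, and $\sum_T m_T p_T^B\le M_B/M$, one arrives at $\Sigma_{i+1}\le 2\Sigma_i/M$. Since $\Sigma_0=M$, induction gives $\Sigma_i\le 2^i/M^{\,i-1}$, so for $M\ge 3$
\[
\mu_s(M) \le \frac{c}{M}\sum_{i\ge 0}\Sigma_i \le c\!\left(1+\frac{2}{M-2}\right) = O(1),
\]
and the remaining cases ($M=2$) are handled directly by observing that the per-branch probability of continuing is at most $(M-1)/M<1$, which already gives a geometric bound on $\sum E_i$. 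The hard part is verifying that $|T|\ge M$ holds \emph{uniformly across all iterations}, rather than just $|T|\ge M_B$: this is exactly what Lemma~\ref{lem:atleastM} provides, and it is essential because when $m_T=1$ the recursion leaves the threshold unchanged, so only the small per-type probability $p_T\le 1/M$ prevents the process from stalling and the geometric decay of $\Sigma_i$ from failing.
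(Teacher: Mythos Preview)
Your central recursion
\[
\bar{L}_s(M) = L^*_s(M) + \sum_T m_T\,p_T\,\bar{L}_{s_f(T)}\!\bigl(\lceil M/m_T\rceil\bigr)
\]
does not hold, and this breaks the argument. In the construction of Figure~\ref{fig:dictlds}, the leftover size at \emph{every} iteration is computed as $|\chunk|\bmod M$ with the \emph{original} $M$ (Step~\ref{step:todict}); only the threshold of the appended $\dictld$ changes. Thus, after appending $\dictld_{s_f}(\lceil M/m\rceil)$ in Step~\ref{step:append}, the next leftover is $(m\cdot|T'|)\bmod M$, not $|T'|\bmod\lceil M/m\rceil$; the continuation is therefore \emph{not} a copy of $\dictlds_{s_f}(\lceil M/m\rceil)$, and there is no self-similarity with a shrinking threshold $M_B$. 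Your subsequent bookkeeping of sub-branches with thresholds $M_B\le M$, and in particular the bound $\Sigma_{i+1}\le 2\Sigma_i/M$, tracks a process different from the one actually built. Relatedly, Lemma~\ref{lem:atleastM} bounds the \emph{group} sizes $|\chunki(\xs)|$, not the type-class sizes inside a sub-branch $\dictld_{s_B}(M_B)$; your use of $p_T^B\le P_B(T)/M$ conflates the two objects.

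The paper's proof avoids all of this by working directly with the iterations: from Lemma~\ref{lem:atleastM}, each group $\chunk$ has $|\chunk|\ge M$ while its leftover $m(\chunk)<M$, so the fraction carried to $\dictldi[i+1]$ is $m(\chunk)/|\chunk|\le m(\chunk)/(M+m(\chunk))<\tfrac12$ uniformly, giving $P(\dictldi)<2^{-i}$. Separately, Lemma~\ref{lem:lendictilde} (used loosely as $\lenpm(\dictld_s(\ell))<\eta\ell$) combined with $m(\chunk)\,\lceil M/m(\chunk)\rceil\le 2M$ shows that the expected length added at iteration $i$ is at most $\eta\,P(\dictldi)$. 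Summing the geometric series gives $\lenpm(\dictlds)\le\lenpm(\dictld)+2\eta$. Your parenthetical remark for $M=2$ (``per-branch probability of continuing is at most $(M-1)/M$'') is in the right spirit, but with that ratio alone the sum is $O(M)$, not $O(1)$; the sharper bound $<\tfrac12$, which is exactly what Lemma~\ref{lem:atleastM} delivers, is what makes the argument work uniformly in $M$.
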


\begin{proof}
Let $\mathbb{G}_i$ denote the partition of $\dictldi$ into groups $\chunki$,
and for each $\chunk \in \mathbb{G}_i$, let $m(\chunk) = |\chunk| \bmod M$.
Let $\mathcal{L}_i$ denote the limiting expected truncated length of the
\complete, prefix set formed by the union of $\dictldi$ and the ``current
state'' of $\tilde{\dict}^\ast$ after the $i$th iteration of the algorithm.
Clearly,
\begin{equation}\label{eq:limit}
\lenpm(\tilde{\dict}^\ast) = \lim_{i \to \infty} \mathcal{L}_i \,.
\end{equation}
Since all the sequences in a group $\chunk$ are equiprobable and only
$m(\chunk)$ of them are extended in Step~\ref{step:append} whereas, by
Lemma~\ref{lem:atleastM}, at least $M$ of them are not, we have
\begin{equation}\label{eq:extlength}
\mathcal{L}_{i+1} \le \mathcal{L}_i + \sum_{\chunk \in \mathbb{G}_{i+1}}
\frac{m(\chunk)}{M+m(\chunk)} P(\chunk) \lenpm
\left ( \dictld_{s_f} \left(\lceil M/m(\chunk)\rceil\right) \right )
\end{equation}
where $s_f$ denotes the common final state of the sequences in $\chunk$. By
Lemma~\ref{lem:lendictilde}, applied rather loosely, for every $s \in \alp^k$
and every positive integer $\ell$, we have $\lenpm(\dictld_s (\ell)) < \eta
\ell$ for some constant $\eta$, independent of $M$. Therefore,
(\ref{eq:extlength}) implies
\begin{equation}\label{eq:gamma}
\mathcal{L}_{i+1} \le \mathcal{L}_i + \eta \sum_{\chunk \in \mathbb{G}_{i+1}}
P(\chunk) = \mathcal{L}_i + \eta P(\dictldi[i+1]) \,.
\end{equation}
Now, since $m(\chunk)<M$, it follows from Lemma~\ref{lem:atleastM} that more
than half of the pending sequences make it to $\dictlds$ in
Step~\ref{step:todict}. Thus, since $\dictldi$ is a prefix set, we have
$P(\dictldi) > 2P(\dictldi[i+1])$ and, hence,
\[
P(\dictldi[i+1]) < 2^{-(i+1)} \,.
\]
It then follows from~(\ref{eq:gamma}) that
\[
\mathcal{L}_i < \mathcal{L}_0 + 2 \eta = \lenpm(\tilde{\dict}) + 2 \eta
\]
which, together with~(\ref{eq:limit}), completes the proof.
\end{proof}

Therefore, just as $\tilde{\dict}$, $\tilde{\dict}^\ast$ attains the lower
bound~(\ref{eq:lowerbound}). The following characterization of
$\lenpm(\dictns)$ then follows straightforwardly from
Theorem~\ref{theo:lower}, Lemma~\ref{lem:lendictilde},
Lemma~\ref{lem:constant}, and Theorem~\ref{theo:optimal}.

\begin{theorem}\label{theo:achievability}
For every $P \in \Pclassk$ and sufficiently large $N$, the truncated
dictionary $\dictns$ of the optimal universal \TVFR\ $\vfrns$ satisfies
\begin{equation} \label{eq:optlength}
\lenpm(\dictns) = \frac{\log M + (K/2) \log \log M + O(1)}{\Hrate} \,.
\end{equation}
\end{theorem}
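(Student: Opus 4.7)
The proof is essentially an assembly of pieces already in place, so the plan is to verify that all the hypotheses of the earlier results apply to $\vfrns$ and its auxiliary competitor $\tilde{\dict}^*$, and then chain the bounds together.

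The plan is to sandwich $\lenpm(\dictns)$ between matching lower and upper bounds. For the lower bound, I would invoke Theorem~\ref{theo:lower} directly on $\vfrns$. Since $\vfrns$ is universal (Theorem~\ref{lem:valid}) and its failure probability $P(\failns)$ decays exponentially fast in $N$ by Theorem~\ref{theo:failure}, the hypotheses of Theorem~\ref{theo:lower} are satisfied once $N > (\log M)/(\Hrate-\epsilon)$, yielding
\[
\lenpm(\dictns) \ge \frac{\log M + (K/2)\log\log M - O(1)}{\Hrate}\,.
\]

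For the upper bound, the key is to exhibit a competing universal \VFR\ whose truncated dictionary length meets the target, and then invoke the pointwise optimality of $\vfrns$ (Theorem~\ref{theo:optimal}) to transfer the bound to $\dictns$. The natural choice is the dictionary $\dictlds$ built in Figure~\ref{fig:dictlds}. By construction, $M$ divides $|\dictlds(T)|$ for every type class $T$, so Lemma~\ref{lem:condition-VF} guarantees the existence of a mapping $\xmap$ turning $\dictlds$ into a universal \VFR; moreover, its failure probability vanishes, since it inherits this property from $\dictld$ (established in Lemma~\ref{lem:CM}) and from the fact that at each iteration in Figure~\ref{fig:dictlds} the suffixes appended are drawn from dictionaries of the same family. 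Theorem~\ref{theo:optimal} then applies at every truncation level, so
\[
\lenpm(\dictns) \le \lenpm(\dictlds_N)\,,
\]
where $\dictlds_N$ denotes the truncation of $\dictlds$ to depth $N$.

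Finally, I would note that $\lenpm(\dictlds_N)$ is nondecreasing in $N$ (by~(\ref{eq:internal}), as a partial sum of nonnegative terms) and converges to $\lenpm(\dictlds)$. Combining this monotonicity with Lemma~\ref{lem:constant} and then Lemma~\ref{lem:lendictilde} gives
\[
\lenpm(\dictlds_N) \le \lenpm(\dictlds) = \lenpm(\tilde{\dict}) + O(1)
\le \frac{\log M + (K/2)\log\log M + O(1)}{\Hrate}\,.
\]
Chaining the two displays yields the matching upper bound on $\lenpm(\dictns)$, and together with the lower bound obtained from Theorem~\ref{theo:lower} this proves~(\ref{eq:optlength}).

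None of the individual steps is hard once all the preceding lemmas are in hand; the only point requiring care is confirming that $\dictlds$ genuinely qualifies as the dictionary of a universal \VFR\ with vanishing failure probability (so that both Theorem~\ref{theo:optimal} and the limiting identity $\lenpm(\dictlds) = \lim_N \lenpm(\dictlds_N)$ are legitimate), and that Theorem~\ref{theo:optimal} is invoked at the same truncation level $N$ that appears in the statement of Theorem~\ref{theo:achievability} rather than in the limit.
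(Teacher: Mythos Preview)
Your proposal is correct and follows essentially the same route as the paper: the paper states that the result ``follows straightforwardly from Theorem~\ref{theo:lower}, Lemma~\ref{lem:lendictilde}, Lemma~\ref{lem:constant}, and Theorem~\ref{theo:optimal},'' and your write-up is precisely the unpacking of that sentence. The only minor refinement is that Theorem~\ref{theo:optimal} itself does not require the competing \VFR\ to have vanishing failure probability, so your verification of that property for $\dictlds$ is needed only for the limiting identity $\lenpm(\dictlds)=\lim_N\lenpm(\dictlds_N)$, not for invoking optimality.
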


\begin{remark}
The term $(K/2) \log \log M$ in~(\ref{eq:optlength}) resembles a typical
``model cost'' term in universal lossless compression but, as mentioned in
Remark~\ref{rem:surprise}, the universality of the \VFR\ does not entail an
extra cost. Instead, in the memoryless case, $(\log\log M)/(2\Hrate)$ is, as
follows from the results in~\cite{Roche91} and~\cite{HanHoshi97} discussed in
Remark~\ref{rem:surprise}, the extra cost of maintaining perfection under
truncation, in either the universal or individual process cases. The case
$k>0$ is also discussed in~\cite{HanHoshi97}, but the bounds provided in that
work are not tight enough to reach a similar conclusion. Nevertheless,
Lemma~\ref{lem:Massey} yields a tighter lower bound on the expected
dictionary length for any perfect \VFR\ without the truncation requirement.
To derive this bound, let $\mathbb{D}_s$ be the set of dictionaries $\dict_s$
corresponding to such \VFRs\ for an initial state $s \in \alp^k$. Clearly,
$\{ \mathbb{D}_s \}_{s \in \alp^k}$ satisfies the conditions of
Lemma~\ref{lem:Massey}. Since the dictionary strings can be clustered into
$M$ groups, each of probability $1/M$ (in the limit), we have $\Hdic[P,s]
(\dict_s) \ge \log M$ for any $\dict_s \in \mathbb{D}_s$ and any $s \in
\alp^k$. Thus, the lemma implies that $(\log M-O(1))/\Hrate$ is a lower bound
on the expected dictionary length for any perfect \VFR\ (without the
truncation requirement). We conclude that, for $k>0$, the extra cost of
maintaining perfection under truncation, in either the universal or
individual process cases, is \emph{at most} $(K/2) \log \log M$. The question
of whether this value is also a lower bound remains open, as it requires to
improve on the upper bound provided in~\cite{HanHoshi97} on the expected
dictionary length of the ``interval algorithm.'' The second order asymptotic
analysis of the performance of universal \VFRs\ on which no truncation
requirements are posed also remains an open question.
\end{remark}

\appendices
\section{Proof of Lemma~\ref{lem:condition-fv}}\label{app:condition-fv}

Before we proceed with the proof, we define the subset $\pDomain_0$ mentioned
in the statement of the lemma. Consider functions
$g(\bldp)=\sum_{T\in\types}g_T P(T)/|T|$, where the $g_T$ are integers, and
$P(T)$ is the total probability of the type class $T$ for a parameter $\bldp
\in \pDomain$. These functions are multivariate polynomials in the components
of $\bldp$. Let
\begin{equation}\label{eq:Gn}
G_n = \left\{\, g(\bldp) \bigbar[0.9em]
|g_T|{\le}|T|\;\,\forall T{\in}\types,\;g_T\ne 0\;\text{for some}\;T{\in}\types\right\}.
\end{equation}
It is known (see, e.g.,~\cite{MerhavWeinb04,MSW-delay-limited08}) that the
type probabilities $P(T)$, as functions of $\bldp$, are linearly independent
over the reals. Thus, no $g\in G_n$ is identically zero. Let $\pDomain_0$
denote the set of all vectors $\bldp$ such that $g(\bldp)= 0$ for some $g \in
G_n$. It is readily verified that $\pDomain_0$ has volume zero in $\pDomain$.

\begin{proof}[Proof of Lemma~\ref{lem:condition-fv}]
Let $\Pbldp$ be a process in $\Pclassk$, where we use $\Pbldp$ instead of $P$
to emphasize the dependence of the probabilities on the parameter vector
$\bldp\in\pDomain$. Consider a pair $(r,M)\in\posints\times\target$ such that
$r\in\setM$ and $\probM \defined \Pbldp\bigl(\Mmap(x^n)=M\bigr)\ne 0$, and
let $\rMinvx$ denote the set of sequences $x^n$ such that $\Mmap(x^n)=M$ and
$\Rmap(x^n)=\rnd$. Since sequences of the same type are equiprobable, we have
\begin{align}
\Pbldp & \Bigl(\Rmap(X^n){=}\rnd \,\bigl|\, \Mmap(X^n){=}M \bigr. \Bigr) =
\probM^{-1}\sum_{x^n\in\rMinvx}P_{\bldp}( x^n)\nonumber\\
&= \probM^{-1}\sum_{\typet\in\types[n]}\, \sum_{x^n\in\rMinvx\cap\typet}P(x^n)
= \probM^{-1}\sum_{T\in\types[n]} \frac{ |\rMinvx \cap T|}{|T|}\Pbldp (T)\,.
\label{eq:linind}
\end{align}
If the condition of the lemma holds, then the right-hand side
of~(\ref{eq:linind}) is independent of $\rnd$ and, thus, $\fvr$ is universal.
Conversely, if $\fvr$ is perfect for some $\bldp\in\pDomain$, it follows
from~(\ref{eq:linind}) that for any $\rnd,\rnd' \in \setM$, we have
\begin{equation} \label{eq:lindif}
\sum_{T\in\types[n]} \frac{ |\rMinvx \cap T| - |\rMinvx' \cap T|}{|T|}P_{\bldp}(T) = 0\,,
\end{equation}
where $\rMinvx'$ is defined as $\rMinvx$, but for $\rnd'$. If the condition
of the lemma does not hold, then, for some choice of $M$,
$\rnd,\rnd'\in\setM$, and $T$, we have $|\rMinvx \cap T| \ne |\rMinvx' \cap
T|$, and the expression on the left-hand side of~(\ref{eq:lindif}), viewed as
a multivariate polynomial in the components of $\bldp$, belongs to $G_n$.
Thus, by the definition of $\pDomain_0$, we must have $\bldp \in \pDomain_0$.
\end{proof}

\section{Proof of Lemma~\ref{lem:chcinv}}\label{app:chcinv}
\begin{proof}
We prove the lemma by induction on $m$. For $m=1$, we have $H=0$ and the
claim is trivial. For $m>1$, define $\bldq' =
[q_1',\,q_2',\,\ldots,\,q_{m-1}']$ with $q_i' = q_{i+1}/(1-q_1)$, $1\le i <
m$. We claim that $\bldq'$ satisfies the version of~(\ref{eq:qcond}) for
distributions on $m-1$ symbols. Indeed, for $1\le i < m$,  we have
\[
c\,q_i' = \frac{c\,q_{i+1}}{1-q_1}
\ge \frac{1}{1-q_1}\Bigl( 1 - \sum_{j=1}^i q_j \Bigr) =
1 - \sum_{j=1}^{i-1} q_j'\,,
\]
where the first inequality follows from the assumptions of the lemma, and the
last equality from the definition of $\bldq'$ and some algebraic
manipulations. Now, denoting by $H'$ the entropy of $\bldq'$, we can write
\begin{align*}
H &= h(q_1) + (1-q_1)H'
 \,\le\, h(q_1) + (1-q_1)c\,h(c^{-1})\\
&= q_1\left(\frac{h(q_1)}{q_1} - \frac{h(c^{-1})}{c^{-1}}\right)
+ c\,h(c^{-1}) \,\le\, c\,h(c^{-1})\,,
\end{align*}
where the first inequality follows from the induction hypothesis, and the
second inequality from the fact that, by~(\ref{eq:qcond}), we have $q_1\ge
c^{-1}$ and the function $h(x)/x$ is monotonically decreasing.
\end{proof}

\section{Proof of Lemma~\ref{lem:aux}}
\label{app:aux}

\begin{proof}[Proof of Lemma~\ref{lem:aux}]
The proof is by induction on $|\mathcal{B}|$. The claim is trivial for
$|\mathcal{B}|=1$. Assume it also holds for all $|\mathcal{B}| < B$, where $B
> 1$. Then, if $|\mathcal{B}| = B$, letting $u_0$ denote an element with
maximum probability in $\mathcal{B}$, we have
\begin{align*}
\sum_{u,v \in \mathcal{B}}& | R(u){-}R(v) | \\
&=\!\!\!\!\sum_{u,v \in \mathcal{B} \setminus \{u_0\}} | R(u){-}R(v) | +
2\!\! \sum_{v \in \mathcal{B} \setminus \{u_0\}} ( R(u_0){-}R(v) ) \\
&\le 2 (B-2) \left(\medrule R(\mathcal{B}){-}R(u_0)\right)
   + 2 \left(\medrule B R(u_0){-}R(\mathcal{B})\right) \\
&= 4 R(u_0) + 2 (B-3) R(\mathcal{B}) \le 2(B-1) R(\mathcal{B})\,,
\end{align*}
where the first inequality follows from the induction hypothesis and the last
one from $R(u_0) \le R(\mathcal{B})$.
\end{proof}

\section{Proof of Lemma~\ref{lem:condition-VF}}
\label{app:condition-vf}

\begin{proof}[Proof of Lemma~\ref{lem:condition-VF}]
Consider the \VFR\ $\vfr = (\dict,\xmap,M)$. For a type class
$\typet\in\types$, and $r\in\fixalp$, define the set $\dict(T)_r = \left\{
x^n\in\dict(T)\,|\,\xmap(x^n) = r \right\}$. Assume that $\vfr$ satisfies the
condition of the lemma, i.e., that $\left|\dict(T)_r\right|$ is independent
of $r$ for all $n$ and all $T\in\types$. We claim that $\vfrn$ is universal
at all truncation levels $N$ (and, thus, $\vfr$ is universal). Indeed,
letting, for conciseness, $\Pbldp(r,N)$ denote the probability
$\Pbldp\big(\xmap(\Xs) = r ,\,\Xs\in\dictn\big)$, where, again, we emphasize
the dependence of a process $\Pbldp\in\Pclassk$ on its parameter $\bldp$, we
have
\begin{align}
\Pbldp&(r,N) =
\sum_{n=1}^N \sum_{\stacked{x^n\in\dictn,}{\xmap(x^n)=r}}\Pbldp(x^n)\nonumber\\
& =
\sum_{n=1}^N \,\sum_{T\in\types}\,\sum_{x^n\in\dict(T)_r}\!\!\!\Pbldp(x^n)
=
\sum_{n=1}^N \,\sum_{T\in\types}\frac{\left|\dict(T)_r\right|}{|T|}{\Pbldp(T)}\,.
\label{eq:Prob(r)}
\end{align}
The expression on the right-hand side of~(\ref{eq:Prob(r)}) is independent of
$r$, establishing our claim. Assume now that $\vfr$ does not satisfy the
condition of the lemma, and let $N$ be the smallest integer for which a type
$T'\in\types[N]$ violates the condition, i.e., for some $r,r'\in\fixalp$, we
have $|\dictn(T')_r|\ne|\dictn(T')_{r'}|$. Consider a process $\Pbldp$  such
that $\vfrn$ is perfect for $\Pbldp$. By~(\ref{eq:Prob(r)}), applied also to
$r'$, we have
\begin{align}
\Pbldp(r,N)&-\Pbldp(r',N) \nonumber\\
&=
\sum_{n=1}^N \,\sum_{T\in\types}
\frac{\left|\dict(T)_r\right|-\left|\dict(T)_{r'}\right|}{|T|}{\Pbldp(T)}
\nonumber\\
&=\sum_{T\in\types[N]}\frac{\left|\dict(T)_r\right|-\left|\dict(T)_{r'}\right|}{|T|}
{\Pbldp(T)}\,,
\label{eq:P(r)-P(r')}
\end{align}
where the last equality follows from our assumption on $N$. Also from the
same assumption, it follows that at least one of the numerators in the
expression on the rightmost side of~(\ref{eq:P(r)-P(r')}) is nonzero, and the
expression, as a multivariate polynomial in the entries of $\bldp$, belongs
to $G_N$ as defined in~(\ref{eq:Gn}). We now reason as in the proof of
Lemma~\ref{lem:condition-fv} to conclude that $\bldp$ must belong to a
(fixed) subset $\pDomain_0'$ of measure zero in $\pDomain$.
\end{proof}

\section{Proof of Lemma~\ref{lem:Massey}}
\label{app:Massey}
\begin{proof}[Proof of Lemma~\ref{lem:Massey}]
We will denote by $L_{P,s}$ the expected dictionary length under $P$, with
initial state $s$. For every pair of states $t,s \in \alp^k$, since
$\dict_{t,s} {\cdot} \dict_s^* \in \mathbb{D}_t$, we have
\[
L_t^* \le L_{P,t} (\dict_{t,s} {\cdot} \dict_s^*) = \mathcal{L}_{t,s} +
L_{P,s} (\dict_s^*) \le L_s^* + \epsilon + \mathcal{L}_{t,s} \,.
\]
Therefore,
\begin{equation} \label{eq:close1}
|L_s^* - L_t^*| \le \max_{u,v \in \alp^k} \mathcal{L}_{u,v} + \epsilon \,.
\end{equation}
Now, let $\dict_{s,N}^*$ denote the truncation of $\dict_s^*$ to depth $N$
(completed with the corresponding failure set), and let $\dict^{\rm{univ}}_N$
denote the dictionary given by $\{ s u_s \,| \, s \in \alp^k ,\, u_s \in
\dict_{s,N}^* \}$. Thus, $\tree_{\dict^{\rm{univ}}_N}$ is obtained by taking
a \balanced\ tree of depth $k$, and ``hanging'' from each leaf $s$ the tree
$\tree_{\dict_{s,N}^*}$, $s \in \alp^k$. Clearly,
\begin{equation} \label{eq:univ}
L_{P,s} (\dict^{\rm{univ}}_N) = k + \sum_{t \in \alp^k}
P^{(k)}(t|s) L_{P,t} (\dict_{t,N}^*) \,,
\end{equation}
where $P^{(k)}(t|s)$ denotes the probability of moving from state $s$ to
state $t$ in $k$ steps. Since, for large enough $N$, $L_{P,t}
(\dict_{t,N}^*)$ is arbitrarily close to $L_t^*$, (\ref{eq:close1})
and~(\ref{eq:univ}) imply that, for every $s,t \in \alp^k$,
\begin{equation} \label{eq:close2}
|L_{P,s}(\dict^{\rm{univ}}_N) - L_t^*| < C_1 \,,
\end{equation}
where the constant $C_1$ depends on $P$ but is independent of $s$ and $t$.
Similarly,
\[
\Hdic[P,s] (\dict^{\rm{univ}}_N)  = H_{P,s} (X^k) + \sum_{t \in \alp^k}
P^{(k)}(t|s) \Hdic[P,t] (\dict_{t,N}^*)
\]
where $H_{P,s}(X^k)$ denotes the entropy of $k$-tuples (starting at state
$s$). Therefore, applying~(\ref{eq:mymassey}) to $\dict^{\rm{univ}}_N$, we
obtain
\begin{align}
\Hrate & \lenp^{\rm{seg}}(\dict^{\rm{univ}}_N)\nonumber\\
& = C_2 +
\sum_{s,t \in \alp^k} P^{\rm{seg}}(s) P^{(k)}(t|s) \Hdic[P,t](\dict_{t,N}^*)
\label{eq:end}
\end{align}
where $C_2$ is a constant that depends only on $P$. Together
with~(\ref{eq:close2}), and taking the limit as $N {\to} \infty$ so that
$\Hdic[P,t](\dict_{t,N}^*) {\to} \Hdic[P,t](\dict_t^*)$ (since the failure
probability vanishes), (\ref{eq:end}) implies the claim of the lemma.
\end{proof}

\section{Proof of Lemma~\ref{lem:small-ancestor}}
\label{app:lema14}
\begin{proof}[Proof of Lemma~\ref{lem:small-ancestor}]
Let $z_1^k$ denote the final state of $T_1$. We prove the lemma with $u = b
z_1^k$. If $\typecut[T_3]{u}$ is empty, the lemma holds trivially. Otherwise,
the sequences in $T_1$ include at least one transition from state $b
z_1^{k-1}$ to state $z_1^k$. It is easy to see that there exists a sequence
in $T_1$ such that one of these transitions is the final one, and therefore
$T_2$ is not empty. Thus, we assume that $T_2,T_3 \in\types$ and
$\typecut[T_3]{u} \in \types[n-k-1]$. The counts defining $\typecut[T_3]{u}$
differ from those defining $T_1$ in that a chain of state transitions
\[
b z_1^{k-1}{\rightarrow}z_1^k{\rightarrow\,}z_2^k c {\,\rightarrow\,}z_3^k c z_1
{\rightarrow\,}\cdots{\rightarrow\,}z_k c z_1^{k-2} {\rightarrow\,}c z_1^{k-1}
{\rightarrow}z_1^k
\]
has been deleted. On the other hand, $T_2$ is obtained by deleting from $T_1$
a transition $b z_1^{k-1} \rightarrow z_1^k$. Therefore, $\typecut[T_3]{u}$
can be obtained from $T_2$ by deleting a \emph{circuit} of state transitions
\begin{equation} \label{eq:deletions}
z_1^k \rightarrow z_2^k c \rightarrow z_3^k c z_1 \rightarrow \cdots
\rightarrow z_k c z_1^{k-2} \rightarrow c z_1^{k-1} \rightarrow z_1^k \,.
\end{equation}
At least one of the states, $t_1^k$, in the circuit~(\ref{eq:deletions}),
must occur in $x^{n-k-1} \in \typecut[T_3]{u}$, for otherwise the transition
graph of $T_2$, obtained by adding the circuit, would have a disconnected
component. Fix $x^{n-k-1} \in \typecut[T_3]{u}$, and let $i$ be such that
$x_{i+1}^{i+k} = t_1^k$ is the last occurrence of this state in the sequence,
$0 \le i \le n-2k-1$, and let $v_1^{k+1}$ denote the string of symbols
determined by the circuit~(\ref{eq:deletions}) starting at a transition from
state $t_1^k$ (and ending at the same state). Clearly, the sequence $x^{i+k}
v_1^{k+1} x_{i+k+1}^{n-k-1}$ is in $T_2$, as the inserted string generates
all the missing transitions prescribed by~(\ref{eq:deletions}), returning to
state $t_1^k$. In addition, it is easy to see that, with this procedure, two
different sequences in $\typecut[T_3]{u}$ generate two different sequences in
$T_2$, which completes the proof.
\end{proof}

\bibliographystyle{IEEEtran}

\end{document}